\documentclass{article}

\usepackage{microtype}
\usepackage{graphicx}
\usepackage{subcaption}
\usepackage{booktabs} 
\usepackage[breakable, theorems, skins]{tcolorbox}

\usepackage{rotating}

\usepackage{hyperref}
\usepackage[accepted]{icml2026}

\usepackage{amsmath}
\usepackage{amssymb}
\usepackage{mathtools}
\usepackage{amsthm}

\usepackage[capitalize,noabbrev]{cleveref}

\theoremstyle{plain}
\newtheorem{theorem}{Theorem}[section]
\newtheorem{proposition}[theorem]{Proposition}
\newtheorem{lemma}[theorem]{Lemma}

\theoremstyle{definition}

\theoremstyle{remark}

\usepackage{multirow}
\newtheorem{hypothesis}[theorem]{Hypothesis}

\usepackage{thm-restate}

\usepackage{tcolorbox}
\newcommand{\argmax}{\operatorname*{arg\,max}}

\usepackage{color,soul}

\usepackage[textsize=tiny]{todonotes}

\usepackage{graphicx}
\usepackage{subcaption}

\DeclareRobustCommand{\mybox}[2][gray!20]{%
\begin{tcolorbox}[  
        breakable,
        left=0pt,
        right=0pt,
        top=0pt,
        bottom=0pt,
        colback=#1,
        colframe=#1,
        width=\dimexpr\linewidth\relax, 
        enlarge left by=0mm,
        boxsep=5pt,
        arc=0pt,outer arc=0pt,
        ]
        #2
\end{tcolorbox}
}

\icmltitlerunning{Multi-Agent Systems are Mixtures of Experts}

\begin{document}

\twocolumn[
  \icmltitle{Multi-Agent Systems are Mixtures of Experts: \\ Who Becomes an Influencer?}

  \icmlsetsymbol{equal}{*}

  \begin{icmlauthorlist}
    \icmlauthor{Franka Bause}{equal,cispa}
    \icmlauthor{Jonas Niederle}{equal,cispa}
    \icmlauthor{Martin Pawelczyk}{vie}
    \icmlauthor{Rebekka Burkholz}{cispa}
  \end{icmlauthorlist}

  \icmlaffiliation{cispa}{CISPA Helmholtz Center for Information Security, Saarbrücken, Germany}
  \icmlaffiliation{vie}{Faculty of Computer Science, University of Vienna, Vienna, Austria}

  \icmlcorrespondingauthor{Franka Bause}{franka.bause@cispa.de}
  \icmlcorrespondingauthor{Jonas Niederle}{jonas.niederle@cispa.com}

  \icmlkeywords{Machine Learning, ICML}

  \vskip 0.3in
]

\printAffiliationsAndNotice{\icmlEqualContribution}

\begin{abstract}

The effectiveness of multi-agent LLM deliberation depends not only on the agents' individual predictions, but also on how they communicate and collaborate. We study this mechanism through the lens of Friedkin-Johnsen (FJ) opinion dynamics, a tractable model for analyzing stubbornness, influence, and opinion change in multi-agent systems that captures empirically observed deliberation patterns. We show that the FJ parameters are input-dependent, turning multi-agent deliberation into a mixture of experts. This perspective implies that multi-agent systems can outperform single agents and static ensembles when routing reflects agent competence. Since competence is latent in practice, we analyze how influence is established through observable proxies: agents' self-assessed confidence, their perceived confidence, and initial alignment with other agents' views.

\end{abstract}

\section{Introduction}

Large language models (LLMs) have led to significant advancements in natural language processing across various tasks. Recently, multi-agent systems (MASs) composed of interacting LLMs have attracted attention for their potential to improve performance, particularly in tasks involving strategic reasoning, negotiation, and generative design~\citep{du2024improving, wu2024autogen, qian2024chatdev,chen2024survey,hong2024metagpt}. In MASs, multiple agents communicate iteratively to deliberate and refine predictions, with the promise that diverse agents can contribute complementary expertise and improve decision-making.

However, the empirical benefits of MASs over single-agent models or static ensembles are mixed~\citep{smit2024mad, zhang2025stop, tran2026single}. A key factor in the success of MASs is understanding how influence is distributed during deliberation. Not all agents are equally persuasive, and the central question is:

\emph{What makes some agents more influential than others in a multi-agent deliberation?} 

We analyze this problem through the Friedkin-Johnsen (FJ) model of opinion dynamics~\citep{friedkin1990social}, which has been successfully applied to model belief propagation in social networks and, more recently, in LLM deliberations~\citep{donttrust}. 
While FJ dynamics have been used to assess security risks in MASs, we focus on the observation that the FJ model’s parameters are input-dependent, which leads to an important insight: multi-agent deliberation can be interpreted as a mixture of experts (MoE) system \citep{jacobs1991adaptive}. In this framework, each agent contributes to the final decision based on its initial belief and its influence weight, which varies depending on the input.

Accordingly, the MAS implicitly implements an adaptive routing mechanism. 
Its performance hinges on the degree to which influence is directed toward the most competent agents for a given input. Since true competence is latent, we focus on observable proxies for competence, such as self-assessed confidence, peer influence, and initial alignment with the other agents. Our findings demonstrate that relative confidence is the primary factor influencing the routing, with the initial alignment and social behavior also playing relevant roles.

\textbf{Contributions.} 
1) We use the Friedkin-Johnsen model to provide a tractable description of belief propagation, stubbornness, and peer influence in LLM-based MASs. 
2) We show that input-dependent FJ parameters induce a mixture-of-experts interpretation of multi-agent deliberation. 
3) We theoretically and empirically analyze how influence emerges from observable proxies for latent competence, identifying confidence and opinion alignment as central drivers of the implicit routing mechanism. 
Together, these results suggest that successful MAS design requires not only diverse agents, but also reliable mechanisms for routing influence toward agents that are locally competent.

\subsection{Related work}

\textbf{LLM-based MASs and deliberation.}
Recent work explores the use of LLMs interacting through structured or unstructured communication to improve reasoning and robustness \citep{wu2024autogen,li2023camel}. Techniques such as self-consistency~\cite{wang2023selfconsistency} and chain-of-thought prompting~\cite{DBLP:conf/nips/Wei0SBIXCLZ22} demonstrate that aggregating diverse opinions and reasoning can outperform single-pass inference. 
However, existing methods largely lack a principled framework on how opinions evolve through communication and how influence shapes the outcome. 

\textbf{Opinion dynamics and social influence models.} 
The Friedkin-Johnsen (FJ) model~\citep{deGroot_1974,friedkin1990social,friedkin2011social} %
has been applied extensively in the social sciences to understand how group consensus is formed, how stubbornness and belief retention interact with peer influence, and how changes in network structure can affect collective decision-making~\citep{parsegov2017novel,tian2018opinion,donttrust}.
Recently, it has been used to model opinions co-evolving with platform learning systems~\citep{wu2026opiniondynamicslearningsystems} and creating simulation environments of human opinion formation \citep{he2026opinion, openreview2025echo}.
Recent work has demonstrated that FJ dynamics can effectively capture the belief formation process in MASs to study the systemic risk induced by stubborn agents~\citep{donttrust}.
Surprisingly, compared to more complex cascades processes
\citep{burkholz2018fc,burkholz2018explicit,burkholz2018framework,burkholz_cascade_2020,gotovos2021scaling} of social influence~\citep{burkholz2016damage,flache2017models},
simple linear dynamics suffice to capture LLM deliberation.

\textbf{Interpretability of MASs.}
Research on the interpretability of MASs has explored how agents’ behavior can be explained, particularly in the context of collaborative problem-solving and collective decision-making~\citep{wooldridge2018multiagent,rosenfeld2019explainability,lee2025towards}. 
A challenge in multi-agent settings is understanding the emergent behaviors that arise from complex inter-agent interactions.
Our work adds to this growing body of research by providing a framework for explaining influence in MASs through the lens of Friedkin-Johnsen belief propagation \citep{donttrust} and a mixture-of-experts interpretation. 

\textbf{Mixture of experts.}
The concept of mixture-of-experts (MoE) models has been widely studied in the machine learning literature, where multiple models (or experts) are assigned responsibility for different regions of the input space. In these models, a gating network determines which expert(s) to trust based on the input, effectively routing the problem to the most relevant expert. MoE models have been used to improve performance on complex tasks by allowing specialized models to handle different subproblems~\citep{jacobs1991adaptive, jordan1994hierarchical,chen2022towards,sun2026robustnessmixturesexpertsfeature}.
In contrast to the standard MoE setting, we focus on collaborative MAS, where routing is not explicitly learned in a supervised setting but implicitly realized through deliberation.

\begin{figure*}
    \centering
    \begin{subfigure}{0.7\textwidth}
        \includegraphics[width=\textwidth]{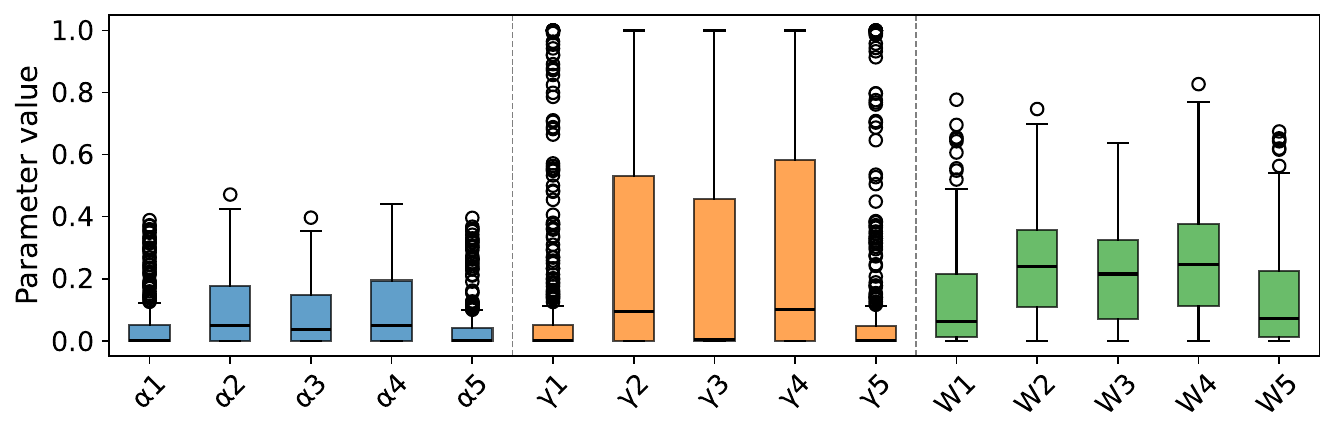}
        \subcaption{Parameter variability. W: weighted in-degrees (avg. over senders).}
        \label{fig:mmlu_pro_gpt_r_var}
    \end{subfigure}
    \begin{subfigure}{0.25\textwidth}
        \includegraphics[width=\textwidth]{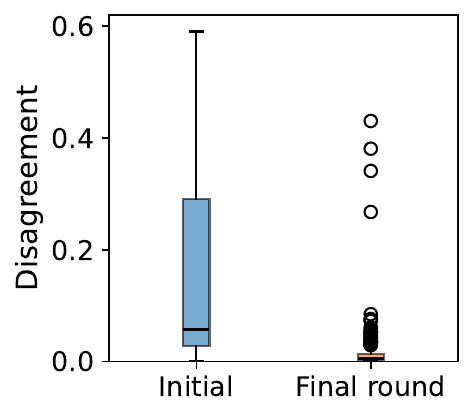}
        \subcaption{Belief disagreement.}
        \label{fig:concensus_box}
    \end{subfigure}
    \caption{FJ induces MoE. (a) FJ parameter variability across MMLU-Pro questions for GPT-5.4 Mini. (b) Despite diverse initial opinions (left), consensus is often reached (right).}
    \label{fig:concensus}
\end{figure*}

\section{Theoretical framework: Agentic deliberation as mixtures of experts}
\label{sec:theory}
Our work builds on the observation that LLM deliberation dynamics can be described by the Friedkin-Johnsen model (see also Table~\ref{tab:model_fit}), which enables us to cast multi-agent LLM systems as mixture of experts.
Accordingly, influence emerges as the result of a routing mechanism, which we set out to understand based on the confidence of agents and communication behavior.

\textbf{Friedkin-Johnsen (FJ) model.}
Let each agent $i \in V$ in an agentic network $G=(V,E)$ with $n=|V|$ agents hold a belief $b_i(t) \in \Delta^d$, where $\Delta^d \subset [0,1]^d$ is the $d$-dimensional simplex representing a probability distribution over potential outcomes, e.g., answers to a multiple-choice question.
Each agent $i$ is characterized by an innate belief $b_i(0)= s_i \in \Delta^d$, which may be interpreted as the agent's individual prediction before deliberation. 

The belief update at time $t+1$ in the \emph{Friedkin-Johnsen} model~\cite{friedkin1990social} for belief dynamics is defined as~\cite{donttrust}:
\begin{equation}
\label{eq:dyn}
\begin{split}
b_i(t+1) = & \underbrace{\gamma_i s_i}_{\text{Prior Belief Pull}} + \underbrace{(1-\gamma_i) \alpha_i b_i(t)}_{\text{Belief Retention}}  
\\ & + \underbrace{(1-\gamma_i) (1-\alpha_i) \sum_{j \in \mathcal{N}i} w_{ij} b_j(t)}_{\text{Peer Influence Pull}},
\end{split}
\end{equation}
where $\gamma_i \in [0,1]$ denotes the attachment to innate beliefs also called stubbornness, $\alpha_i \in [0,1]$ represents the weight given to the previous state, and $W = [w_{ij}]$ is a row-stochastic influence matrix where $\sum_j w_{ij} = 1$ (and $w_{ii}=0$, as self-loops are covered by $\alpha_i$). The term $(1-\gamma_i)(1-\alpha_i)$ represents the agent's susceptibility to external influence.  
We can also formulate this in matrix notation. Let $B(t)\in\mathbb R^{n\times d}$ denote the matrix whose $i$-th row is $b_i(t)^\top$, and let $S\in\mathbb R^{n\times d}$ denote the matrix whose $i$-th row is $s_i^\top$. Define diagonal matrices $\Gamma=\operatorname{diag}(\gamma_1,\dots,\gamma_n)$, $A=\operatorname{diag}(\alpha_1,\dots,\alpha_n)$,
and $H=(I-\Gamma)\bigl(A+(I-A)W\bigr)$. Then the deliberation dynamics can be written compactly as $B(t+1)=\Gamma S + H B(t)$.

As LLM deliberation dynamics are captured by the FJ model, the multi-agent system (MAS) computes a convex combination of the initial beliefs of its agents.
\begin{proposition}[Graph-induced convex combination \citep{friedkin1990social,proskurnikova2017tutorial}]
\label{prop:fj-ensemble}
Assume that $\rho(H)<1$, where $\rho(\cdot)$ denotes the spectral radius. Then the FJ dynamics converge to the unique equilibrium $B^\star=(I-H)^{-1}\Gamma S$. $M := (I-H)^{-1}\Gamma = [m_{ij}]$
is nonnegative and row-stochastic. Consequently, each equilibrium belief is a convex combination of the innate beliefs: $   b_i^\star = \sum_{j=1}^n m_{ij}s_j$. 
If the agentic network belief $b_{\mathrm{out}}^\star = \sum_{i=1}^n \eta_i b_i^\star = \sum_{j=1}^n \pi_j s_j$ with $\pi^\top=\eta^\top M$ is formed by a linear aggregation of all agents' beliefs (typically $\eta_i=1/N$), the agentic network is equivalent to an ensemble with weights $\pi_j$ that depend on the FJ parameters $(\Gamma, A, W)$.
\end{proposition}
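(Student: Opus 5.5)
The argument has three parts: convergence from the spectral-radius assumption, nonnegativity and row-stochasticity of $M$ from the structure of $H$, and the ensemble identity by linearity.

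\textbf{Convergence and uniqueness.} Since $\rho(H)<1$, the matrix $I-H$ is invertible. The Neumann series $(I-H)^{-1}=\sum_{k\ge 0}H^k$ converges, and $H^t\to 0$. Unrolling the recursion $B(t+1)=\Gamma S+HB(t)$ gives
\begin{equation*}
B(t)=H^tB(0)+\sum_{k=0}^{t-1}H^k\Gamma S .
\end{equation*}
Letting $t\to\infty$ yields $B^\star=(I-H)^{-1}\Gamma S$. This limit is independent of $B(0)$. Any fixed point satisfies $(I-H)B=\Gamma S$, so invertibility of $I-H$ gives uniqueness.

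\textbf{Nonnegativity of $M$.} All entries of $\Gamma$, $A$, $I-A$, $I-\Gamma$ and $W$ are nonnegative, so $H\ge 0$ entrywise. Hence every power $H^k$ is nonnegative, and the Neumann series gives $(I-H)^{-1}\ge 0$. Multiplying by the nonnegative diagonal matrix $\Gamma$ keeps this, so $M\ge 0$.

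\textbf{Row-stochasticity of $M$.} Let $\mathbf 1$ be the all-ones vector. Since $W$ is row-stochastic, $(A+(I-A)W)\mathbf 1=A\mathbf 1+(I-A)\mathbf 1=\mathbf 1$. Therefore
\begin{equation*}
H\mathbf 1=(I-\Gamma)\mathbf 1, \qquad (I-H)\mathbf 1=\Gamma\mathbf 1 .
\end{equation*}
Applying $(I-H)^{-1}$ to the second identity gives $M\mathbf 1=(I-H)^{-1}\Gamma\mathbf 1=\mathbf 1$. This step is the crux, and it is where the invertibility from $\rho(H)<1$ is essential. It also rules out the degenerate case $\Gamma=0$: then $H$ would be row-stochastic with $\rho(H)=1$, contradicting the hypothesis.

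\textbf{Ensemble representation.} The identity $b_i^\star=\sum_j m_{ij}s_j$ is row $i$ of $B^\star=MS$. Because $M$ is nonnegative and row-stochastic, this is a convex combination, so $b_i^\star\in\Delta^d$. For the aggregate,
\begin{equation*}
b_{\mathrm{out}}^{\star\top}=\eta^\top B^\star=\eta^\top M S=\pi^\top S .
\end{equation*}
The weights satisfy $\pi\ge 0$, and $\pi^\top\mathbf 1=\eta^\top M\mathbf 1=\eta^\top\mathbf 1$, which equals $1$ when $\eta$ is a probability vector (e.g.\ $\eta_i=1/n$). So $\pi$ is a valid ensemble weighting. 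It depends on the inputs only through $M=M(\Gamma,A,W)$.
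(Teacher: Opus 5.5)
Your proposal is correct and follows the paper's proof. Both unroll $B(t)=H^tB(0)+\sum_{k<t}H^k\Gamma S$, pass to the Neumann series under $\rho(H)<1$, and get $M\ge 0$ from $H,\Gamma\ge 0$. Your identity $(I-H)\mathbf 1=\Gamma\mathbf 1$ is the explicit algebraic form of the paper's conservation-of-mass argument: constant innate beliefs $S=\mathbf 1 s^\top$ form a fixed point, which forces $M\mathbf 1=\mathbf 1$. You also spell out uniqueness and $\pi^\top\mathbf 1=1$, which the paper's sketch leaves implicit; note that $\pi\ge 0$ additionally uses $\eta\ge 0$.
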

A system that follows Friedkin-Johnsen dynamics with static parameters $(\Gamma, A, W)$ would therefore implement an ensemble.
This can be already beneficial compared to a single-agent model, as ensembles can combine the opinion of diverse agents, achieve a variance reduction and are potentially more robust to distribution shift \citep{dietterich2000ensemble,reddy2026when,reddy2026boosting}.
A direct implication of this insight is that agentic networks tend to benefit from high agent diversity.

Yet, our empirical findings suggest that MASs can be even more powerful, as they do not rely on static FJ parameters, but adapt them to the input questions $x$, resulting in FJ parameters $(\Gamma(x), A(x), W(x))$ that induce input-dependent routing of agent influences $\pi_i(x)$.
\begin{hypothesis}\label{hypo:moe}
A multi-agent LLM system (MAS) can be cast as mixture of experts (MoEs)  $b^\star(x) = \sum_{j=1}^n \pi_j(x) s_j(x)$, where the router $\pi_j(x)$ depends on the input $x$. 
\end{hypothesis}
 Fig.~\ref{fig:mmlu_pro_gpt_r_var} provides empirical evidence for this hypothesis.
It implies that the two central quantities of interest that determine the performance of the MAS are a) the diversity of agents and b) the routing.
While a) is a critical design choice, it is often given and only mildly influenced by prompting in our LLM deliberation experiments.
b) Routing is performed automatically by the LLM agents. 
The main purpose of the introduced FJ framework is to facilitate the discovery of its main underlying mechanisms.
Our analysis reveals that routing primarily relies on initial beliefs $s_i(x)$, even though these are not explicitly communicated by the LLM agents.
However, they serve as proxy for the agents' states that also informs their behavior. 
They define their confidence, competence, and to which degree their opinions align initially.

Based on this hypothesis, i.e.\ an MAS acts as an MoE that performs routing based on initial agent beliefs ($\pi = \pi(S)$), for which we provide empirical evidence, our following theoretical discussion aims to derive conditions when MASs outperform single agents and even ensembles. 

\subsection{When do MASs outperform single agents and ensembles?}
MASs have shown promising performance gains over single agents.
This is especially true when the task has compositional structure, agents contribute diverse information or reasoning strategies, and communication allows for error correction \citep{du2024improving,wu2024autogen,qian2024chatdev,hong2024metagpt}.
However, recent evaluations find only modest or inconsistent gains over strong single-agent baselines \citep{smit2024mad,zhang2025stop,cemri2025why,tran2026single}, calling for a deeper understanding of successful MAS design principles.

Our link to MoEs allows us to draw on theoretical insights.
MoEs are a special case of ensembles. 
For ensembles, the routing is constant across inputs. 
While it is well known that agent diversity is critical to their function \citep{dietterich2000ensemble}, theoretical studies of MoEs have focused on router properties \citep{jacobs1991adaptive,jordan1994hierarchical,jiang1999hierarchical} or how clustered structure in the input data can be exploited \citep{chen2022towards,sun2026robustnessmixturesexpertsfeature} in the context of supervised learning problems, where the router has access to predictive features of labels.

In our context of LLM deliberation, no router has been trained explicitly and the MAS does not have access to label information.
Intuitively, an ideal router would assign high influence to competent agents.
Yet, agent competence is input question dependent, latent, and therefore unavailable information.
Agent beliefs, however, can be interrogated.
We therefore ask the question: \emph{When can belief-derived signals identify which agent is locally competent and drive MAS performance?} 

To answer this question, our analysis decomposes performance into three terms: global ensemble diversity, local competence diversity, and routing regret. 
This lets us compare single agents, static ensembles, and MoE-style deliberation in terms of agent diversity, complementary information, and confidence-based routing that rests on sufficiently well-calibrated agents, whose confidence reflects competence. 
In particular, we show that adaptive deliberation improves over fixed ensembling only when the gain from local specialization exceeds the diversity lost by no longer averaging all agents and the regret incurred by imperfect confidence-based routing.

\textbf{Belief-dependent routing, diversity, and performance.}
To formalize this intuition, we assume that each agent $j\in[n]$ outputs an initial belief $s_j(X)\in\Delta^d$ about the right answer to question $X$, which is a random variable over the set of possible questions, and write $S(X)=(s_1(X),\dots,s_n(X))$ for the collection of all beliefs.
A MAS outputs $f(X) = \sum_{j=1}^n \pi_j(S(X))s_j(X)$ and its performance is measured with the Brier loss $\ell(y,p)=\|p-e_y\|_2^2$, where $e_y$ denotes the one-hot vector for belief in the correct answer $y$.
For each agent, define the belief-conditional risk $r_j(S)= \mathbb E \left[\|s_j(X)-e_Y\|_2^2\mid S(X)=S\right]$.
This is the true local competence of agent $j$ given the observable beliefs $S$, which is generally unknown to the system.
For any weight vector $a\in\Delta^n$, define the mixture $\bar s_a =\sum_{j=1}^n a_j s_j$ and the local initial belief diversity, which measures how diverse the agents' beliefs are under weights~$a$: 
\begin{align}\label{eq:diversity}
\hspace{-0.2cm} D_a(S) = \sum_{j=1}^n a_j\|s_j-\bar s_a\|_2^2 = \frac12 \sum_{i,j=1}^n a_i a_j \|s_i-s_j\|_2^2.   
\end{align}

\begin{lemma}[Local ambiguity decomposition]
\label{lem:local-ambiguity}
For any belief-dependent weights $a(S)\in\Delta^n$,     
\begin{align*}
\mathbb E
    \left[
        \left\|
            \sum_{j=1}^n a_j(S)s_j(X)-e_Y
        \right\|_2^2
        \mid S
    \right] \\
    =
    \sum_{j=1}^n a_j(S) r_j(S)
    -
    D_{a(S)}(S).
    \end{align*}
\end{lemma}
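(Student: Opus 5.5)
The argument is the classical ambiguity (bias--variance) decomposition of Krogh and Vedelsby, applied pointwise in $S$ and then conditioned on $S(X)=S$. The key point is that once we condition on $S(X)=S$, the weights $a(S)$ and the beliefs $s_j(X)=s_j$ are deterministic. The only remaining randomness is in the label $Y$, so $a$ can be pulled out of the conditional expectation.

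\emph{Step 1: a deterministic identity.} For fixed vectors $s_1,\dots,s_n$, a fixed $e\in\mathbb R^d$ and $a\in\Delta^n$, write $\bar s_a=\sum_j a_js_j$. Expand $\|s_j-e\|_2^2=\|(s_j-\bar s_a)+(\bar s_a-e)\|_2^2$ and average with weights $a_j$. The cross term is
\[
2\Bigl\langle \sum_j a_j(s_j-\bar s_a),\ \bar s_a-e\Bigr\rangle ,
\]
and it vanishes because $\sum_j a_j s_j=\bar s_a$ and $\sum_j a_j=1$. This gives
\[
\sum_j a_j\|s_j-e\|_2^2=\|\bar s_a-e\|_2^2+D_a(S),
\]
with $D_a$ as in \eqref{eq:diversity}. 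The second form of \eqref{eq:diversity}, $\tfrac12\sum_{i,j}a_ia_j\|s_i-s_j\|_2^2$, follows the same way. Expand $\|s_i-s_j\|^2=\|s_i-\bar s_a\|^2+\|s_j-\bar s_a\|^2-2\langle s_i-\bar s_a,s_j-\bar s_a\rangle$ and sum. The inner-product term vanishes for the same reason, and the two squared terms each contribute $D_a$, giving $2D_a$ before the factor $\tfrac12$.

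\emph{Step 2: conditioning.} Apply Step 1 with $e=e_Y$ and $s_j=s_j(X)$, then take $\mathbb E[\cdot\mid S(X)=S]$. On this event $s_j(X)=s_j$ and $a(S)$ are constants, so two things follow:
\begin{itemize}
\item[(i)] $\mathbb E\bigl[\sum_j a_j(S)\|s_j(X)-e_Y\|^2\mid S\bigr]=\sum_j a_j(S)r_j(S)$, by linearity and the definition of $r_j$;
\item[(ii)] $D_{a(S)}(S)$ is $S$-measurable and passes through the conditional expectation unchanged.
\end{itemize}
Rearranging gives the claimed identity.

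\emph{Main obstacle.} There is no real difficulty here. The only point needing care is the measure-theoretic justification that $a(S)$ and $s_j(X)$ are constant given $S(X)=S$, since $s_j(X)$ is a coordinate of $S(X)$. I would state this explicitly via "pulling out what is known." Integrability is trivial because all quantities are bounded on the simplex.
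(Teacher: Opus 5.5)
Your proposal is correct and takes the same approach as the paper: apply the pointwise ambiguity identity $\sum_j a_j\|s_j-e_y\|_2^2=\|\bar s_a-e_y\|_2^2+D_a(S)$ for fixed $S$ and label $y$, then take the conditional expectation over $Y\mid S$. Your verification of the pairwise form of $D_a$ and the remark on pulling out $S$-measurable quantities are details the paper leaves implicit.
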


This composition has the following interpretation: 
The first term rewards putting weight on locally competent agents. 
The second term rewards averaging diverse beliefs.

\textbf{Comparing MAS, single agents, and ensembles.}
This composition enables us to compare different routing choices $a(S)\in\Delta^n$ over the set of all questions by taking an average also over $X$ and $S$ additionally to $Y$, yielding $L = \sum_{j=1}^n  \mathbb E [a_j(S) r_j(S)] -  \mathbb E [D_{a(S)}(S)]$.
In the following, we use the notation $a(S)$ for a general router, $\pi(S)$ for a MoE, and $\Psi(S)=\Psi$ for an ensemble, where the routing is independent of the input. 

\begin{theorem}[MAS vs. single agent]\label{th:mas_vs_single}
A multi-agent system with the mixture of experts routing $\pi(S)$ outperforms the best single agent $j^*$ with lowest risk $\min_{j\in[n]}\mathbb E[r_j(S)]$ if
\begin{align}\label{eq:MASvsSingle}
\hspace{-0.2cm}\underbrace{\mathbb E [r_{j^*}(S)- \min_j r_j(S) ]}_{\text{specialization gain}} + \underbrace{\mathbb E\left[D_{\pi(S)}(S)\right]}_{\text{local diversity}}
    >    \underbrace{\mathbb E[\delta_{\pi}(S)]}_{\text{routing regret}},    
\end{align}
where $\delta_{\pi}(S)=\sum_{j=1}^n \pi_j(S) r_j(S) -\min_{j\in[n]} r_j(S)$ defines the routing regret.
\end{theorem}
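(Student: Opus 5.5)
We will prove the claim by writing down the expected Brier loss of both predictors exactly, using Lemma~\ref{lem:local-ambiguity}, and comparing the two resulting expressions. The best single agent $j^*$ corresponds to the constant router $a(S)=e_{j^*}$. For this router the diversity term vanishes: $D_{e_{j^*}}(S)=0$, since all weight sits on one agent and so $\bar s_a=s_{j^*}$. The lemma, integrated over $S$ by the tower property, therefore gives
\[
L_{\mathrm{single}}=\mathbb E[r_{j^*}(S)].
\]
Here $j^*$ is chosen to minimize $\mathbb E[r_j(S)]$, which is exactly the expected Brier loss of agent $j$.

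For the MAS with router $\pi(S)$, the same lemma integrated over $S$ yields
\[
L_{\mathrm{MAS}}=\mathbb E\Bigl[\sum_j\pi_j(S)r_j(S)\Bigr]-\mathbb E[D_{\pi(S)}(S)].
\]
We rewrite the first term by adding and subtracting the oracle risk $\min_j r_j(S)$. By the definition of the routing regret this gives
\[
\sum_j\pi_j r_j=\min_j r_j+\delta_\pi,
\]
and hence
\[
L_{\mathrm{MAS}}=\mathbb E[\min_j r_j(S)]+\mathbb E[\delta_\pi(S)]-\mathbb E[D_{\pi(S)}(S)].
\]

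The condition $L_{\mathrm{MAS}}<L_{\mathrm{single}}$ is then equivalent to
\[
\mathbb E[\delta_\pi(S)]-\mathbb E[D_\pi(S)]<\mathbb E[r_{j^*}(S)]-\mathbb E[\min_j r_j(S)].
\]
Linearity of expectation lets us merge the right-hand side into $\mathbb E[r_{j^*}(S)-\min_j r_j(S)]$. Rearranging gives exactly \eqref{eq:MASvsSingle}, which proves the statement, and in fact shows that the condition is also necessary.

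There is no substantial obstacle here; the only care needed is justifying that the expectations are finite and that linearity and the tower property apply. Finiteness holds because every quantity involved is bounded. All beliefs lie in the simplex, so each Brier loss is at most $2$. This bounds every $r_j$, every $\delta_\pi$, and every $D_a$. The function $\min_j r_j(S)$ is measurable as a finite minimum of measurable functions, so all expectations are well defined. We will also remark that the specialization gain is nonnegative, since $r_{j^*}\ge \min_j r_j$ pointwise, and that local diversity is nonnegative as well. The theorem thus says that the MAS beats the best single agent whenever routing regret is smaller than these two nonnegative gains combined.
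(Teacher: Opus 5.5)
Your proof is correct and follows the paper's route: apply the local ambiguity decomposition to the constant router $e_{j^*}$ (where $D_{e_{j^*}}=0$) and to $\pi(S)$, average over $S$, add and subtract $\min_j r_j(S)$ via the definition of $\delta_\pi$, and require $L_{\mathrm{MAS}}<L_{\mathrm{single}}$; the paper states this in one line. Your observation that the condition is in fact equivalent to the MAS winning is also correct.
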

The theorem is verified experimentally by Fig.~\ref{fig:app:confusion_matrices_gpt}.
The proof follows directly from taking the average over the respective losses and demanding the loss of the MoE to be lower. 
The first term is the specialization advantage over the best global agent. 
It is large when no single agent is best everywhere. 
The second term is the local diversity benefit retained by the MoE. 
The right-hand side is the cost of imperfect routing.~Therefore,
\mybox{
\centering
MoE beats the best single agent if specialization gain plus local diversity exceeds routing regret.}

This implies that a multi-agent system does not improve simply by combining many agents. Its agents have to be locally competent, complementary, and the routing mechanism can identify them from observable belief signals.

However, note that this analysis does not exclude the existence of another more powerful and potentially more complex single expert model, which could e.g. be obtained by distillation of the MAS.
It could also implement a similar mechanism as routing by choosing good  answers from a sample like self-consistency (SC) \citep{zhang2025stop}.
At the same time, such single models could potentially also be valuable experts in a MAS if they are combined with other complementary experts.

Note that ensembles are special cases of MoEs and therefore follow a similar logic. 
However, they cannot achieve a local specialization gain through routing and the above condition simplifies to  $\mathbb E[D_a(S)] > \sum_{j=1}^n a_j\mathbb E[r_j(S)]-\min_{j\in[n]}\mathbb E[r_j(S)]$, highlighting the relevance of expert diversity.
The difference in this analysis points also out how MoEs can potentially outperform ensembles, i.e. by exploiting local diversity.
\begin{theorem}[MAS vs. ensemble]
A multi-agent system with the mixture of experts routing $\pi(S)$ outperforms an ensemble with constant routing weights $\psi$ if
\begin{align}\label{eq:MASvsEnsemble}
    \underbrace{\mathbb E \left[\sum_{j=1}^n\bigl(\psi_j-\pi_j(S)\bigr) r_j(S)
    \right]}_{\text{local routing gain}} > \underbrace{\mathbb E \left[ D_\psi(S)-D_{\pi(S)}(S) \right]}_{\text{diversity loss}}.
\end{align}
\end{theorem}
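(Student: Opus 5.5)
The plan is to apply Lemma~\ref{lem:local-ambiguity} twice, once with the MoE weights $a(S)=\pi(S)$ and once with the constant weights $a(S)=\psi$, and then average over $S$ by the tower property. For any belief-dependent router $a(S)\in\Delta^n$, the lemma gives the conditional risk given $S$. Taking expectations over $S$ then yields the global Brier risk
\begin{align*}
L(a)=\mathbb E\Bigl[\sum_{j=1}^n a_j(S)r_j(S)\Bigr]-\mathbb E\bigl[D_{a(S)}(S)\bigr].
\end{align*}
Specializing to the two routers gives
\begin{align*}
L(\pi)=\mathbb E\Bigl[\sum_{j=1}^n \pi_j(S)r_j(S)\Bigr]-\mathbb E\bigl[D_{\pi(S)}(S)\bigr],\qquad
L(\psi)=\mathbb E\Bigl[\sum_{j=1}^n \psi_j r_j(S)\Bigr]-\mathbb E\bigl[D_{\psi}(S)\bigr].
\end{align*}
A constant vector $\psi\in\Delta^n$ is a legitimate choice of belief-dependent weights, so the lemma applies to the ensemble without modification.

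Next I compute the difference $L(\psi)-L(\pi)$. By linearity of expectation,
\begin{align*}
L(\psi)-L(\pi)=\mathbb E\Bigl[\sum_{j=1}^n\bigl(\psi_j-\pi_j(S)\bigr)r_j(S)\Bigr]-\mathbb E\bigl[D_\psi(S)-D_{\pi(S)}(S)\bigr].
\end{align*}
The MAS outperforms the ensemble exactly when $L(\pi)<L(\psi)$, i.e.\ when this difference is positive. Rearranging the positivity condition gives \eqref{eq:MASvsEnsemble}. So the condition in the theorem is in fact equivalent to strict improvement, and in particular sufficient for it.

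The only step that needs care is justifying that expectations may be split and recombined. This is easy here. All beliefs lie in the simplex, so $\|s_j-e_Y\|_2^2\le 2$; hence $0\le r_j\le 2$, and both $D_a$ and the weights are bounded. Therefore every expectation is finite and linearity applies. Measurability of $\pi(S)$ as a function of $S$ is implicit in the setup. I expect no real obstacle: the substantive content sits entirely in Lemma~\ref{lem:local-ambiguity}, and the theorem is a direct rearrangement of it, exactly parallel to the proof of Theorem~\ref{th:mas_vs_single}.
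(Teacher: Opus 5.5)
Your proposal is correct and follows the paper's route: you average the local ambiguity decomposition (Lemma~\ref{lem:local-ambiguity}) over $S$ for both $\pi(S)$ and the constant $\psi$, then require $L(\pi)<L(\psi)$, which rearranges to the stated condition. Your added remarks are also right: the condition is in fact equivalent to strict improvement, and the boundedness check ($r_j\le 2$, $D_a\le 1$) justifies splitting the expectations.
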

The left-hand side is the gain (resulting from lower average risk) from assigning more weight to locally competent agents. 
The right-hand side is the diversity that could be lost by moving away from the fixed ensemble. Thus, 
\mybox{
\centering
MoE wins if local competence routing gain exceeds lost ensemble diversity.}

\begin{figure}
  \centering
  \vspace{-5pt}
  \includegraphics[width=\columnwidth]{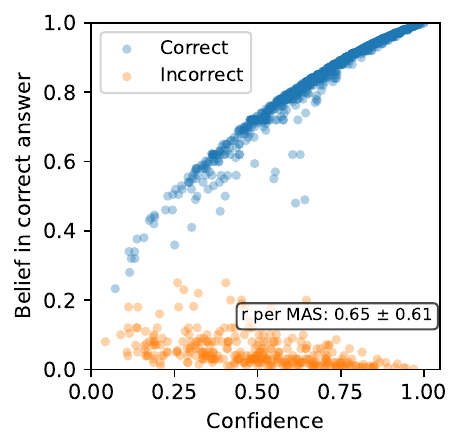}
  \caption{Association of competence and confidence for ChatGPT-5.4 Mini on MMLU-Pro.
  }
  \label{fig:comp_conf}
\end{figure}
\textbf{Hard routing towards the most competent agent.}
As a special case, we can also consider hard routing, where a single agent $j'(S)$ receives all weight, leading to the condition $\mathbb E
    \left[
        \sum_{j=1}^n
        \psi_j r_j(S)
        -
        r_{ j'(S)}(S)
    \right]
    >
    \mathbb E[D_\psi(S)]$,
because the local diversity $D_{\pi(S)}(S)=0$.
Thus, hard routing must compensate for the diversity benefit with a sufficiently better local expert.
Our analysis of real LLM deliberation reveals a preference of MAS to select a single or only a few experts, suggesting a strong tendency towards local specialization rather than exploiting diversity through ensembling. %
Interestingly, even though the agents do not appear to be very diverse with respect to their average performance, their intrinsic randomness leads to local diversity in the their initial beliefs, which is exploited by the routing. 
In the following, we discuss scenarios how such routing is enabled by proxies for local expert competence in the initial beliefs.

\subsection{Confidence-based routing as a competence proxy}
\begin{figure*}
    \centering
    \includegraphics[width=\linewidth]{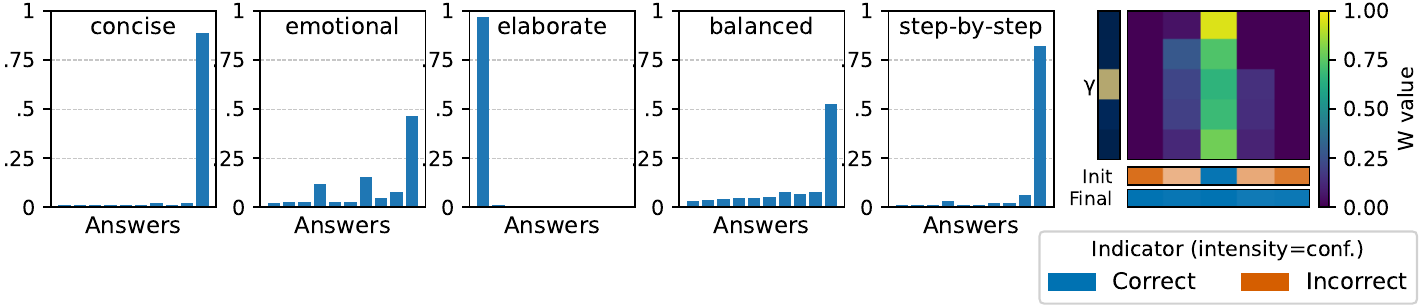}
    \caption{Example with communication styles on \emph{GPT-5.4 Mini}: Initial beliefs of agents (left). FJ weight matrix, $\gamma$, and color coded belief in correct answer (right). The most confident agent convinces the majority to change their answer.}
    \label{fig:sample_290}
\end{figure*}

The local risks $r_j(S)$ are not observed during deliberation. 
The router must therefore use proxies derived from the beliefs.
A natural self-assessment of competence is their confidence $C_j(S) \in [0,1]$ in a specific answer, which we propose to measure based on the entropy $\mathcal H(s_j)$ of their initial belief:
\begin{align}\label{eq:confidence}
       C_j(S) = 1- \frac{1}{\log d} \mathcal H(s_j) =1+\frac{1}{\log d} \sum_{c=1}^d s_{j,c}\log s_{j,c}.
\end{align}
A confidence-based router may take the form $\pi_j(S)
    = \exp(\beta C_j(S))/ \sum_{\ell=1}^n \exp(\beta C_\ell(S))$, 
where $\beta\ge 0$ controls how strongly the router favors confident agents.
Fig.~\ref{fig:sample_290} presents an example, where the MAS simply picks the opinion of the initially most confident agent.
As we show, not only absolute confidence of agents matters but also confidence relative to the other agents.
Thus, all predictions are taken into account. 

Such routers are beneficial only when confidence is sufficiently calibrated with competence (see also Fig.~\ref{fig:comp_conf}). 
One way to formalize calibration is to assume that there exists a decreasing function $\phi$ such that $r_j(S)\approx \phi(C_j(S))$.
Equivalently, agents with higher confidence should have lower conditional risk.
Let $j_C(S)\in\arg\max_j C_j(S)$ be the most confident agent, and let $j^\star(S)\in\arg\min_j r_j(S)$ be the most competent agent. 
For hard confidence routing, the routing regret is $\delta_C(S) = r_{j_C(S)}(S)-r_{j^\star(S)}(S)$.
The confidence-routed MoE beats a fixed ensemble with weights $\psi$ if
 $   \mathbb E[G_\psi(S)]
    >
    \mathbb E[\delta_C(S)]
    +
    \mathbb E[D_\psi(S)],
$
where $G_\psi(S) = G_w(S)  = \sum_{j=1}^n \psi_j r_j(S) - \min_{j\in[n]} r_j(S)$ measures how much a fixed ensemble wastes probability mass on agents that are not locally optimal. 
This condition captures both the strength and the limitation of confidence routing. 
If confidence reliably identifies competence, then $\delta_C(S)$ is small and the MoE can exploit local specialization. 
If agents are overconfident when wrong, then $\delta_C(S)$ can be large and the MoE may underperform a fixed ensemble.
To get an intuitive understanding of this, let us discuss two examples.

\textbf{Case: Mutually exclusive and optimally calibrated agents.}
Ideal agents of a MAS have a specialized local expertise that is complementary to other agents and signal reliably when they are competent.
Optimal agents $j$ with limited performance budget $B_j$ would specialize and thus spend their budget on disjoint sets of questions.

Let the input space decompose into disjoint regions,
$
    \mathcal X=\bigcup_{j=1}^n \mathcal X_j$,
$ \mathcal X_i\cap \mathcal X_j=\emptyset$
for $i\neq j$ with
$
    \Pr(X\in \mathcal X_j)=\rho_j$,
    $\rho_j>0$, 
    $\sum_{j=1}^n \rho_j=1$. 
On region $\mathcal X_j$, agent $j$ is competent and confident, while all other agents are uninformative. 
Let $p=1-\varepsilon$ with $\varepsilon\in(0,1-1/d)$, and let
$
    u=1/d
$
denote the probability assigned to the true class by the uniform distribution.
For every input $x\in\mathcal X_j$, assume that agent $j$ assigns probability $p$ to the true class, while every other agent assigns probability $u$ to the true class:
$s_{j,Y}(x)=p$, $s_{i,Y}(x)=u$ for all $i\neq j$. 
Thus, each agent is competent on exactly one region and uninformative elsewhere.
Assuming a symmetric belief that assigns probability $q$ to the correct label and distributes the remaining mass uniformly over the incorrect labels, achieves a Brier loss of $\mathcal{B}_{d}(q)  = (1-q)^2 + (d-1)\left(\frac{1-q}{d-1}\right)^2
= \frac{d}{d-1}(1-q)^2$.
A MoE router that selects the competent agent on every region thus achieves Brier loss
$\mathcal{L}_{\text{MoE}} = \mathcal{B}_{d}(p)$.
In comparison, a fixed ensemble with weights $\psi\in\Delta_n$ assigns on region $\mathcal{X}_j$ probability $q_j(\psi) = u+(p-u)\psi_j $ to the correct label. Its expected Brier loss is therefore
$ L_{\mathrm{ens}}(\psi) = \sum_{j=1}^n \rho_j \mathcal{B}_{d} \left(u+(p-u)\psi_j\right)$. 
\begin{proposition}[MoE advantage under mutually exclusive competence]
\label{prop:n-agent-moe-advantage}
Assume $n\ge 2$, $\rho_j>0$ for all $j$, and $p>u$. Then $ L_{\mathrm{MoE}}    < \inf_{\psi\in\Delta^N} L_{\mathrm{ens}}(\psi)$.
\end{proposition}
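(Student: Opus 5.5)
The plan is to compare the two losses region by region, using only that $\mathcal{B}_d(q)=\frac{d}{d-1}(1-q)^2$ is strictly decreasing in $q$ on $[0,1]$. On region $\mathcal X_j$ the ensemble puts probability $q_j(\psi)=u+(p-u)\psi_j$ on the correct class. Since $\psi_j\in[0,1]$ and $p>u$, this gives $u\le q_j(\psi)\le p$, with equality $q_j(\psi)=p$ if and only if $\psi_j=1$. Monotonicity then yields $\mathcal{B}_d(q_j(\psi))\ge \mathcal{B}_d(p)$ for every $j$. Averaging with weights $\rho_j$ gives $L_{\mathrm{ens}}(\psi)\ge \mathcal{B}_d(p)=L_{\mathrm{MoE}}$ for every $\psi\in\Delta^n$.

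Strictness for a fixed $\psi$ comes from $n\ge 2$. The weights $\psi_j$ sum to one over at least two indices, so some index $k$ has $\psi_k<1$; otherwise $\sum_j\psi_j\ge 2$. For that $k$ we get $q_k(\psi)<p$, hence $\mathcal{B}_d(q_k(\psi))>\mathcal{B}_d(p)$. Because $\rho_k>0$, this region contributes a strictly positive excess, so $L_{\mathrm{ens}}(\psi)>L_{\mathrm{MoE}}$.

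The main obstacle is that the statement uses an infimum, and pointwise strict inequality does not by itself give a strict inequality for the infimum. I would handle this by compactness. The map $\psi\mapsto L_{\mathrm{ens}}(\psi)$ is a finite sum of continuous functions (compositions of affine maps with a quadratic), and $\Delta^n$ is compact. So the infimum is attained at some $\psi^\star$, and the previous paragraph applied to $\psi^\star$ gives $\inf_\psi L_{\mathrm{ens}}(\psi)=L_{\mathrm{ens}}(\psi^\star)>L_{\mathrm{MoE}}$.

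One could also obtain an explicit gap, which is optional. Write $\mathcal{B}_d(q_j)-\mathcal{B}_d(p)=\frac{d}{d-1}\bigl[(1-q_j)^2-(1-p)^2\bigr]\ge \frac{d}{d-1}(p-u)^2(1-\psi_j)^2$. Minimizing $\sum_j\rho_j(1-\psi_j)^2$ over the simplex is a convex problem with value bounded below by a positive constant depending on $\min_j\rho_j$ and $n$. This route is not needed for the claim.

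Finally, the notation $\Delta^N$ in the statement should be read as $\Delta^n$.
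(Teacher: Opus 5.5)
Your proof is correct and follows the paper's argument: strict monotonicity of $\mathcal{B}_d$, the bound $q_j(\psi)\le p$ with equality iff $\psi_j=1$, and $n\ge 2$ forcing some $\psi_k<1$ on a region with $\rho_k>0$. Your compactness step (or, equivalently, your explicit lower bound on the gap) is a genuine improvement, because the paper stops at the pointwise statement $L_{\mathrm{ens}}(\psi)>L_{\mathrm{MoE}}$ for every $\psi$, and that alone does not give strict inequality for the infimum.
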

For balanced regions $\rho_j=1/n$ for all $j$, symmetry and strict convexity imply that the optimal fixed ensemble is the uniform ensemble $ \psi_j^\star=1/n$.
Writing $q_n=u+\frac{p-u}{n}$, the loss gap then becomes $L_{\mathrm{ens}}^\star-L_{\mathrm{MoE}} = \frac{d}{d-1} \left[
(1-q_n)^2-(1-p)^2 \right] > 0$.
For fixed $p$ and $d$, this gap increases with the number of agents. 
The best single agent $j^{\mathrm{glob}} \in \arg\max_{j\in[n]}\rho_j$ is the agent whose competence region has largest probability.
Its loss is $L_{\mathrm{single}}^\star = 
\rho_{j^{\mathrm{glob}}}\mathcal{B}_{d}(p) + \bigl(1-\rho_{j^{\mathrm{glob}}}\bigr)\mathcal{B}_{d}(u)$.
Consequently, $L_{\mathrm{single}}^\star-L_{\mathrm{MoE}} = \bigl(1-\rho_{j^{\mathrm{glob}}}\bigr)
\left[ \mathcal{B}_{d}(u)-\mathcal{B}_{d}(p) \right]$. This means, the MoE therefore strictly improves over the best single agent whenever at least two competence regions have positive~probability.

\textbf{Effect of routing errors.}
The preceding result assumes that confidence perfectly identifies the competent agent. 
Suppose instead that the router selects the correct agent with probability $1-\delta$ and selects an uninformative agent with probability $\delta$. 
Then the routed system assigns true-class probability $p$ with probability $1-\delta$ and $u$ with probability $\delta$, 
Its expected Brier loss is $L_{\mathrm{router}}(\delta)
= (1-\delta)\mathcal{B}_d(p) + \delta\mathcal{B}_d(u)$.
In the balanced setting, this router improves over the optimal fixed ensemble if and only if $\delta
<
\frac{
\mathcal{B}_d(q_n)-\mathcal{B}_d(p)
}{
\mathcal{B}_d(u)-\mathcal{B}_d(p)
}$.
Thus, confidence-based routing is beneficial only when the confidence signal is sufficiently reliable. 
This formalizes the central limitation of task-dependent FJ deliberation: adaptive influence can realize MoE-like gains, but miscalibrated confidence can route trust to the wrong agent and erase the advantage.

As experts are mutually exclusive in this ideal case, a MAS can generally route towards the single most competent expert, while the other agents focus their performance budget on other questions.
Note that the maximum belief predictions of ensembles are still correct (albeit with lower certainty).
As a consequence, expert optimization could make complex MAS interaction unnecessary. 
The real advantage of MoE routing is realized with imperfect agents, which are more common in practice.

\textbf{Case: Routing with imperfect agents.}
More interesting cases that highlight the advantages of MoE routing  assume imperfect agents.
Let us consider a scenario that is comparable with the observed case in Fig.~\ref{fig:sample_290}, where the majority of agents makes a wrong prediction, confusing an ensemble, but the most confident agents convinces the rest during deliberation. 
As before, we observe disjoint regions of competence but with less well calibrated agents that correlate their belief in wrong labels, which makes the problem harder to solve.

Assume $d\geq 3$. On region $\mathcal{X}_k$, let $y_k$ be the correct label and let $z_k\neq y_k$ be a common incorrect label. The competent agent $k$ predicts $s_{k,y_k}=p$, $s_{k,c}=b:=\frac{1-p}{d-1}$ for $c\neq y_k$.
Every non-competent agent predicts $s_{j,y_k}=u$, $s_{j,z_k}=c$, $s_{j,\ell}= r:=\frac{1-u-c}{d-2}$ for $j\neq k$ and $\ell\notin\{y_k,z_k\}$. Assume $0<u<\frac{1}{d}<p<1$, $0<c<1-u$, $c>\max\{u,r\}$.
The last condition formalizes that the non-competent agents concentrate their probability on the same wrong answer. We further assume that the competent agent is the most confident:
$C_k(S(x))>C_j(S(x))$ for every $j\neq k$ and $x\in\mathcal{X}_k$. 
Hard confidence routing selects the competent agent and achieves
$L_{\mathrm{MoE}} = \mathcal{B}_d(p)$ as before.

For a fixed ensemble $\psi\in\Delta_n$, its prediction on region $\mathcal{X}_k$ has probabilities $q_{y,k} =
u+(p-u)\psi_k$, $q_{z,k}= c+(b-c)\psi_k$ and $q_{o,k} = r+(b-r)\psi_k$ on each of the remaining $d-2$ labels. For balanced regions, its expected Brier loss is $L_{\mathrm{ens}}(\psi) = \frac{1}{n} \sum_{k=1}^n
\left[(1-q_{y,k})^2 + q_{z,k}^2 + (d-2)q_{o,k}^2\right]$. 

\begin{proposition}[Confidence routing beats fixed ensembling]
\label{prop:n-agent-d-label-confidence-routing}
Assume $n\geq 2$, $d\geq 3$, and the conditions above. Then the optimal fixed ensemble is the uniform ensemble with $\psi_k^\star=\frac{1}{n}$ and confidence routing strictly improves over it:
$L_{\mathrm{MoE}} <L_{\mathrm{ens}}(\psi^\star)$. 
More precisely, with $t=\frac{n-1}{n}$, the loss gap is
\begin{align*}
L_{\mathrm{ens}}(\psi^\star) - L_{\mathrm{MoE}}
={}& 2t(p-u)(1-p+b) \\
&+ t^2 \big[\, (p-u)^2 + (c-b)^2 \\
&\hphantom{{}+ t^2 \big[\,} + (d-2)(r-b)^2 \,\big] > 0.
\end{align*}
\end{proposition}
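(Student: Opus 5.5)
The plan is to reduce the ensemble loss to a single scalar quadratic, identify the optimal fixed ensemble by convexity and symmetry, and then expand the gap around the MoE loss.

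\textbf{Reduction to a scalar quadratic.} On region $\mathcal X_k$ the ensemble's Brier loss depends on $\psi$ only through $\psi_k$, and does so through the same function for every $k$. I would write
\begin{align*}
f(s) ={}& \bigl(1-u-(p-u)s\bigr)^2 + \bigl(c+(b-c)s\bigr)^2 \\
&+ (d-2)\bigl(r+(b-r)s\bigr)^2 ,
\end{align*}
so that $L_{\mathrm{ens}}(\psi)=\frac1n\sum_{k=1}^n f(\psi_k)$. The map $f$ is a quadratic in $s$ with leading coefficient
\begin{align*}
\kappa = (p-u)^2+(c-b)^2+(d-2)(r-b)^2 .
\end{align*}
Since $p>1/d>u$, we have $\kappa>0$, so $f$ is strictly convex.

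\textbf{Optimal fixed ensemble.} Apply Jensen's inequality on the simplex:
\begin{align*}
\frac1n\sum_k f(\psi_k)\ \ge\ f\Bigl(\frac1n\sum_k\psi_k\Bigr)=f(1/n),
\end{align*}
with equality if and only if all $\psi_k$ coincide. Hence the uniform ensemble $\psi^\star_k=1/n$ is the unique minimizer and $L_{\mathrm{ens}}(\psi^\star)=f(1/n)$. This avoids any first-order analysis, which would be awkward if the unconstrained minimizer of $f$ lay outside $[0,1]$. On the MoE side, the assumption $C_k>C_j$ on $\mathcal X_k$ guarantees that hard confidence routing selects agent $k$. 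Therefore
\begin{align*}
L_{\mathrm{MoE}}=\mathcal B_d(p)=(1-p)^2+(d-1)b^2 = f(1).
\end{align*}

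\textbf{Computing the gap.} I would set $1-\psi_k=t=\frac{n-1}{n}$ and rewrite the ensemble probabilities as perturbations of the competent agent's belief:
\begin{align*}
q_y &= p-(p-u)t, & q_z &= b+(c-b)t, & q_o &= b+(r-b)t .
\end{align*}
Expanding $f(1/n)-f(1)$ then gives:
\begin{itemize}
\item a quadratic term $t^2\kappa$, which matches the bracket in the statement;
\item a linear term $2t\bigl[(1-p)(p-u)+b(c-b)+(d-2)b(r-b)\bigr]$.
\end{itemize}
The key identity is
\begin{align*}
(c-b)+(d-2)(r-b) = c+(d-2)r-(d-1)b = (1-u)-(1-p) = p-u ,
\end{align*}
which uses $c+(d-2)r=1-u$ and $(d-1)b=1-p$. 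It collapses the linear term to $2t(p-u)(1-p+b)$, as claimed.

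\textbf{Positivity and main obstacle.} Strict positivity is immediate: $t>0$ because $n\ge2$, $p-u>0$, $1-p+b>0$, and $\kappa>0$. The only delicate step is the algebraic collapse of the linear term, which relies on the normalization constraints of the two belief types. The hypothesis $c>\max\{u,r\}$ is not needed for the inequality itself; it only makes the scenario meaningful, in that the ensemble can be misled toward $z_k$.
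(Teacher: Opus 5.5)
Your proof is correct and follows the paper's argument. The paper writes the per-region loss as the strictly convex function $h(\psi_k)=\|\psi_k v_k+(1-\psi_k)w_k-e_{y_k}\|_2^2$, which is your $f$ in vector form, applies Jensen to get the uniform optimum, and expands around the competent agent's belief with $t=\frac{n-1}{n}$ as $2t\langle v_k-e_{y_k},w_k-v_k\rangle+t^2\|w_k-v_k\|_2^2$; your coordinate-wise collapse via $(c-b)+(d-2)(r-b)=p-u$ is exactly the paper's inner-product step $b[(1-u)-(1-p)]$ written out.
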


Under the uniform ensemble, the correct and common incorrect labels receive probabilities $\bar q_y = \frac{p+(n-1)u}{n}$, $\bar q_z = \frac{b+(n-1)c}{n}$. 
Because $c>r$, the common incorrect label is the most likely incorrect label. The uniform ensemble
therefore predicts the wrong answer whenever
$
b+(n-1)c
>
p+(n-1)u.
$
By contrast, hard confidence routing predicts correctly on every region because $p>1/d$ implies $p>b$.

\textbf{Strengths and limitations.}
This theoretical perspective clarifies both the strengths and the limitations of FJ-based multi-agent deliberation. 
Its strength is that deliberation can implement a task-adaptive mixture of agents: when the FJ parameters track competence, the system can route influence toward agents that are locally reliable. 
This enables gains from specialization that a fixed ensemble cannot realize.
Its limitation is that routing depends on imperfect competence signals. 
If confidence is miscalibrated, if behavioral confidence is uninformative, or if opinion alignment merely reflects correlated errors, then the induced influence weights may amplify the wrong agents. 
In particular, high agreement among agents can be beneficial when it reflects independent corroboration, but harmful when it reflects shared bias. 
Similarly, high confidence can be beneficial when calibrated, but harmful when associated with overconfident mistakes.

\section{Experiments}

The main goal of our experiments is to demonstrate the utility of the Friedkin-Johnsen modeling approach and mixture of expert interpretation for studying the emergence of influence in MASs. %

Our experimental evaluation considers the MMLU-Pro~\cite{wang2024mmlupro}, BBQ~\cite{DBLP:conf/acl/ParrishCNPPTHB22}
and CommonsenseQA (CSQA)~\cite{DBLP:conf/naacl/TalmorHLB19} dataset.
We adopt the setup and subset of $100$ questions of %
CSQA from~\cite{donttrust} and sample $300$ questions from MMLU-Pro and BBQ each (balanced for the categories). We conduct our experiments with three representative language models, \emph{GPT-5.4 Mini}, \emph{Qwen2.5-14B-Instruct}, and \emph{Qwen2.5-72B-Instruct-GPTQ-Int8}, enabling comparison across different model scales. We run our experiments over $3$ different seeds.

We use the same base system prompt for each agent and extend it to create diversity by prompting agents to assume different \emph{roles}~\cite{DBLP:conf/naacl/KongZCLQSZWD24}, like doctor, mathematician, or careless student, or answering in a specific \emph{communication style}, e.g. concise, balanced, or emotional. %
Prompting agents to act as \emph{experts} in supercategories~\cite{ko2026social} of the MMLU-Pro dataset and giving them access to specific example questions can also introduce agent diversity. The scenario \emph{neutral} refers to no diversifying prompts being added in the system prompt.
For the user prompts, communication between agents, and generating prompts for the next deliberation round, we use the setup of~\cite{donttrust}.
The prompts can be found in Appendix~\ref{app:prompts}. 
We use $5$ agents, communicating in a \emph{complete} graph structure over $5$~rounds.

\textbf{Competence.} We measure agent competence by the magnitude of their belief in the correct answer. This measure is unknown to the system but we can evaluate it based on known labels. To understand on what basis a MAS approximates agent competence, we define several other variables. 

\textbf{Confidence.} The \emph{confidence} of an agent $j$ is defined in Eq.~(\ref{eq:confidence}). The \emph{relative-confidence} of agent $j$ is defined as $R_j(S) = \frac{C_j(S)}{C_{(n-1)}(S)}$
where $C_{(n-1)}(S)$ is the second most confident agent in the system.

\textbf{Influence.} The \emph{influence} of agent $j$, normalized by the maximum agent influence, is defined as
$I_j(S) = \frac{\pi_j}{\max_i \, \pi_i}$, where $\pi_j$ is defined in Prop.~\ref{prop:fj-ensemble}.

\textbf{Peer Influence.} Let $B = A + (I-A)W$ and define $\tilde{B}$ as $B$ 
with zero diagonal entries, i.e., $\tilde{B}_{ij} = B_{ij}\,\mathbf{1}[i \neq j]$. Here, $A$ and $W$ refer to FJ matrices, as defined in Section \ref{sec:theory}.
The \emph{peer influence} of agent $j$ is $P_j = \frac{\pi^{\mathrm{peer}}_j}{\max_k\, \pi^{\mathrm{peer}}_k}$, where 
  $\pi^{\mathrm{peer}}_j = \sum_{i=1}^{n} \tilde{B}_{ij}$ and $\pi^{\mathrm{peer}}_j$ is the $j$-th column sum of $\tilde{B}$.

\textbf{Disagreement.} The \emph{disagreement} of a system $S$ is the mean distance of each agent's initial belief from the average system opinion: $  Di(S) = \frac{1}{n}\sum_{j=1}^{n} \left\|s_j - \frac{1}{n}\sum_{i=1}^n s_i \right\|_2.$

\textbf{Alignment.} The \emph{alignment} of an agent with the MAS is given by the cosine similarity of its initial belief to the mean opinion: $Al_j(S) = \frac{s_j \cdot \frac{1}{N} \sum_{i=1}^N s_i}{\|s_j\|_2 \cdot \|\frac{1}{N} \sum_{i=1}^N s_i \|_2}$

\textbf{Alignment score.} The \emph{alignment score} is a binary variable, describing whether an agents initial answer matches the groups initial answer: 

$As_j(S)=\begin{cases} 1, \text{ if }\argmax(s_j) = \argmax(\frac{1}{N} \sum_{i=1}^N s_i) \\ 0, \text{ else}
\end{cases}$

\textbf{Alignment count.} The \emph{alignment count} of an agent gives the number of other agents, that share the same initial answer: $Ac_j(S) = \sum_{i=1, i\neq j}^N \delta_{\argmax(s_j), \argmax(s_i)},$ where $\delta$ is the Kronecker delta.

\begin{figure}
  \centering
  \includegraphics[width=\columnwidth]{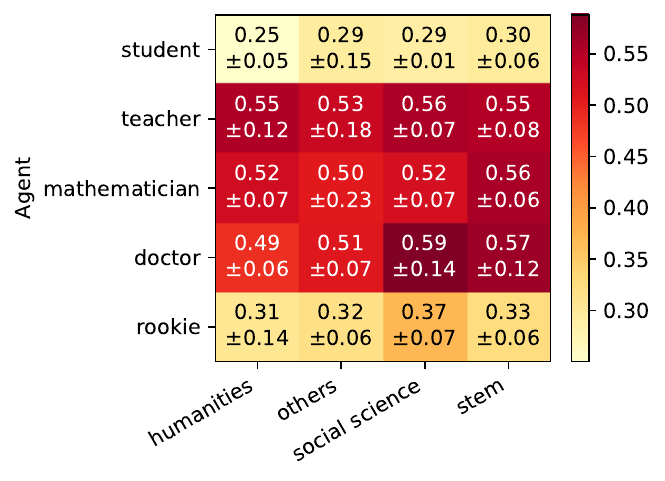}
  \caption{Influence (mean $\pm$ 95\% confidence interval) for different roles, differences in influence suggest \textit{perceived confidence} plays a role in FJ dynamics.
  }
  \label{fig:roles_heatmap}
\end{figure}

We empirically study the FJ dynamics in agentic systems and investigate how the routing weights are dependent on the different variables. If not indicated otherwise, the plots show results from the MMLU-pro dataset with role prompts with \emph{GPT-5.4 Mini}. Plots for the other configurations can be found in Appendix~\ref{app:addplots} and in the supplementary material.

\textbf{Agentic system often outperforms ensemble.}
We compare the performance of the agentic system to the baseline of taking the answer with maximum belief (averaging over the initial agent beliefs), and using a task independent ensemble by fitting constant FJ parameters over all samples, optimizing for predictive performance. Table~\ref{tab:global_ensemble_accuracy} in the appendix shows that agentic systems can perform better than both baseline and FJ ensemble, underlining the theory that the routing is input dependent, suggesting a MoE model.

\textbf{MASs implement MoEs.}
Table~\ref{tab:model_fit} shows that the FJ model fits the dynamics of the agentic collaboration well for all datasets, models and prompts styles. We can therefore leverage these fitted parameters to investigate the variability in agent weight between tasks. High variability of the FJ parameters over samples, as seen in Fig.~\ref{fig:mmlu_pro_gpt_r_var}, suggests that the parameters are input dependent. Consequently, the final weighting of the agents' initial beliefs, $\pi$ in Proposition~\ref{prop:fj-ensemble}, are also input dependent, empirically confirming \mbox{Hypothesis~\ref{hypo:moe}.} %

\textbf{Tendency towards consensus in MASs.} Despite the diversity in initial opinions (see Fig.~\ref{fig:concensus}), agents consistently reach consensus in the final round. This is unexpected under FJ dynamics and only happens under specific conditions, e.g. when little or no stubbornness is present. Furthermore, we observe that influence is concentrated on a small subset of agent, as can be seen in Fig.~\ref{fig:sample_selected} and in the individual samples in Fig.~\ref{fig:samples_heatmap} (see appendix).
Influence is strongly related to stubbornness ($\gamma$), as can be seen in Fig.~\ref{fig:pi_influence_gamma}: Within an agentic system, stubborn agents become the most influential. %

The fact that a small subset of agents dominate the final consensus and that influence is input dependent, shows that agentic systems make strong routing decisions, and motivates the need for understanding this routing behavior.

\textbf{Competence is associated with influence.}
Fig.~\ref{fig:infl_comp} in the appendix shows that there is a tendency for competent agents to get more influence, which is an indicator of effective routing. 
Particularly, confidence relative to other agents is a stronger predictor than absolute confidence, highlighting the social aspect of influence emergence.

Interestingly, the role of an agent can change its influence, suggesting that \emph{perceived confidence} also plays a role in the FJ dynamics (see Figs.~\ref{fig:roles_heatmap} and.~\ref{fig:mmlu_pro_gpt_r_classification}).

\begin{figure}
  \centering
  \includegraphics[width=\columnwidth]{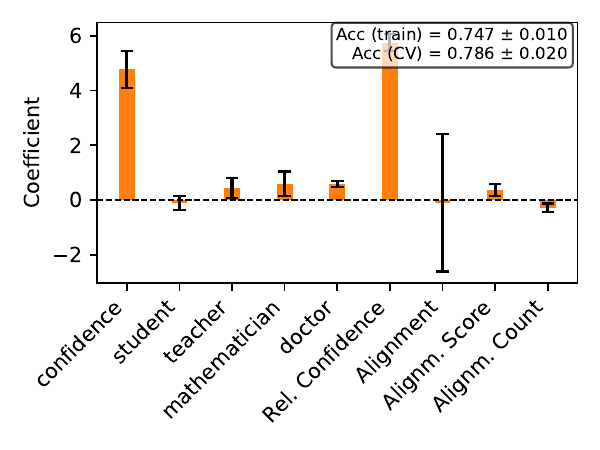}
  \caption{Logistic regression coefficients classifying the most influential agent. Confidence and competence are the strongest positive predictors. Role coefficients suggest perceived confidence plays a role in FJ dynamics.
  }
  \label{fig:mmlu_pro_gpt_r_classification}
\end{figure}

\textbf{Confidence leads to influence.}
To understand the routing behavior, we investigate the relation between metrics of the initial beliefs and competence with influence of agents by training random forests to regress influence to classify (Fig.~\ref{fig:regression_class}, appendix) if an agent becomes the most influential one. The models achieve a test $R^2$ of $0.7$ and accuracy of $0.9$ for regression and classification, respectively, which suggests that influence can largely be explained by the considered variables.
We also use logistic regression to gain insights into the relative impact of different variables. Fig.~\ref{fig:mmlu_pro_gpt_r_classification} shows that confidence both in absolute terms and relative to the second most confident agent is predictive of influence. Furthermore, competence, prompt style and alignment have an effect. Other datasets show similar behavior (see supplementary material).
The positive relation of confidence to influence as well as peer-influence can also be observed in
Figs.~\ref{fig:app:con_inf_gpt_neutral}-\ref{fig:app:con_inf_qwenl_neutral} in the appendix.

\textbf{Confident agents are more stubborn.} Naturally, alignment with the majority opinion plays in a role in the rate to which an agent sticks to its opinion. Agents that align with the majority stick to their opinion more frequently. However, both for aligned and misaligned agents, confident agents tend to stick more to their initial belief, as can be seen in Fig.~\ref{fig:app:change_rate_neutral} in the appendix.

\begin{table} 
\centering
  \caption{Model fit measured in KL divergence and MSE aggregated across datasets, prompt types, seeds and models.}
\label{tab:model_fit}
\begin{tabular}{lc}
\toprule
Metric & Mean $\pm$ 95\% CI \\
\midrule
KL Divergence & $0.0470 \pm 0.0034$ \\
MSE & $0.00198 \pm 0.00026$ \\
\bottomrule
\end{tabular}
\vspace{-0.25cm}
\end{table}

As noted before, there is a strong correlation between influence and stubbornness (see Fig.~\ref{fig:pi_influence_gamma}), suggesting that the routing behavior emerging in MASs is shaped also by an agents' resistance to changing their opinion. While we see that confident agents tend to be more stubborn, this confidence is not necessarily grounded in competence, making it crucial to understand when to trust such social signals.

\section{Discussion}
The primary objective of our work is to understand according to which criteria influence emerges through multi-agent deliberation by casting MASs as mixture of experts (MoE) with adaptive routing.
Our analysis of the router revealed not only confidence but also relative confidence as most dominant factor to explain which agent dominates the opinion formation process.
This suggests that the overall distribution of confidence among agents is highly predictive of the deliberation outcome.
To gain deeper insights into influence emergence, a natural next step in future would be to build more accurate router models considering all, not only the most confident agent.
Graph Neural Networks (GNNs) \citep{wu2020comprehensive,corso2024graph}, in particular, GNNs with attention mechanism \citep{brody2022how,mustafa2023are,mustafa2024gate,mustafa2024dynamic} could be good router candidates, as they are permutation invariant with respect to specific agent identities. 
They also have the potential to generalize across different communication graph topologies 
and even provide an opportunity to design more sophisticated and expressive multi-agent routing in future \citep{daggnn,bause2025maximally,kummer2025weisfeiler,roth2024preventing}.
However, recent insights \citep{rubio-madrigal2026fixed,rubio-madrigal2026linear} imply that more complex communication patterns might not even be required to obtain high performance, especially, if agents are sufficiently complex and if the communication graph can be optimized \citep{jamadandi2024spectral,rubio-madrigal2025gnns}. 
How much can be gained from optimal routing, compared to the routing that emerges from deliberation, remains an open question, that could lead to build more accurate, efficient, and trustworthy agentic systems.

\section{Conclusion}

We have provided a framework to study the mechanisms of collaboration in multi-agent LLM systems (MASs) and gained new insights into the factors governing the emergence of influence, including the confidence of agents, their communication behavior and opinion alignment. 
Building on the observation that the Friedkin-Johnsen model captures MAS deliberation dynamics, we have cast MASs as mixtures of experts (MoE) with adaptive routing.
This has allowed us to derive conditions when MASs can outperform simpler ensembles and single agent models.
Accordingly, their strengths arise from adaptive routing and local specialization of diverse and well calibrated agents,
while its limitations arise from miscalibrated agent confidence, misleading consensus, and routing errors.
We see multiple important directions for future work: 
To understand factors contributing to the emergence of influence, we could learn more advanced routing mechanisms that take relevant features into account.
Graph neural networks  would be a natural approach to identify general routing mechanisms that apply to different agent communication topologies.
Extending the framework to other tasks, such as open-ended generation and tool use, could transfer our insights to different agentic use cases.
This could also aid mitigating overly confident agents and improving communication to enhance overall performance. 

\paragraph{Acknowledgments.}
We are grateful for funding from
the European Research Council (ERC) under the Horizon
Europe Framework Programme (HORIZON) for proposal
number 101116395 SPARSE-ML.

\bibliography{references}
\bibliographystyle{icml2026}

\clearpage
\appendix
\onecolumn

\section{Theory: Proofs of theorems and discussion}

\paragraph{Equilibrium beliefs are convex ensembles.}
The Friedkin-Johnsen dynamics induce a graph-dependent ensemble over the initial agent beliefs.
\begin{proposition}[Graph-induced convex ensemble]
Assume that $\rho(H)<1$, where $\rho(\cdot)$ denotes the spectral radius. Then the FJ dynamics converge to the unique equilibrium $B^\star=(I-H)^{-1}\Gamma S$. $M := (I-H)^{-1}\Gamma$
is nonnegative and row-stochastic. Consequently, each equilibrium belief is a convex combination of the innate beliefs: $   b_i^\star = \sum_{j=1}^n M_{ij}s_j$. 
If the agentic network belief $b_{\mathrm{out}}^\star = \sum_{i=1}^n \eta_i b_i^\star = \sum_{j=1}^n \pi_j s_j$ with $\pi^\top=\eta^\top M$ is formed by a linear aggregation (typically $\eta_i=1/N$) of all agents' beliefs, the agentic network is equivalent to a fixed ensemble with weights $\pi_j$ that depend on the FJ parameters.
\end{proposition}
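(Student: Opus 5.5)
The plan is to treat the dynamics $B(t+1)=\Gamma S + H B(t)$ as an affine fixed-point iteration and work through convergence, nonnegativity, row-stochasticity, and finally the aggregation step.

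\textbf{Convergence and the equilibrium.} Unrolling the recursion gives
\[
B(t)=H^t B(0)+\sum_{k=0}^{t-1}H^k\Gamma S .
\]
Since $\rho(H)<1$, we have $H^t\to 0$, and the Neumann series $\sum_{k\ge0}H^k$ converges to $(I-H)^{-1}$, which exists because $1$ is not an eigenvalue of $H$. Hence $B(t)\to (I-H)^{-1}\Gamma S$ for any initialization. Uniqueness follows because any fixed point satisfies $(I-H)B=\Gamma S$, and $I-H$ is invertible.

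\textbf{Nonnegativity of $M$.} The entries of $\Gamma$, $A$, $I-A$, $I-\Gamma$ and $W$ are all nonnegative, so $H=(I-\Gamma)(A+(I-A)W)\ge 0$ entrywise. Every power $H^k$ is then nonnegative, and so is the series. Multiplying on the right by the nonnegative diagonal $\Gamma$ gives $M=\sum_k H^k\Gamma\ge 0$.

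\textbf{Row-stochasticity of $M$.} This is the step where care is needed. Let $\mathbf 1$ be the all-ones vector. Because $W$ is row-stochastic, $(A+(I-A)W)\mathbf 1=A\mathbf 1+(I-A)\mathbf 1=\mathbf 1$, and therefore
\[
H\mathbf 1=(I-\Gamma)\mathbf 1 .
\]
Rearranging gives $(I-H)\mathbf 1=\mathbf 1-(I-\Gamma)\mathbf 1=\Gamma\mathbf 1$. Multiplying by $(I-H)^{-1}$ yields $\mathbf 1=(I-H)^{-1}\Gamma\mathbf 1=M\mathbf 1$. Only invertibility is used here, so no stubbornness assumption beyond $\rho(H)<1$ is needed. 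Combining nonnegativity with $M\mathbf 1=\mathbf 1$, each row $b_i^\star=\sum_j M_{ij}s_j$ is a convex combination of the innate beliefs. In particular it lies in $\Delta^d$, since the simplex is convex.

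\textbf{Aggregation.} If the output is $b^\star_{\mathrm{out}}=\sum_i\eta_i b_i^\star$, substituting the expression for $b_i^\star$ gives
\[
b^\star_{\mathrm{out}}=\sum_j\Big(\sum_i\eta_i M_{ij}\Big)s_j=\sum_j\pi_j s_j,\qquad \pi^\top=\eta^\top M .
\]
When $\eta\in\Delta^n$ (for example $\eta_i=1/n$), we get $\pi\ge0$ and $\pi^\top\mathbf 1=\eta^\top M\mathbf 1=\eta^\top\mathbf 1=1$, so $\pi\in\Delta^n$. Since $\pi$ is a function of $(\Gamma,A,W)$ and $\eta$ only, the system acts as a fixed ensemble whenever these parameters are fixed.
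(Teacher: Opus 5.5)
Your proposal is correct and follows essentially the same route as the paper: you unroll the recursion, invoke the Neumann series under $\rho(H)<1$, and read off nonnegativity from $H,\Gamma\ge 0$. Your identity $(I-H)\mathbf 1=\Gamma\mathbf 1$ is the algebraic form of the paper's argument for $M\mathbf 1=\mathbf 1$ (constant innate beliefs stay constant under the dynamics), and you also make explicit the uniqueness of the fixed point and that $\pi\in\Delta^n$ whenever $\eta\in\Delta^n$.
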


\begin{proof}[Sketch]
Unrolling the FJ dynamics gives
\[
    B(t)
    =
    \sum_{\tau=0}^{t-1} H^\tau \Gamma S
    +
    H^t B(0).
\]
If $\rho(H)<1$, then $H^t\to 0$ and the Neumann series converges:
\[
    \sum_{\tau=0}^{\infty}H^\tau=(I-H)^{-1}.
\]
Nonnegativity follows because $H$ and $\Gamma$ are nonnegative. Row-stochasticity follows from conservation of mass on the simplex: if all innate beliefs are equal to a constant simplex vector $s$, then all beliefs remain equal to $s$ under the dynamics, implying $M\mathbf 1=\mathbf 1$.
\end{proof}

Thus, deliberation over a fixed graph does not produce an arbitrary new predictor. It produces a graph-induced convex ensemble of the agents' initial beliefs. 
For a designated readout agent $r$, the final prediction is
\begin{equation}
\label{eq:readout-ensemble}
    b_r^\star
    =
    \sum_{j=1}^n m_{rj} s_j.
\end{equation}
More generally, if the system output is a linear readout $\eta\in\Delta^n$ over the final agent beliefs (typically an average with $\eta_i = 1/N$), then
\[
    b_{\mathrm{out}}^\star
    =
    \sum_{i=1}^n \eta_i b_i^\star
    =
    \sum_{j=1}^n \pi_j s_j,
    \qquad
    \pi^\top=\eta^\top M.
\]
Hence, a fixed agentic deliberation graph is equivalent to a fixed convex ensemble over agent priors, with weights determined by the network topology and the agents' stubbornness, memory, and susceptibility parameters.

\paragraph{Static deliberation versus mixture-of-experts deliberation.}

The preceding result shows that a fixed Friedkin--Johnsen network implements a static ensemble:
\begin{equation}
\label{eq:static-ensemble}
    f_{\mathrm{FJ}}(x)
    =
    \sum_{j=1}^n \pi_j s_j(x),
\end{equation}
where $s_j(x)$ is the prediction of agent $j$ on input $x$, and $\pi_j$ is independent of $x$.

If, however, the graph, influence weights, or deliberation parameters depend on the input, then the equilibrium weights also become input-dependent. Writing
\[
    W=W(x),\qquad
    \Gamma=\Gamma(x),\qquad
    A=A(x),
\]
we obtain
\[
    M(x)
    =
    \bigl(I-H(x)\bigr)^{-1}\Gamma(x),
\]
and therefore
\begin{equation}
\label{eq:fj-moe}
    f_{\mathrm{FJ-MoE}}(x)
    =
    \sum_{j=1}^n \pi_j(x)s_j(x).
\end{equation}
This is a mixture-of-experts architecture: the agents are experts, while the graph-dependent deliberation mechanism acts as a router or gating function. 
The static ensemble in \eqref{eq:static-ensemble} can only assign global importance to agents. 
By contrast, the mixture form in \eqref{eq:fj-moe} can assign local, input-dependent responsibility to agents.

\begin{lemma}[Local ambiguity decomposition]
For any belief-dependent weights $a(S)\in\Delta^n$, $    \mathbb E
    \left[
        \left\|
            \sum_{j=1}^n a_j(S)s_j(X)-e_Y
        \right\|_2^2
        \mid S
    \right]
    =
    \sum_{j=1}^n a_j(S) r_j(S)
    -
    D_{a(S)}(S)$.
\end{lemma}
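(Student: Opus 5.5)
The plan is to prove the statement pointwise for a fixed realization of the beliefs $S$, using the classical ambiguity (bias--variance) decomposition, and only then take the conditional expectation. Write $a=a(S)$ and $\bar s_a=\sum_j a_j s_j$. Because $a$ depends on $X$ only through $S$, conditioning on $S(X)=S$ freezes both the weights and the beliefs $s_j(X)=s_j$. The only remaining randomness is in the label $Y$, so $e_Y$ is the sole random quantity inside the conditional expectation.

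\textbf{Step 1 (pointwise identity).} For any fixed vector $v\in\mathbb R^d$, namely $v=e_Y$, I will show
\[
\sum_j a_j\|s_j-v\|_2^2=\|\bar s_a-v\|_2^2+\sum_j a_j\|s_j-\bar s_a\|_2^2 .
\]
To do this, write $s_j-v=(s_j-\bar s_a)+(\bar s_a-v)$ and expand the square. The cross term is $2\sum_j a_j\langle s_j-\bar s_a,\bar s_a-v\rangle$. It vanishes because $\sum_j a_j(s_j-\bar s_a)=0$, which uses $\sum_j a_j=1$, i.e.\ $a\in\Delta^n$. The second summand on the right is exactly $D_a(S)$ from \eqref{eq:diversity}.

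\textbf{Step 2 (conditional expectation).} Rearranging Step 1 gives $\|\bar s_a-e_Y\|_2^2=\sum_j a_j\|s_j-e_Y\|_2^2-D_a(S)$. I then take $\mathbb E[\cdot\mid S]$ of both sides. Since $a_j(S)$ and $D_{a(S)}(S)$ are $S$-measurable, they pull out of the conditional expectation. Linearity leaves $\sum_j a_j(S)\,\mathbb E[\|s_j(X)-e_Y\|_2^2\mid S]=\sum_j a_j(S)r_j(S)$, using the definition of $r_j$.

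\textbf{Step 3 (second form of $D_a$).} To check the equality claimed in \eqref{eq:diversity}, I will expand $\tfrac12\sum_{i,j}a_ia_j\|s_i-s_j\|_2^2$ as
\[
\tfrac12\sum_{i,j}a_ia_j\bigl(\|s_i\|^2+\|s_j\|^2-2\langle s_i,s_j\rangle\bigr)=\sum_j a_j\|s_j\|^2-\|\bar s_a\|^2 .
\]
The right-hand side also equals $\sum_j a_j\|s_j-\bar s_a\|^2$ by the same expansion.

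None of these steps is a real obstacle. The only point requiring care is the measurability argument in Step 2: the weights must be functions of $S$ alone, not of $X$ more finely, so that they can be treated as constants under the conditioning.
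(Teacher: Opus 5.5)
Your proof is correct and follows essentially the same route as the paper: it first establishes the pointwise ambiguity identity $\sum_j a_j\|s_j-e_y\|_2^2=\|\bar s_a-e_y\|_2^2+D_a(S)$ for fixed $S$ and $y$, then takes the conditional expectation over $Y$ given $S$. Your Step 3 (checking the pairwise form of $D_a$) and your explicit measurability remark are not needed for the lemma itself, but they fill in details the paper leaves implicit.
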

\begin{proof}
For any fixed $S$ and any label $y$,
\[
    \sum_j a_j\|s_j-e_y\|_2^2
    =
    \left\|
        \sum_j a_j s_j-e_y
    \right\|_2^2
    +
    \sum_j a_j
    \left\|
        s_j-\sum_k a_k s_k
    \right\|_2^2.
\]
Taking conditional expectation over $Y\mid S$ gives the claim.
\end{proof}

\begin{proposition}[MoE advantage under mutually exclusive competence]
Assume $n\ge 2$, $\rho_j>0$ for all $j$, and $p>u$. Then $ L_{\mathrm{MoE}}    < \inf_{\psi\in\Delta^N} L_{\mathrm{ens}}(\psi)$.
\end{proposition}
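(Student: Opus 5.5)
We compare the two explicit expressions directly. Since $\mathcal{B}_d(q)=\frac{d}{d-1}(1-q)^2$ is strictly decreasing on $[0,1]$, the claim reduces to showing that on each region the fixed ensemble puts strictly less mass on the true class than $p$, uniformly enough that the infimum over the compact simplex is not attained at $L_{\mathrm{MoE}}$.

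\emph{Step 1 (pointwise comparison).} For $\psi\in\Delta^n$ and every $j$, we have $q_j(\psi)=u+(p-u)\psi_j\le p$ because $\psi_j\le 1$ and $p>u$. Equality holds iff $\psi_j=1$. Monotonicity of $\mathcal{B}_d$ then gives $\mathcal{B}_d(q_j(\psi))\ge \mathcal{B}_d(p)$. Multiplying by $\rho_j$ and summing, using $\sum_j\rho_j=1$, yields $L_{\mathrm{ens}}(\psi)\ge \mathcal{B}_d(p)=L_{\mathrm{MoE}}$.

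\emph{Step 2 (strictness for each $\psi$).} Since $n\ge 2$ and $\sum_j\psi_j=1$, at most one coordinate can equal $1$. Hence there is an index $k$ with $\psi_k\le 1-\psi_j$ for some $j\neq k$, and in fact some $\psi_k<1$. For that $k$, $q_k(\psi)<p$. Because $1-q_k>1-p>0$ (note $p<1$ since $\varepsilon>0$), we get $\mathcal{B}_d(q_k(\psi))>\mathcal{B}_d(p)$ strictly. Since $\rho_k>0$, this gives $L_{\mathrm{ens}}(\psi)>L_{\mathrm{MoE}}$.

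\emph{Step 3 (passing to the infimum).} This is the only subtle point, since a strict inequality for each $\psi$ need not survive an infimum. We handle it by compactness: $L_{\mathrm{ens}}$ is a continuous (quadratic) function on the compact simplex $\Delta^n$, so the infimum is attained at some $\psi^\star$. Step 2 applied to $\psi^\star$ then gives $\inf_\psi L_{\mathrm{ens}}(\psi)=L_{\mathrm{ens}}(\psi^\star)>L_{\mathrm{MoE}}$.

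Alternatively, one can give an explicit gap bound. Some $\psi_k\le 1/2$, so $q_k\le u+(p-u)/2$, and hence
\[
L_{\mathrm{ens}}(\psi)-L_{\mathrm{MoE}}\ge \min_j\rho_j\,\bigl[\mathcal{B}_d(u+\tfrac{p-u}{2})-\mathcal{B}_d(p)\bigr]>0 .
\]
This bound is uniform in $\psi$ and so also survives the infimum.
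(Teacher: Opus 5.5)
Your proof is correct and follows the paper's argument: $q_j(\psi)\le p$ with equality iff $\psi_j=1$, strict monotonicity of $\mathcal{B}_d$, and the observation that $n\ge 2$ forces some $\psi_k<1$ on a region with $\rho_k>0$. Your Step 3 adds rigor the paper omits: the paper stops at ``this holds for every $\psi$'' and never passes to the infimum, whereas you close that step either by attainment on the compact simplex or by the uniform gap coming from some $\psi_k\le 1/2$.
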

\begin{proof}
For a symmetric prediction that assigns probability $q$ to the correct class and distributes the remaining mass uniformly over the $d-1$ incorrect classes, the Brier loss is
\begin{align}
\begin{split}
\mathcal{B}_d(q)=
(q-1)^2
+
(d-1)
\left(\frac{1-q}{d-1}\right)^2
=
\frac{d}{d-1}(1-q)^2.
\end{split}
\end{align}
Because $q<1$, this function is strictly decreasing in $q$.

On region $\mathcal{X}_j$, a fixed ensemble with weights $\psi$ assigns probability
$
q_j(\psi)=u+(p-u)\psi_j
$

to the correct class. Since $\psi_j\leq 1$ and $p>u$,
$
q_j(\psi)\leq p,
$
with equality if and only if $\psi_j=1$. Therefore,
$
\mathcal{B}_d(q_j(\psi))
\geq
\mathcal{B}_d(p),
$
with equality if and only if $\psi_j=1$.

For the fixed ensemble to match the MoE on every positive-probability region, it would need
$\psi_j=1$ for every $j$. This is impossible when $n\geq 2$ and
$\sum_j\psi_j=1$. Consequently, at least one region with $\rho_j>0$ satisfies
$q_j(\psi)<p$, and hence
\[
\begin{split}
L_{\mathrm{ens}}(\psi)
=
\sum_{j=1}^n
\rho_j\mathcal{B}_d(q_j(\psi))
 >
\sum_{j=1}^n
\rho_j\mathcal{B}_d(p)
=
\mathcal{B}_d(p)
=
L_{\mathrm{MoE}}.
\end{split}
\]
This holds for every $\psi\in\Delta_n$.
\end{proof}

\begin{proof}[Optimal ensemble for balanced mutually exclusive regions.]
When $\rho_j=1/n$, define
\[
h(t) =\mathcal{B}_d\left(u+(p-u)t\right).
\]
Since $\mathcal{B}_d$ is strictly convex and $p\neq u$, $h$ is strictly convex. Jensen's inequality
gives
\[
\begin{split}
L_{\mathrm{ens}}(\psi) =
\frac{1}{n}\sum_{j=1}^n h(\psi_j)
\geq
h\left(\frac{1}{n}\sum_{j=1}^n\psi_j\right)
=
h\left(\frac{1}{n}\right).
\end{split}
\]
Equality holds if and only if
\[
\psi_1=\cdots=\psi_n=\frac{1}{n}.
\]
Thus, the uniform ensemble is the unique optimum and
\[
L_{\mathrm{ens}}^\star
=
\mathcal{B}_d\left(
u+\frac{p-u}{n}
\right).
\]
Subtracting $L_{\mathrm{MoE}}=\mathcal{B}_d(p)$ gives the stated gap.
\end{proof}

\begin{proof}[Comparison with the best single agent]
Agent $j$ obtains loss $\mathcal{B}_d(p)$ on its competence region and
$\mathcal{B}_d(u)$ elsewhere. Thus,
\[
L_{\mathrm{single},j}
=
\rho_j\mathcal{B}_d(p)
+
(1-\rho_j)\mathcal{B}_d(u).
\]
Because $\mathcal{B}_d(p)<\mathcal{B}_d(u)$, this expression is minimized by an agent with maximum
$\rho_j$. Therefore,
\[
\begin{split}
L{\mathrm{single}}^\star-L{\mathrm{MoE}}
& =
\rho{j^{\mathrm{glob}}}\mathcal{B}_d(p)
+
\bigl(1-\rho{j^{\mathrm{glob}}}\bigr)\mathcal{B}_d(u)
-
\mathcal{B}_d(p)
\\
& =
\bigl(1-\rho{j^{\mathrm{glob}}}\bigr)
\left[
\mathcal{B}_d(u)-\mathcal{B}_d(p)
\right].
\end{split}
\]
This is strictly positive whenever no single region has probability one.
\end{proof}

\begin{proof}[Routing-error threshold]
A router that selects the competent agent with probability $1-\delta$ and an uninformative agent
with probability $\delta$ has expected loss
\[
L_{\mathrm{router}}(\delta)
=
(1-\delta)\mathcal{B}_d(p)
+
\delta\mathcal{B}_d(u).
\]
In the balanced case, the optimal fixed ensemble has loss
\[
L_{\mathrm{ens}}^\star
=
\mathcal{B}_d(q_n),
\qquad
q_n=u+\frac{p-u}{n}.
\]
The router is better if and only if
\[
\mathcal{B}_d(p)
+
\delta
\left[
\mathcal{B}_d(u)-\mathcal{B}_d(p)
\right]
<
\mathcal{B}_d(q_n).
\]
Since $\mathcal{B}_d(u)>\mathcal{B}_d(p)$, division by the positive denominator gives
\[
\delta
<
\frac{
\mathcal{B}_d(q_n)-\mathcal{B}_d(p)
}{
\mathcal{B}_d(u)-\mathcal{B}_d(p)
}.
\]
Substituting
\[
\mathcal{B}_d(q)=\frac{d}{d-1}(1-q)^2
\]
and cancelling the common factor gives
\[
\delta
<
\frac{
(1-q_n)^2-(1-p)^2
}{
(1-u)^2-(1-p)^2
}.
\]
\end{proof}

\begin{proposition}
Assume $n\geq 2$, $d\geq 3$, and the conditions above. Then the optimal fixed ensemble is the uniform ensemble with $\psi_k^\star=\frac{1}{n}$ and confidence routing strictly improves over it:
$L_{\mathrm{MoE}} < L_{\mathrm{ens}}(\psi^\star)$. 
More precisely, with $t=\frac{n-1}{n}$, the loss gap is
\begin{align*}
L_{\mathrm{ens}}(\psi^\star)-L_{\mathrm{MoE}}
=
2t(p-u)(1-p+b) + t^2
\left[
(p-u)^2
+(c-b)^2
+(d-2)(r-b)^2
\right]
>0. 
\end{align*} 
\end{proposition}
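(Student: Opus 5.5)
The plan is to reduce everything to a single-variable quadratic in the weight $\psi_k$ assigned to the competent agent on region $\mathcal X_k$. By the assumed symmetry of the construction, the per-region loss
\[
g(\psi_k) = (1-q_{y,k})^2 + q_{z,k}^2 + (d-2)q_{o,k}^2
\]
is the same function $g$ for every $k$. Here $q_{y,k}$, $q_{z,k}$ and $q_{o,k}$ are the affine expressions given before the statement, so $L_{\mathrm{ens}}(\psi)=\frac1n\sum_{k} g(\psi_k)$ for balanced regions.

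\textbf{Step 1: $g$ is strictly convex.} Since $q_{y,k}$, $q_{z,k}$, $q_{o,k}$ are affine in $\psi_k$, $g$ is a quadratic with leading coefficient
\[
(p-u)^2+(c-b)^2+(d-2)(r-b)^2 ,
\]
which is strictly positive because $p>u$.

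\textbf{Step 2: the uniform ensemble is optimal.} Exactly as in the balanced case of Proposition~\ref{prop:n-agent-moe-advantage}, Jensen's inequality gives $\frac1n\sum_k g(\psi_k)\ge g\bigl(\frac1n\sum_k\psi_k\bigr)=g(1/n)$. Equality holds only at $\psi_k=1/n$, so $\psi^\star$ is the uniform ensemble.

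\textbf{Step 3: the MoE loss.} By the assumption $C_k>C_j$ for $j\neq k$, hard confidence routing selects agent $k$ on $\mathcal X_k$. Its loss there is
\[
(1-p)^2+(d-1)b^2=\mathcal B_d(p),
\]
which equals $g(1)$. Hence the gap is $g(1/n)-g(1)$.

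\textbf{Step 4: compute the gap.} Reparametrize by $t=1-\psi=\frac{n-1}{n}$ and write
\[
q_y=p-t(p-u),\qquad q_z=b+t(c-b),\qquad q_o=b+t(r-b).
\]
Expanding the squares and subtracting the $t=0$ value leaves:
\begin{itemize}
\item quadratic part $t^2\bigl[(p-u)^2+(c-b)^2+(d-2)(r-b)^2\bigr]$;
\item linear part $2t\bigl[(1-p)(p-u)+b\bigl((c-b)+(d-2)(r-b)\bigr)\bigr]$.
\end{itemize}

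The one nonroutine point is simplifying the linear coefficient. I would use the normalization identities $u+c+(d-2)r=1$ and $p+(d-1)b=1$. They give
\[
(c-b)+(d-2)(r-b)=c+(d-2)r-(d-1)b=(1-u)-(1-p)=p-u .
\]
The linear term therefore collapses to $2t(p-u)(1-p+b)$, which matches the stated formula.

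\textbf{Step 5: positivity.} We have $t>0$ since $n\ge2$, $p>u$ by $u<1/d<p$, and $1-p+b>0$ since $p<1$. So the linear term is strictly positive, and the $t^2$ term is nonnegative. This gives $L_{\mathrm{MoE}}<L_{\mathrm{ens}}(\psi^\star)$.

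I expect the main obstacle to be bookkeeping rather than any conceptual step: tracking the $(d-2)$ "other" labels correctly and spotting the normalization identity that makes the cross terms combine into $b(p-u)$.
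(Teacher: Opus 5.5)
Your proposal is correct and follows essentially the same route as the paper. Both arguments use strict convexity of the per-region loss plus Jensen to show the uniform ensemble is optimal, then expand the loss at $t=\frac{n-1}{n}$ around the competent agent's prediction. Your coordinate-wise expansion with the normalization identity $(c-b)+(d-2)(r-b)=p-u$ is exactly the paper's inner-product computation $\langle v_k-e_{y_k},\,w_k-v_k\rangle=(p-u)(1-p+b)$, written out label by label.
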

\begin{proof}
On region $\mathcal{X}_k$, let $v_k$ denote the competent agent's prediction and $w_k$ the common
prediction of every non-competent agent. For any fixed ensemble $\psi$, its prediction on this
region is
\[
q_k(\psi_k)
=
\psi_k v_k+(1-\psi_k)w_k.
\]
Define
\[
h(t)
=
\left\lVert
tv_k+(1-t)w_k-e_{y_k}
\right\rVert_2^2.
\]
The function $h$ is strictly convex because it is a squared norm of an affine function and
$v_k\neq w_k$. By symmetry, the same function applies to every region. Hence,
\[
L_{\mathrm{ens}}(\psi)
=
\frac{1}{n}\sum_{k=1}^n h(\psi_k).
\]
Jensen's inequality gives
\[
L_{\mathrm{ens}}(\psi)
\geq
h\left(
\frac{1}{n}\sum_{k=1}^n\psi_k
\right)
=
h\left(\frac{1}{n}\right),
\]
with equality if and only if $\psi_k=1/n$ for every $k$. Thus, the uniform ensemble is the unique
optimal fixed ensemble.

Let
\[
t=\frac{n-1}{n}.
\]
The uniform ensemble prediction on region $\mathcal{X}_k$ is
\[
\bar q_k
=
\frac{1}{n}v_k+\frac{n-1}{n}w_k
=
v_k+t(w_k-v_k).
\]
Its excess Brier loss over the competent agent is
\[
\begin{split}
\lVert\bar q_k-e{y_k}\rVert_2^2
-
\lVert v_k-e_{y_k}\rVert_2^2
& =
2t
\left\langle
v_k-e_{y_k},
w_k-v_k
\right\rangle
\\
&\quad
+
t^2\lVert w_k-v_k\rVert_2^2.
\end{split}
\]

The inner product is
\[
\begin{split}
\left\langle
v_k-e_{y_k},
w_k-v_k
\right\rangle
&=
(p-1)(u-p)
+
b\left[(1-u)-(1-p)\right]
\\
&=
(1-p)(p-u)+b(p-u)
\\
&=
(p-u)(1-p+b).
\end{split}
\]
Furthermore,
\[
\lVert w_k-v_k\rVert_2^2
=
(p-u)^2
+(c-b)^2
+(d-2)(r-b)^2.
\]
Therefore,
\[
\begin{split}
L_{\mathrm{ens}}(\psi^\star)-L_{\mathrm{MoE}}
&=
2t(p-u)(1-p+b)
\\
&\quad
+
t^2
\left[
(p-u)^2
+(c-b)^2
+(d-2)(r-b)^2
\right].
\end{split}
\]
Since $t>0$ and $p>u$, the first term is strictly positive, while the second is nonnegative. Hence,
\[
L_{\mathrm{ens}}(\psi^\star)
>
L_{\mathrm{MoE}}.
\]

Finally, under the uniform ensemble,
\[
\bar q_y
=
\frac{p+(n-1)u}{n},
\qquad
\bar q_z
=
\frac{b+(n-1)c}{n}.
\]
Because $c>r$, the common wrong label receives more probability than every other wrong label.
Thus, the ensemble predicts the wrong label whenever
\[
b+(n-1)c
>
p+(n-1)u.
\]
The confidence-routed MoE predicts correctly because
\[
p>\frac{1}{d}
\quad\Longleftrightarrow\quad
p>\frac{1-p}{d-1}=b.
\]
\end{proof}

\section{Experimental evaluation}
\label{app:accuracy_res}
We evaluate how well the FJ model fits in detail and present the results in Table~\ref{tab:model_fit_app}. Over all datasets, we see a good fit.

\begin{table}[h]
\centering
\caption{Model fit (KL divergence and MSE) across datasets, prompt types, and models.}
\label{tab:model_fit_app}
\begin{tabular}{lllcc}
\toprule
Dataset & Prompt & Model & KL & MSE \\
\midrule
\multirow{12}{*}{MMLU-Pro} & \multirow{3}{*}{neutral} & \textit{GPT-5.4 Mini} & $0.0829 \pm 0.0147$ & $0.00247 \pm 0.00097$ \\
 &  & \textit{Qwen2.5-14B-Instruct} & $0.0503 \pm 0.0013$ & $0.00201 \pm 0.00004$ \\
 &  & \textit{Qwen2.5-72B-Instruct} & $0.0282 \pm 0.0040$ & $0.000972 \pm 0.000176$ \\[3pt]
 & \multirow{3}{*}{experts} & \textit{GPT-5.4 Mini} & $0.0714 \pm 0.0104$ & $0.00206 \pm 0.00040$ \\
 &  & \textit{Qwen2.5-14B-Instruct} & $0.0467 \pm 0.0044$ & $0.00182 \pm 0.00032$ \\
 &  & \textit{Qwen2.5-72B-Instruct} & $0.0268 \pm 0.0068$ & $0.000935 \pm 0.000226$ \\[3pt]
 & \multirow{3}{*}{comm. styles} & \textit{GPT-5.4 Mini} & $0.0740 \pm 0.0096$ & $0.00219 \pm 0.00052$ \\
 &  & \textit{Qwen2.5-14B-Instruct} & $0.0505 \pm 0.0148$ & $0.00200 \pm 0.00067$ \\
 &  & \textit{Qwen2.5-72B-Instruct} & $0.0263 \pm 0.0077$ & $0.000859 \pm 0.000326$ \\[3pt]
 & \multirow{3}{*}{roles} & \textit{GPT-5.4 Mini} & $0.0786 \pm 0.0212$ & $0.00227 \pm 0.00090$ \\
 &  & \textit{Qwen2.5-14B-Instruct} & $0.0575 \pm 0.0061$ & $0.00225 \pm 0.00032$ \\
 &  & \textit{Qwen2.5-72B-Instruct} & $0.0320 \pm 0.0117$ & $0.00114 \pm 0.00049$ \\
\midrule
\multirow{12}{*}{BBQ} & \multirow{3}{*}{neutral} & \textit{GPT-5.4 Mini} & $0.0420 \pm 0.0093$ & $0.000373 \pm 0.000100$ \\
 &  & \textit{Qwen2.5-14B-Instruct} & $0.0403 \pm 0.0043$ & $0.00117 \pm 0.00006$ \\
 &  & \textit{Qwen2.5-72B-Instruct} & $0.0238 \pm 0.0005$ & $0.000876 \pm 0.000135$ \\[3pt]
 & \multirow{3}{*}{experts} & \textit{GPT-5.4 Mini} & $0.0423 \pm 0.0075$ & $0.000324 \pm 0.000091$ \\
 &  & \textit{Qwen2.5-14B-Instruct} & $0.0541 \pm 0.0076$ & $0.00151 \pm 0.00053$ \\
 &  & \textit{Qwen2.5-72B-Instruct} & $0.0239 \pm 0.0152$ & $0.000963 \pm 0.000827$ \\[3pt]
 & \multirow{3}{*}{comm. styles} & \textit{GPT-5.4 Mini} & $0.0459 \pm 0.0052$ & $0.000436 \pm 0.000063$ \\
 &  & \textit{Qwen2.5-14B-Instruct} & $0.0354 \pm 0.0039$ & $0.00105 \pm 0.00011$ \\
 &  & \textit{Qwen2.5-72B-Instruct} & $0.0221 \pm 0.0016$ & $0.000837 \pm 0.000125$ \\[3pt]
 & \multirow{3}{*}{roles} & \textit{GPT-5.4 Mini} & $0.0479 \pm 0.0039$ & $0.000422 \pm 0.000135$ \\
 &  & \textit{Qwen2.5-14B-Instruct} & $0.0508 \pm 0.0017$ & $0.00169 \pm 0.00031$ \\
 &  & \textit{Qwen2.5-72B-Instruct} & $0.0214 \pm 0.0057$ & $0.000817 \pm 0.000305$ \\
\midrule
\multirow{12}{*}{CSQA} & \multirow{3}{*}{neutral} & \textit{GPT-5.4 Mini} & $0.0513 \pm 0.0040$ & $0.00247 \pm 0.00019$ \\
 &  & \textit{Qwen2.5-14B-Instruct} & $0.0754 \pm 0.0146$ & $0.00570 \pm 0.00197$ \\
 &  & \textit{Qwen2.5-72B-Instruct} & $0.0364 \pm 0.0034$ & $0.00264 \pm 0.00021$ \\[3pt]
 & \multirow{3}{*}{experts} & \textit{GPT-5.4 Mini} & $0.0517 \pm 0.0015$ & $0.00256 \pm 0.00027$ \\
 &  & \textit{Qwen2.5-14B-Instruct} & $0.0599 \pm 0.0106$ & $0.00414 \pm 0.00110$ \\
 &  & \textit{Qwen2.5-72B-Instruct} & $0.0331 \pm 0.0019$ & $0.00238 \pm 0.00004$ \\[3pt]
 & \multirow{3}{*}{comm. styles} & \textit{GPT-5.4 Mini} & $0.0516 \pm 0.0042$ & $0.00244 \pm 0.00020$ \\
 &  & \textit{Qwen2.5-14B-Instruct} & $0.0647 \pm 0.0086$ & $0.00443 \pm 0.00047$ \\
 &  & \textit{Qwen2.5-72B-Instruct} & $0.0316 \pm 0.0026$ & $0.00222 \pm 0.00026$ \\[3pt]
 & \multirow{3}{*}{roles} & \textit{GPT-5.4 Mini} & $0.0534 \pm 0.0037$ & $0.00272 \pm 0.00032$ \\
 &  & \textit{Qwen2.5-14B-Instruct} & $0.0770 \pm 0.0121$ & $0.00592 \pm 0.00133$ \\
 &  & \textit{Qwen2.5-72B-Instruct} & $0.0306 \pm 0.0049$ & $0.00223 \pm 0.00063$ \\
\bottomrule
\end{tabular}

\end{table}

We compare the performance of the agentic system to the baseline of taking the answer with maximum belief (averaging over the initial agent beliefs), and using a task independent ensemble by fitting constant FJ parameters over all samples, optimizing for predictive performance. Table~\ref{tab:global_ensemble_accuracy} that agentic systems can perform better than both baseline and FJ ensemble, underlining the theory that the routing is input dependent, suggesting a MoE model. 
In some cases, the agentic system slightly underperforms the baseline when prompted with specific persona or communication style prompts. Similar behavior was described in~\cite{hu2026expertpersonasimprovellm}, which shows personas can potentially damage the discriminative performance. Regardless, utilizing these prompts introduces diversity in the agentic system, enabling a better study of its dynamics.

In rare cases $(<1\%)$, the FJ solver fails to converge within a reasonable time, so we skip the affected sample. In $4\%$ of the successful samples, the fitted FJ parameters result in the condition number of $(I-H) \geq 10^4$, consequenlty $\pi_j$'s are noisy and no longer sum exactly to one, so we exclude these cases from the influence regressions and plots. 

\begin{table}[tb]
\centering
\caption{Accuracy of baseline, FJ ensemble, and agent system, across datasets, prompts, and models.}
\label{tab:global_ensemble_accuracy}
\footnotesize
\begin{tabular}{lllccc}
\toprule
Dataset & Prompt & Model & Baseline & FJ Ens. & MAS \\
\midrule
\multirow{12}{*}{MMLU-Pro} & \multirow{3}{*}{neutral} & \textit{GPT-5.4 Mini} & $0.804\pm0.013$ & $0.820\pm0.013$ & $\mathbf{0.826\pm0.027}$ \\
 &  & \textit{Qwen2.5-14B-Instruct} & $0.569\pm0.027$ & $0.577\pm0.034$ & $\mathbf{0.604\pm0.023}$ \\
 &  & \textit{Qwen2.5-72B-Instruct} & $0.695\pm0.036$ & $0.694\pm0.075$ & $\mathbf{0.712\pm0.016}$ \\[3pt]
 & \multirow{3}{*}{experts} & \textit{GPT-5.4 Mini} & $0.816\pm0.005$ & $0.819\pm0.009$ & $\mathbf{0.835\pm0.008}$ \\
 &  & \textit{Qwen2.5-14B-Instruct} & $0.637\pm0.019$ & $0.633\pm0.034$ & $\mathbf{0.640\pm0.016}$ \\
 &  & \textit{Qwen2.5-72B-Instruct} & $0.728\pm0.043$ & $0.729\pm0.007$ & $\mathbf{0.742\pm0.037}$ \\[3pt]
 & \multirow{3}{*}{comm. styles} & \textit{GPT-5.4 Mini} & $0.797\pm0.012$ & $0.809\pm0.029$ & $\mathbf{0.824\pm0.029}$ \\
 &  & \textit{Qwen2.5-14B-Instruct} & $0.591\pm0.023$ & $\mathbf{0.614\pm0.029}$ & $0.606\pm0.016$ \\
 &  & \textit{Qwen2.5-72B-Instruct} & $0.727\pm0.019$ & $0.736\pm0.017$ & $\mathbf{0.743\pm0.015}$ \\[3pt]
 & \multirow{3}{*}{roles} & \textit{GPT-5.4 Mini} & $0.805\pm0.023$ & $0.806\pm0.029$ & $\mathbf{0.824\pm0.004}$ \\
 &  & \textit{Qwen2.5-14B-Instruct} & $0.557\pm0.014$ & $0.576\pm0.018$ & $\mathbf{0.596\pm0.009}$ \\
 &  & \textit{Qwen2.5-72B-Instruct} & $0.700\pm0.040$ & $0.706\pm0.045$ & $\mathbf{0.720\pm0.060}$ \\
\midrule
\multirow{12}{*}{BBQ} & \multirow{3}{*}{neutral} & \textit{GPT-5.4 Mini} & $0.916\pm0.014$ & $\mathbf{0.916\pm0.028}$ & $0.916\pm0.024$ \\
 &  & \textit{Qwen2.5-14B-Instruct} & $0.866\pm0.028$ & $0.867\pm0.015$ & $\mathbf{0.876\pm0.021}$ \\
 &  & \textit{Qwen2.5-72B-Instruct} & $0.882\pm0.002$ & $\mathbf{0.888\pm0.016}$ & $0.875\pm0.015$ \\[3pt]
 & \multirow{3}{*}{experts} & \textit{GPT-5.4 Mini} & $0.932\pm0.017$ & $0.932\pm0.012$ & $\mathbf{0.934\pm0.012}$ \\
 &  & \textit{Qwen2.5-14B-Instruct} & $0.880\pm0.017$ & $\mathbf{0.886\pm0.009}$ & $0.870\pm0.012$ \\
 &  & \textit{Qwen2.5-72B-Instruct} & $0.836\pm0.015$ & $\mathbf{0.858\pm0.031}$ & $0.840\pm0.006$ \\[3pt]
 & \multirow{3}{*}{comm. styles} & \textit{GPT-5.4 Mini} & $0.906\pm0.025$ & $0.903\pm0.015$ & $\mathbf{0.916\pm0.014}$ \\
 &  & \textit{Qwen2.5-14B-Instruct} & $0.865\pm0.005$ & $0.882\pm0.020$ & $\mathbf{0.887\pm0.018}$ \\
 &  & \textit{Qwen2.5-72B-Instruct} & $0.857\pm0.020$ & $\mathbf{0.875\pm0.034}$ & $0.870\pm0.025$ \\[3pt]
 & \multirow{3}{*}{roles} & \textit{GPT-5.4 Mini} & $0.904\pm0.018$ & $\mathbf{0.918\pm0.018}$ & $0.915\pm0.018$ \\
 &  & \textit{Qwen2.5-14B-Instruct} & $0.874\pm0.034$ & $0.866\pm0.026$ & $\mathbf{0.898\pm0.028}$ \\
 &  & \textit{Qwen2.5-72B-Instruct} & $0.860\pm0.009$ & $\mathbf{0.887\pm0.013}$ & $0.882\pm0.018$ \\
\midrule
\multirow{12}{*}{CSQA} & \multirow{3}{*}{neutral} & \textit{GPT-5.4 Mini} & $\mathbf{0.726\pm0.013}$ & $0.722\pm0.010$ & $0.722\pm0.017$ \\
 &  & \textit{Qwen2.5-14B-Instruct} & $0.775\pm0.011$ & $\mathbf{0.782\pm0.013}$ & $0.782\pm0.018$ \\
 &  & \textit{Qwen2.5-72B-Instruct} & $\mathbf{0.803\pm0.012}$ & $0.799\pm0.028$ & $0.793\pm0.011$ \\[3pt]
 & \multirow{3}{*}{experts} & \textit{GPT-5.4 Mini} & $0.723\pm0.014$ & $\mathbf{0.733\pm0.014}$ & $0.730\pm0.025$ \\
 &  & \textit{Qwen2.5-14B-Instruct} & $0.787\pm0.014$ & $\mathbf{0.803\pm0.014}$ & $0.797\pm0.029$ \\
 &  & \textit{Qwen2.5-72B-Instruct} & $0.806\pm0.017$ & $\mathbf{0.816\pm0.013}$ & $0.803\pm0.016$ \\[3pt]
 & \multirow{3}{*}{comm. styles} & \textit{GPT-5.4 Mini} & $0.730\pm0.025$ & $\mathbf{0.733\pm0.052}$ & $0.730\pm0.025$ \\
 &  & \textit{Qwen2.5-14B-Instruct} & $0.777\pm0.014$ & $\mathbf{0.787\pm0.038}$ & $0.763\pm0.057$ \\
 &  & \textit{Qwen2.5-72B-Instruct} & $0.799\pm0.047$ & $0.806\pm0.013$ & $\mathbf{0.806\pm0.036}$ \\[3pt]
 & \multirow{3}{*}{roles} & \textit{GPT-5.4 Mini} & $0.727\pm0.029$ & $\mathbf{0.737\pm0.014}$ & $0.720\pm0.000$ \\
 &  & \textit{Qwen2.5-14B-Instruct} & $0.777\pm0.014$ & $0.793\pm0.052$ & $\mathbf{0.803\pm0.014}$ \\
 &  & \textit{Qwen2.5-72B-Instruct} & $\mathbf{0.809\pm0.027}$ & $0.806\pm0.030$ & $0.799\pm0.003$ \\
\bottomrule
\end{tabular}
\end{table}

\section{Additional Plots}
\label{app:addplots}
We present additional plots for our experiments. More results can be found in the supplementary material. 
We see the same variability of the FJ parameters and tendency towards consensus in the other datasets and when using \emph{Qwen2.5-14B-Instruct} and \textit{Qwen2.5-72B-Instruct-GPTQ-Int8} as the underlying models (Figs.~\ref{fig:app:para_var_gpt_neutral},~\ref{fig:app:para_var_qwen_neutral} and~\ref{fig:app:para_var_qwenl_neutral}).

Figs.~\ref{fig:app:con_inf_gpt_neutral},~\ref{fig:app:con_inf_qwen_neutral}, and~\ref{fig:app:con_inf_qwenl_neutral} show the (peer-)influence of agents in relation to their confidence. We see that confident agents tend to be more influential.

Fig.~\ref{fig:app:change_rate_neutral} shows the probability of an agent changing their answer to the majority answer when the agent is confident/not confident. We see that agents are more likely to change their answer if they are not confident. Interestingly, between the models, the \emph{Qwen2.5-14B-Instruct} agents are less confident, but also less likely to change their answer to the majority answer when initially disagreeing with it.

Figs.~\ref{fig:app:heatmaps_gpt},~\ref{fig:app:heatmaps_qwen}, and~\ref{fig:app:heatmaps_qwenl} show the impact of different diversifying prompts on the peer influence of the agents, grouped by question (super-)category on dataset MMLU-Pro. The influence of an agent seems to be driven also by perceived confidence, and roles seem to have a relatively big effect that is consistent over different models.

Fig.~\ref{fig:regression_class} shows coefficient for random forests regression influence and classifying the most influential agent in an MAS. We observe a good fit for regression and a near perfect fit on the classification task. This shows, the defined variables are predictive of FJ dynamics.

We investigate the relation between confidence and competence, as well as between influence and stubbornness in Fig.~\ref{fig:comp_conf_1}. We can see that an agents influence is highly correlated with its stubbornness. The relation between actual competence and influence is less prominent (see~Fig.~\ref{fig:infl_comp}).

Figs.~\ref{fig:app:confusion_matrices_gpt} and~\ref{fig:app:confusion_matrices_qwenl} show the number of cases where Eq.~\ref{eq:MASvsSingle} of Theorem~\ref{th:mas_vs_single} is satisfied for individual samples, compared to when it does not, and whether the MAS outperforms the best single agent. Theorem~\ref{th:mas_vs_single} states that, if Eq.\ref{eq:MASvsSingle} is satisfied, the MAS should outperform the best single agent, so large values in the upper left and lower right entries of the confusion matrix point to Theorem~\ref{th:mas_vs_single} holding.
We can see, that in most cases, the equation is satisfied, and routing wins for \textit{GPT-5.4 Mini}. For \emph{Qwen2.5-72B-Instruct-GPTQ-Int8} the MAS loses more often, but in most of these cases Eq.~\ref{eq:MASvsSingle} is not satisfied.

Figs.~\ref{fig:samples_heatmap} and~\ref{fig:samples_heatmap_2} show weight heatmaps for different samples of the MMLU-Pro dataset (with communication style prompts). We can see that the weight matrices are task dependent.

\begin{figure}
    \centering
    \begin{subfigure}{0.7\textwidth}
        \includegraphics[width=\textwidth]{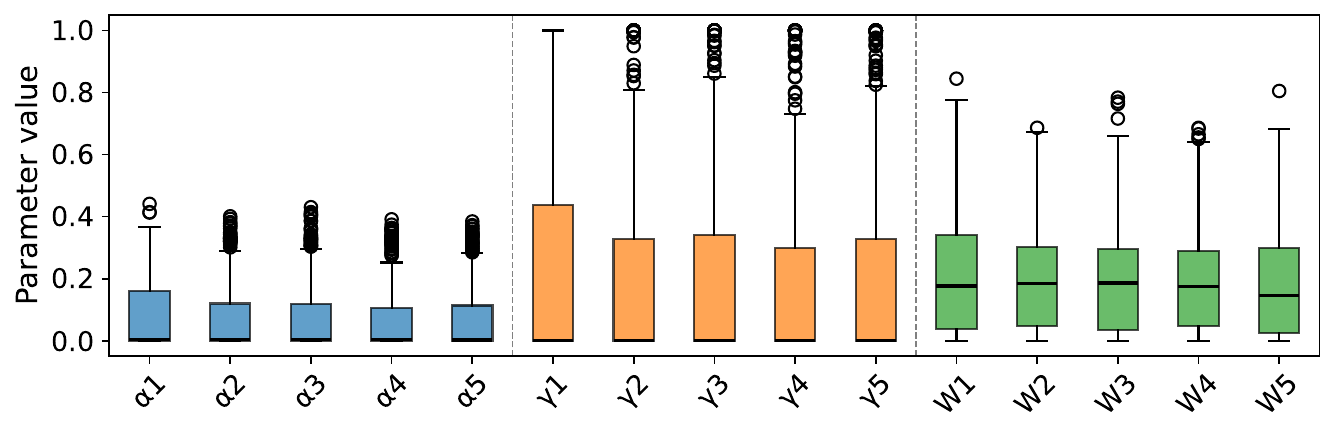}
        \subcaption{MMLU-Pro}
        \label{fig:mmlu_pro_gpt_n_var}
    \end{subfigure}
    \begin{subfigure}{0.25\textwidth}
        \includegraphics[width=\textwidth]{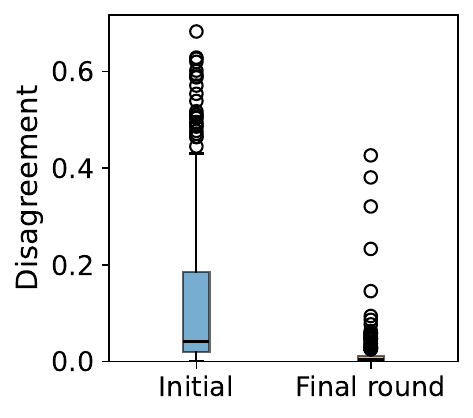}
        \subcaption{Belief disagreement.}
        \label{fig:mmlu_pro_concensus_box_gpt_n}
    \end{subfigure}
    \begin{subfigure}{0.7\textwidth}
        \includegraphics[width=\textwidth]{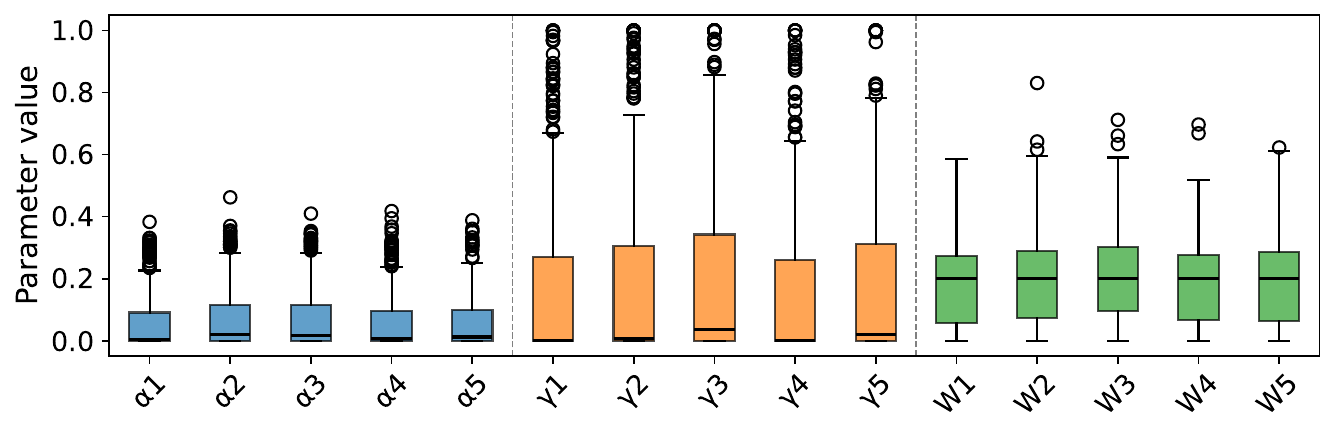}
        \subcaption{BBQ}
        \label{fig:bbq_gpt_n_var}
    \end{subfigure}
    \begin{subfigure}{0.25\textwidth}
        \includegraphics[width=\textwidth]{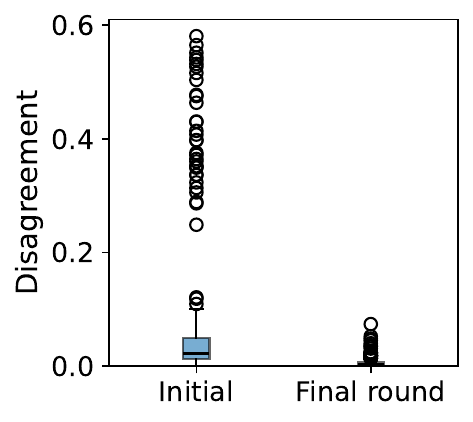}
        \subcaption{Belief disagreement.}
        \label{fig:bbq_concensus_box_gpt_n}
    \end{subfigure}
    \begin{subfigure}{0.7\textwidth}
        \includegraphics[width=\textwidth]{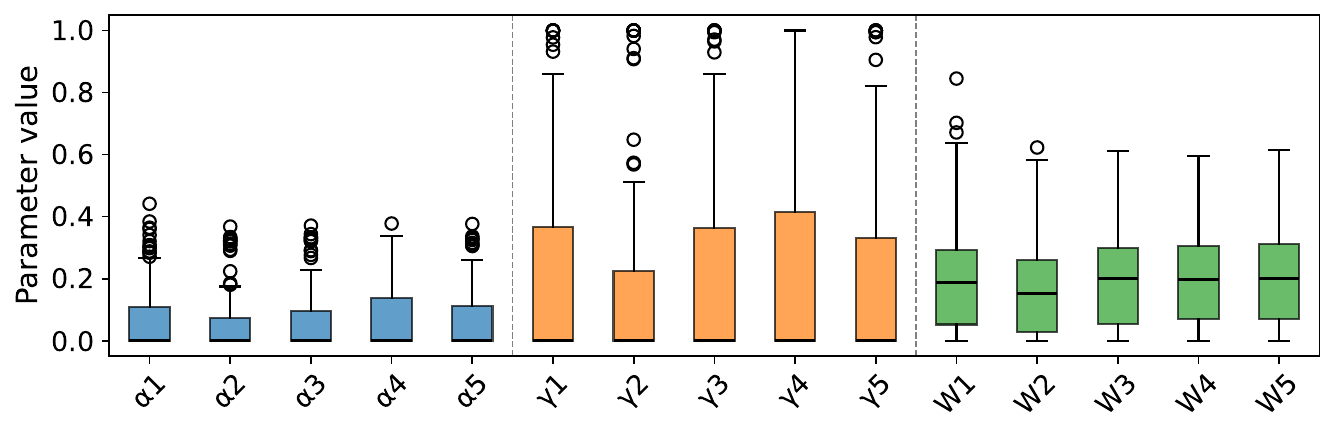}
        \subcaption{CSQA}
        \label{fig:csqa_gpt_n_var}
    \end{subfigure}
    \begin{subfigure}{0.25\textwidth}
        \includegraphics[width=\textwidth]{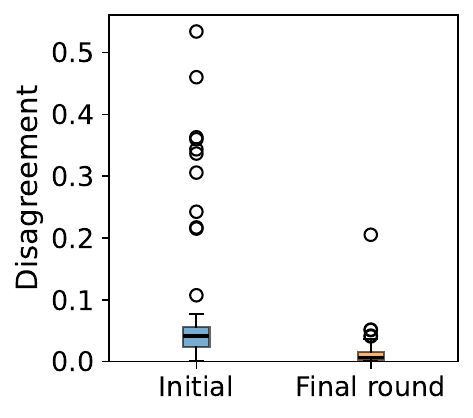}
        \subcaption{Belief disagreement.}
        \label{fig:csqa_concensus_box_gpt_n}
    \end{subfigure}

\caption{Parameter variability over all samples no special prompts for \textit{GPT-5.4 Mini}, receiving weights (W1-5) are averaged over senders (left). Tendency towards consensus for the corresponding dataset (right).}
    \label{fig:app:para_var_gpt_neutral}
\end{figure}

\begin{figure}
    \centering
    \begin{subfigure}{0.7\textwidth}
        \includegraphics[width=\textwidth]{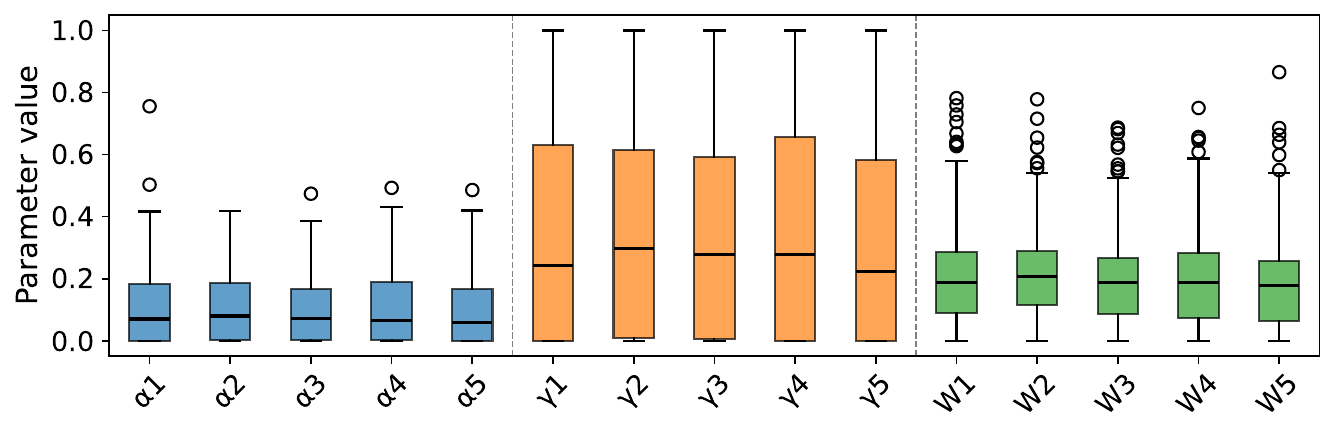}
        \subcaption{MMLU-Pro}
        \label{fig:mmlu_pro_qwen_n_var}
    \end{subfigure}
    \begin{subfigure}{0.25\textwidth}
        \includegraphics[width=\textwidth]{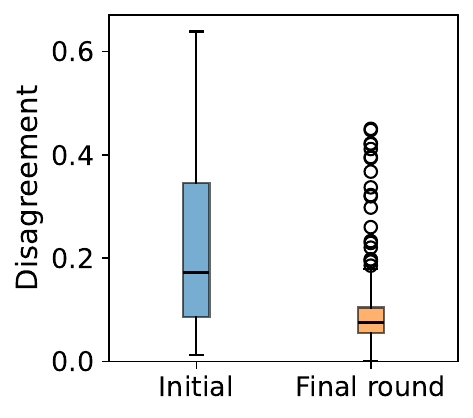}
        \subcaption{Belief disagreement.}
        \label{fig:mmlu_pro_concensus_box_qwen_n}
    \end{subfigure}
    \begin{subfigure}{0.7\textwidth}
        \includegraphics[width=\textwidth]{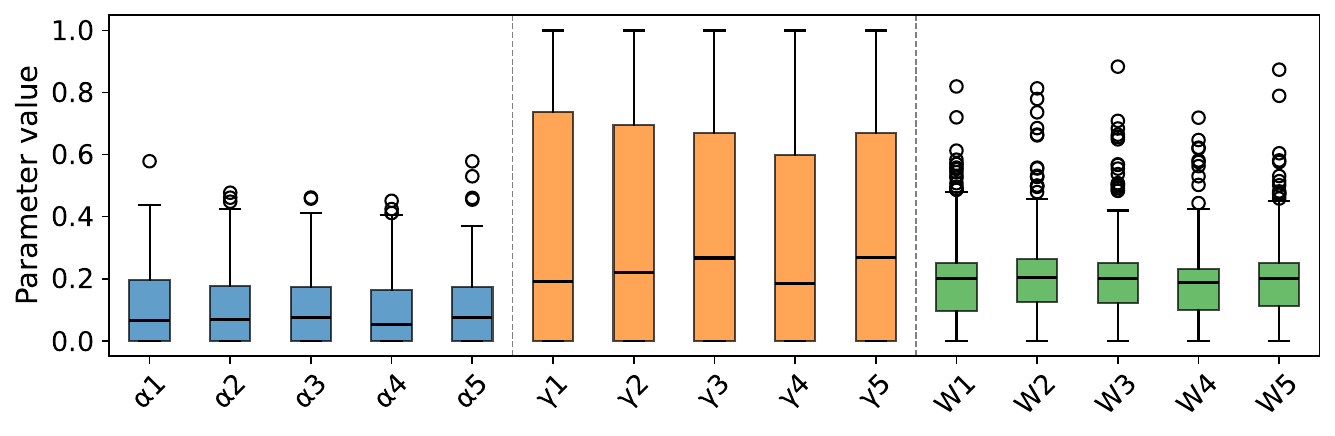}
        \subcaption{BBQ}
        \label{fig:bbq_qwen_n_var}
    \end{subfigure}
    \begin{subfigure}{0.25\textwidth}
        \includegraphics[width=\textwidth]{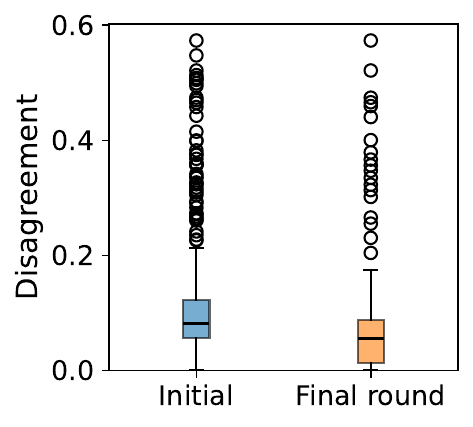}
        \subcaption{Belief disagreement.}
        \label{fig:bbq_concensus_box_qwen_n}
    \end{subfigure}
    \begin{subfigure}{0.7\textwidth}
        \includegraphics[width=\textwidth]{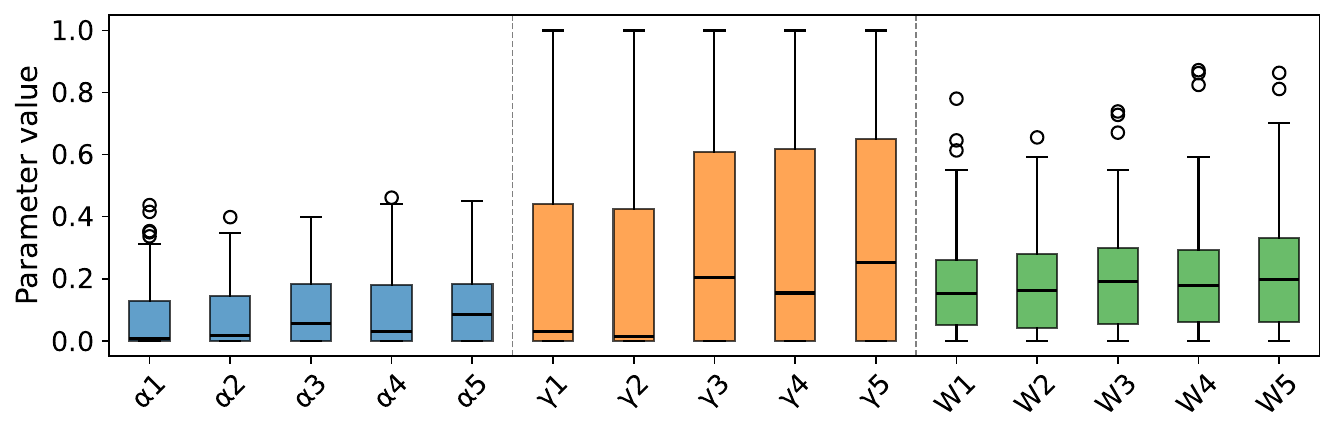}
        \subcaption{CSQA}
        \label{fig:csqa_qwen_n_var}
    \end{subfigure}
    \begin{subfigure}{0.25\textwidth}
        \includegraphics[width=\textwidth]{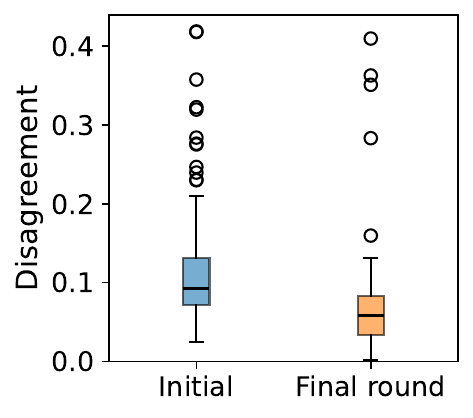}
        \subcaption{Belief disagreement.}
        \label{fig:csqa_concensus_box_qwen_n}
    \end{subfigure}

\caption{Parameter variability over all samples no special prompts for \emph{Qwen2.5-14B-Instruct}, receiving weights (W1-5) are averaged over senders (left). Tendency towards consensus for the corresponding dataset (right).}
    \label{fig:app:para_var_qwen_neutral}
\end{figure}

\begin{figure}
    \centering
    \begin{subfigure}{0.7\textwidth}
        \includegraphics[width=\textwidth]{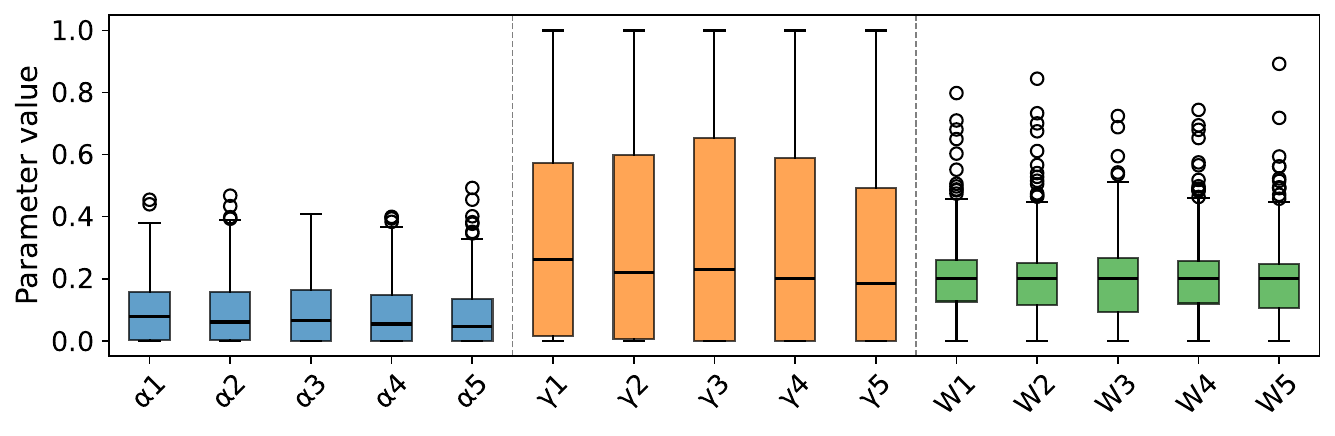}
        \subcaption{MMLU-Pro}
        \label{fig:mmlu_pro_qwenl_n_var}
    \end{subfigure}
    \begin{subfigure}{0.25\textwidth}
        \includegraphics[width=\textwidth]{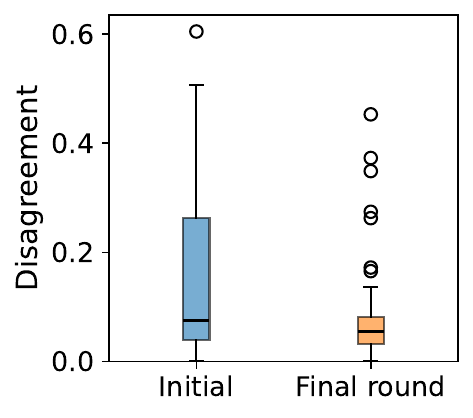}
        \subcaption{Belief disagreement.}
        \label{fig:mmlu_pro_concensus_box_qwenl_n}
    \end{subfigure}
    \begin{subfigure}{0.7\textwidth}
        \includegraphics[width=\textwidth]{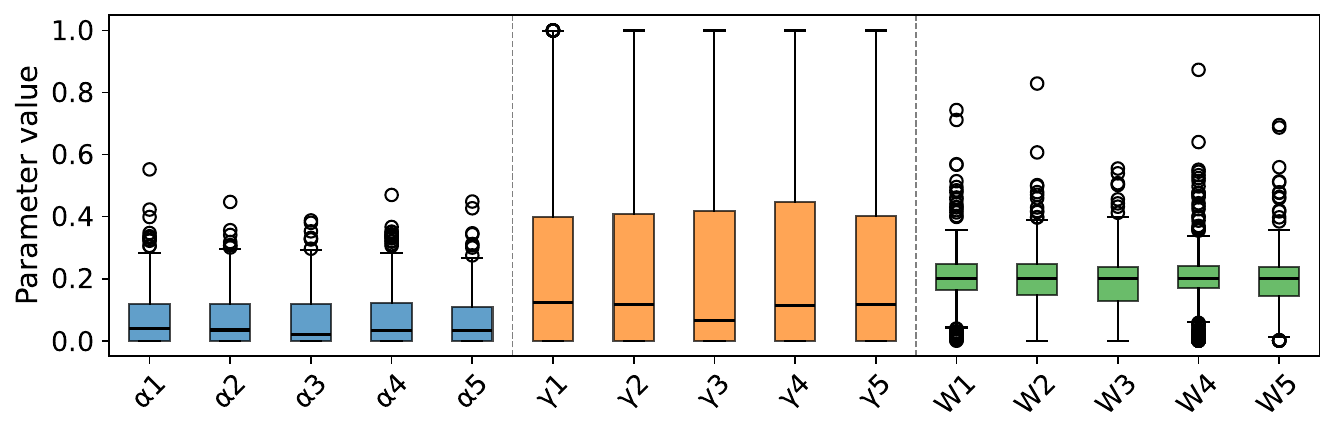}
        \subcaption{BBQ}
        \label{fig:bbq_qwenl_n_var}
    \end{subfigure}
    \begin{subfigure}{0.25\textwidth}
        \includegraphics[width=\textwidth]{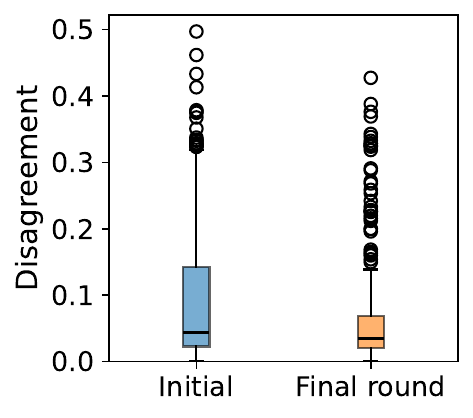}
        \subcaption{Belief disagreement.}
        \label{fig:bbq_concensus_box_qwenl_n}
    \end{subfigure}
    \begin{subfigure}{0.7\textwidth}
        \includegraphics[width=\textwidth]{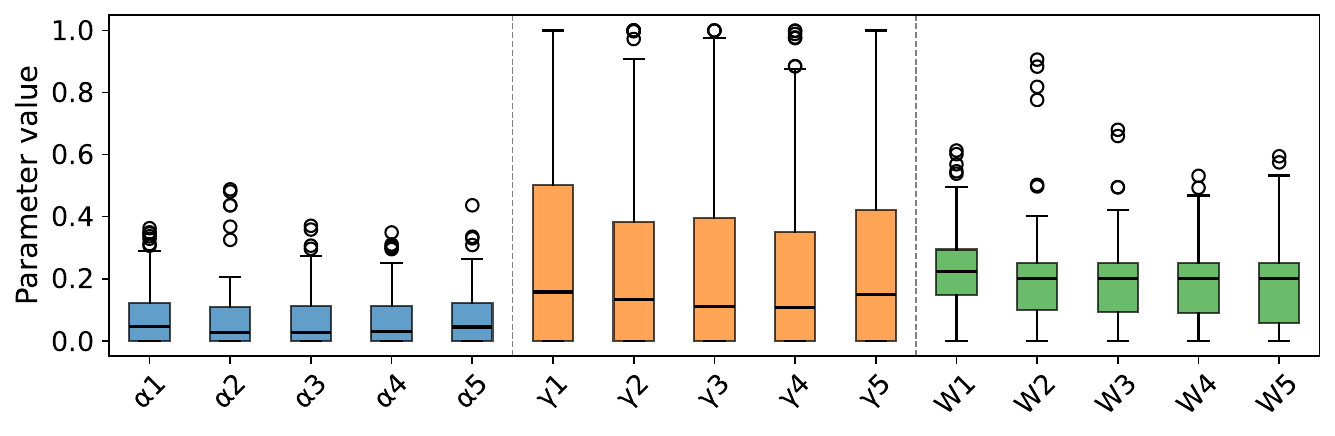}
        \subcaption{CSQA}
        \label{fig:csqa_qwenl_n_var}
    \end{subfigure}
    \begin{subfigure}{0.25\textwidth}
        \includegraphics[width=\textwidth]{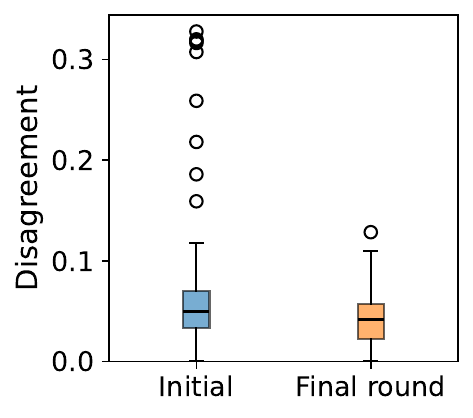}
        \subcaption{Belief disagreement.}
        \label{fig:csqa_concensus_box_qwenl_n}
    \end{subfigure}

\caption{Parameter variability over all samples no special prompts for \emph{Qwen2.5-72B-Instruct-GPTQ-Int8}, receiving weights (W1-5) are averaged over senders (left). Tendency towards consensus for the corresponding dataset (right).}
    \label{fig:app:para_var_qwenl_neutral}
\end{figure}

\begin{figure}
    \centering
    \begin{subfigure}{\textwidth}
    \centering
    \includegraphics[width=0.45\linewidth]{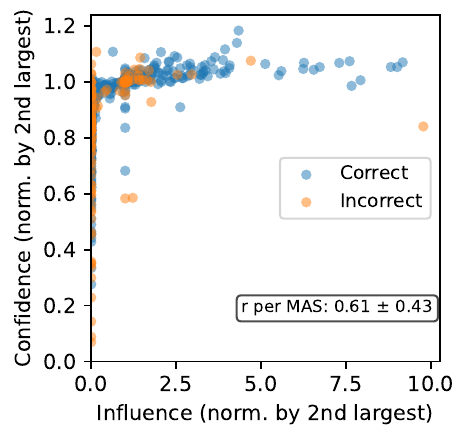}
    \hfill
    \includegraphics[width=0.45\linewidth]{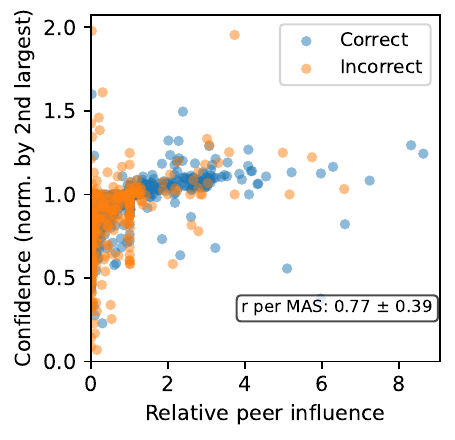}
    \subcaption{MMLU-Pro: Confidence vs. (Peer-)Influence}
  \end{subfigure}
  \begin{subfigure}{\textwidth}
    \centering
    \includegraphics[width=0.45\linewidth]{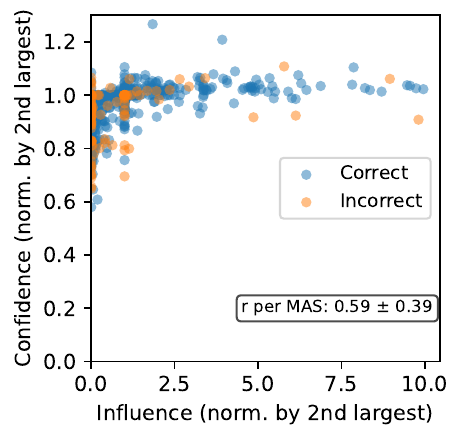}
    \hfill
    \includegraphics[width=0.45\linewidth]{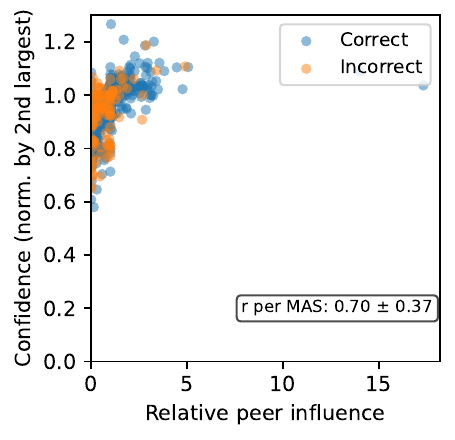}
    \subcaption{BBQ: Confidence vs. (Peer-)Influence}
  \end{subfigure}
   \begin{subfigure}{\textwidth}
    \centering
    \includegraphics[width=0.45\linewidth]{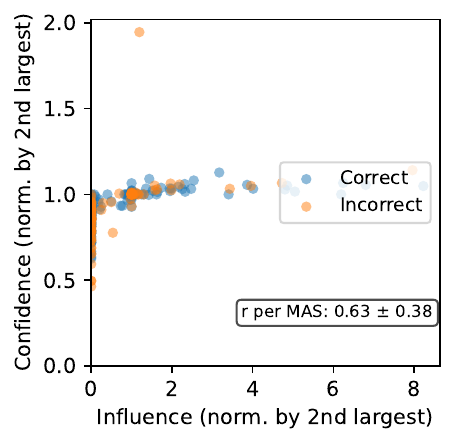}
    \hfill
    \includegraphics[width=0.45\linewidth]{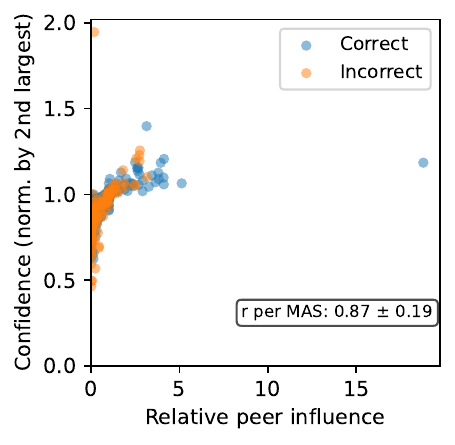}
    \subcaption{CSQA: Confidence vs. (Peer-)Influence}
  \end{subfigure}
  \caption{Confidence vs. Influence (left) and confidence vs. peer influence (right) with no special prompts for \textit{GPT-5.4 Mini}.}
  \label{fig:app:con_inf_gpt_neutral}
\end{figure}

\begin{figure}
    \centering
    \begin{subfigure}{\textwidth}
    \centering
    \includegraphics[width=0.45\linewidth]{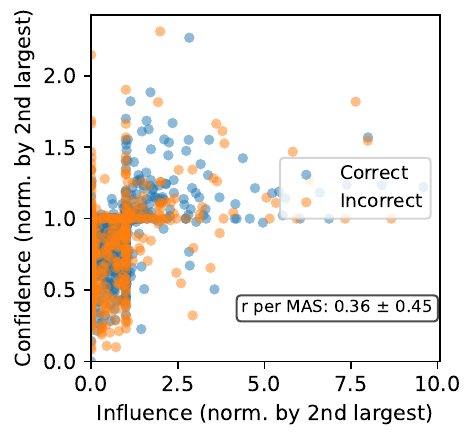}
    \hfill
    \includegraphics[width=0.45\linewidth]{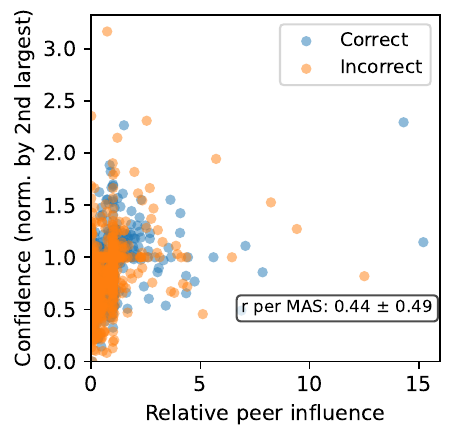}
    \subcaption{MMLU-Pro: Confidence vs. (Peer-)Influence}
  \end{subfigure}
  \begin{subfigure}{\textwidth}
    \centering
    \includegraphics[width=0.45\linewidth]{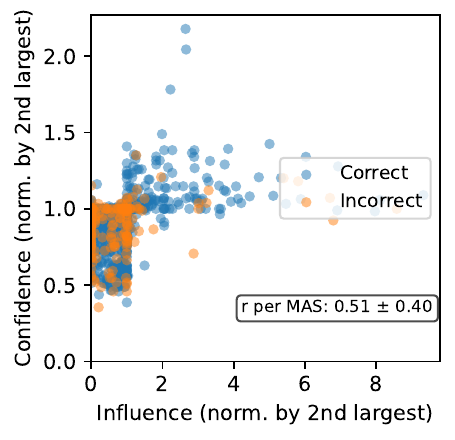}
    \hfill
    \includegraphics[width=0.45\linewidth]{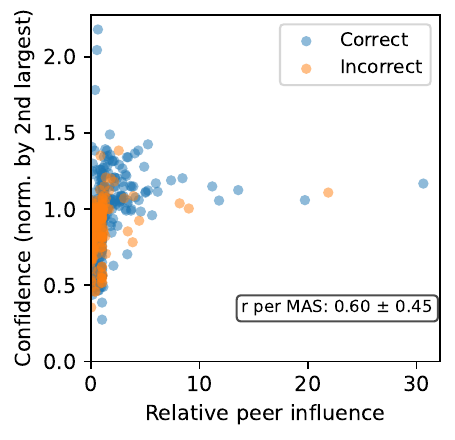}
    \subcaption{BBQ: Confidence vs. (Peer-)Influence}
  \end{subfigure}
   \begin{subfigure}{\textwidth}
    \centering
    \includegraphics[width=0.45\linewidth]{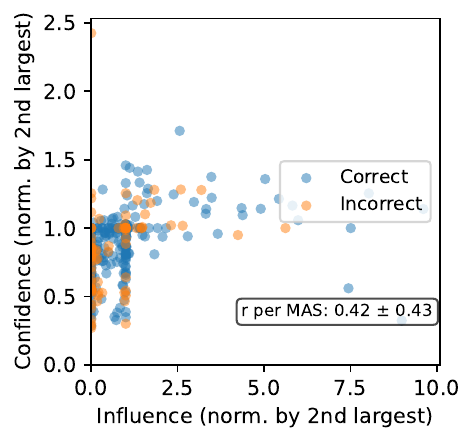}
    \hfill
    \includegraphics[width=0.45\linewidth]{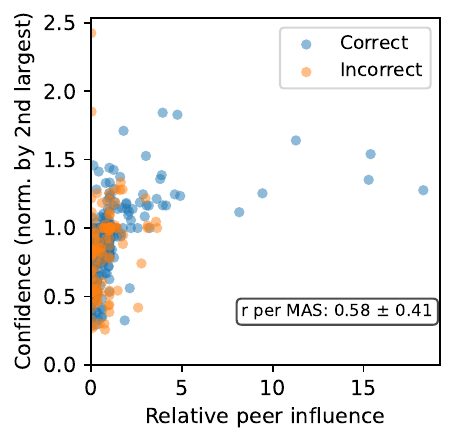}
    \subcaption{CSQA: Confidence vs. (Peer-)Influence}
  \end{subfigure}
  \caption{Confidence vs. Influence (left) and confidence vs. peer influence (right) with no special prompts for \textit{Qwen2.5-14B-Instruct}.}
  \label{fig:app:con_inf_qwen_neutral}
\end{figure}

\begin{figure}
    \centering
    \begin{subfigure}{\textwidth}
    \centering
    \includegraphics[width=0.45\linewidth]{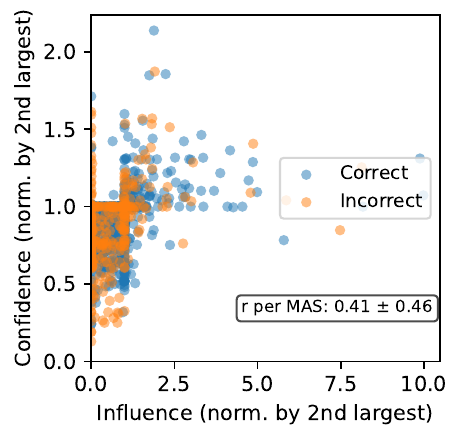}
    \hfill
    \includegraphics[width=0.45\linewidth]{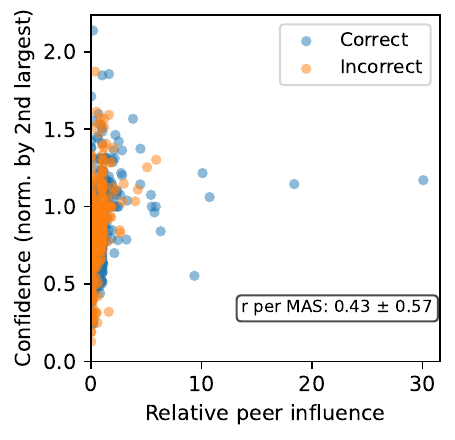}
    \subcaption{MMLU-Pro: Confidence vs. (Peer-)Influence}
  \end{subfigure}
  \begin{subfigure}{\textwidth}
    \centering
    \includegraphics[width=0.45\linewidth]{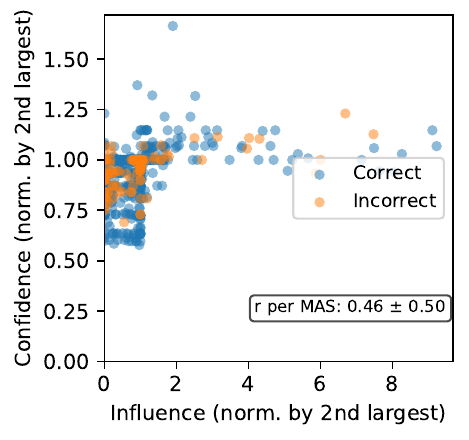}
    \hfill
    \includegraphics[width=0.45\linewidth]{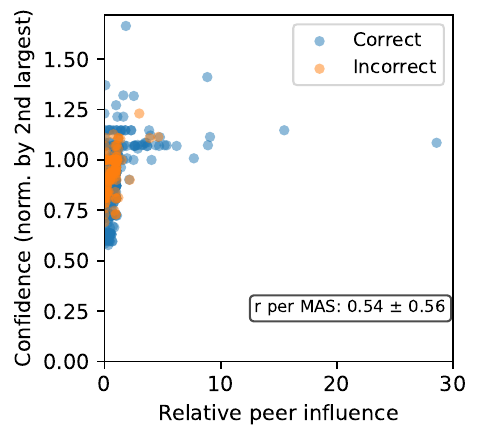}
    \subcaption{BBQ: Confidence vs. (Peer-)Influence}
  \end{subfigure}
   \begin{subfigure}{\textwidth}
    \centering
    \includegraphics[width=0.45\linewidth]{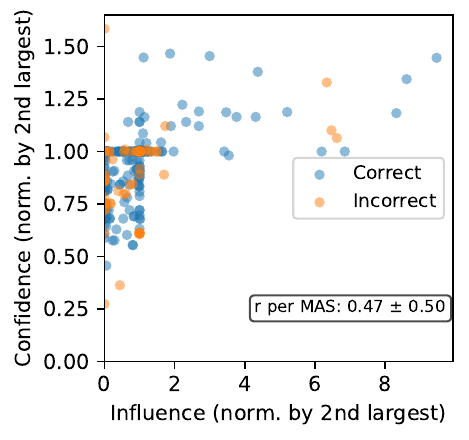}
    \hfill
    \includegraphics[width=0.45\linewidth]{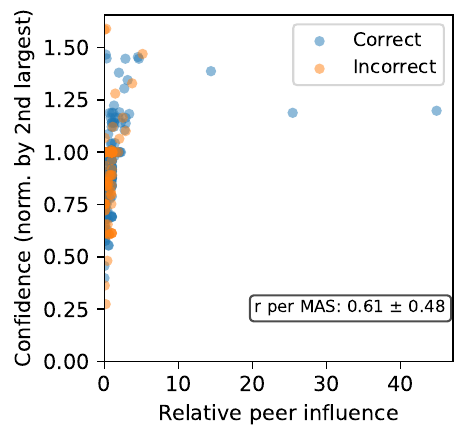}
    \subcaption{CSQA: Confidence vs. (Peer-)Influence}
  \end{subfigure}
  \caption{Confidence vs. Influence (left) and confidence vs. peer influence (right) with no special prompts for \textit{Qwen2.5-72B-Instruct-GPTQ-Int8}.}
  \label{fig:app:con_inf_qwenl_neutral}
\end{figure}

\begin{sidewaysfigure}
    \centering
    \begin{subfigure}{\textwidth}
    \centering
    \includegraphics[width=0.32\linewidth]{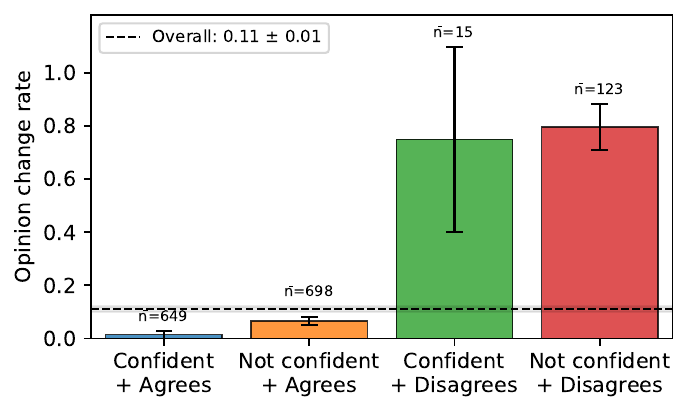}
    \hfill
    \includegraphics[width=0.32\linewidth]{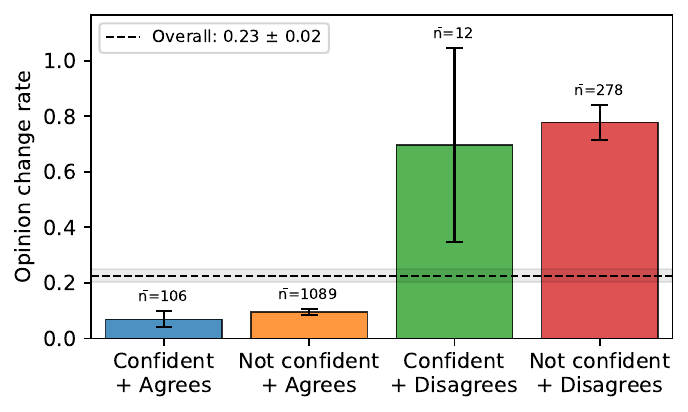}
    \hfill
    \includegraphics[width=0.32\linewidth]{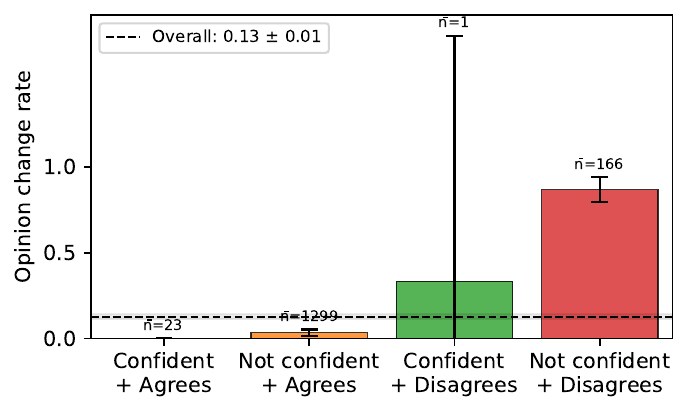}
    \subcaption{MMLU-Pro}
  \end{subfigure}
  \begin{subfigure}{\textwidth}
    \centering
    \includegraphics[width=0.32\linewidth]{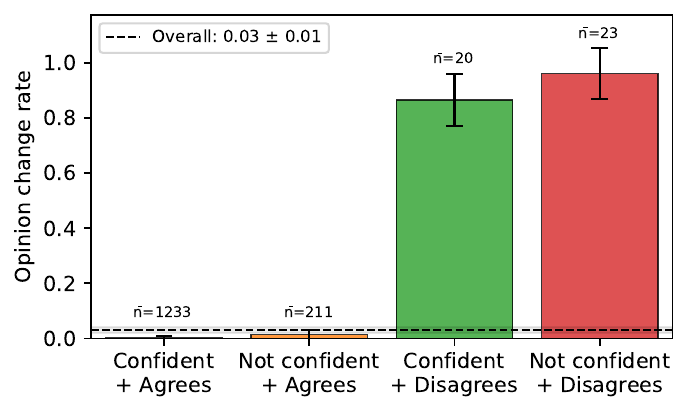}
    \hfill
    \includegraphics[width=0.32\linewidth]{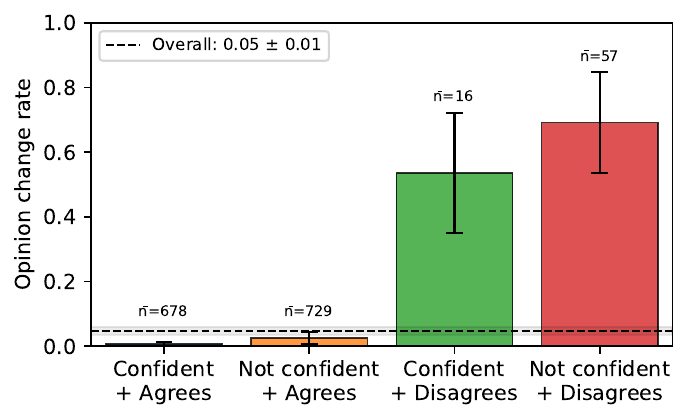}
    \hfill
    \includegraphics[width=0.32\linewidth]{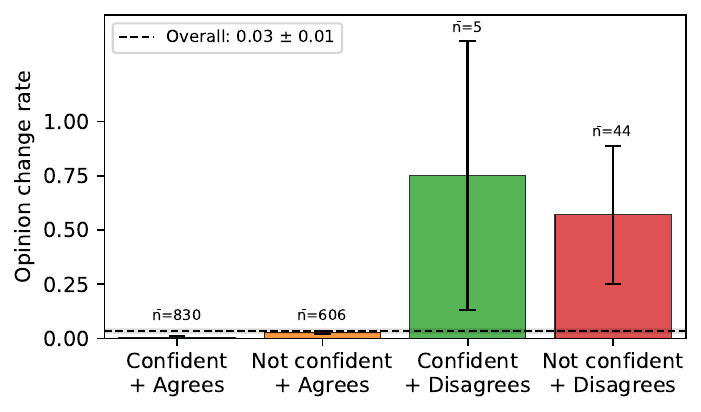}
    \subcaption{BBQ}
  \end{subfigure}
   \begin{subfigure}{\textwidth}
    \centering
    \includegraphics[width=0.32\linewidth]{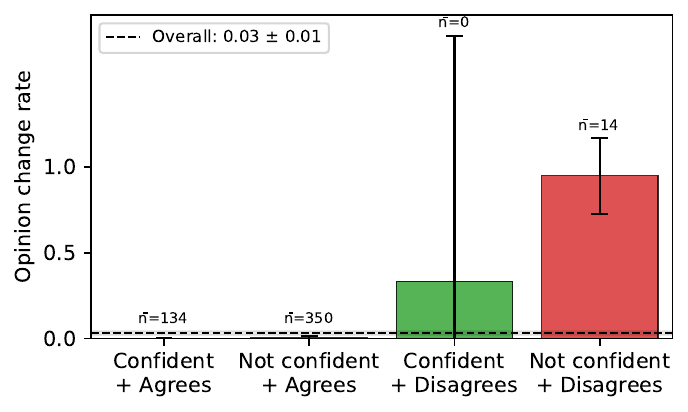}
    \hfill
    \includegraphics[width=0.32\linewidth]{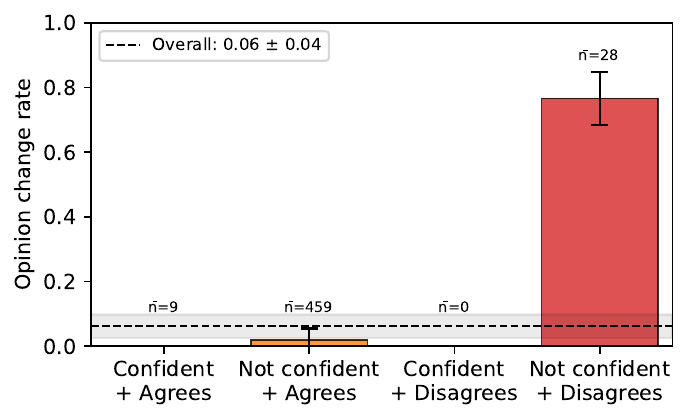}
    \hfill
    \includegraphics[width=0.32\linewidth]{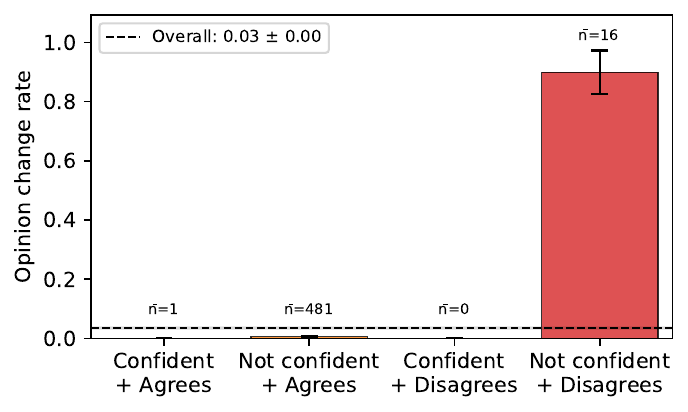}
    \subcaption{CSQA}
  \end{subfigure}
  \caption{Probability of changing answer to the majority answer for cases where the agent is confident, $p(\text{change}|\text{confident} >= 0.75, \text{agree\_with\_mean}=True)$, $\bar{n}$ indicates average sample size over seeds rounded to nearest integer. With no special prompts for \textit{GPT-5.4 Mini} (left), \textit{Qwen2.5-14B-Instruct} (middle), and \emph{Qwen2.5-72B-Instruct-GPTQ-Int8} (right).}
  \label{fig:app:change_rate_neutral}
\end{sidewaysfigure}

\begin{figure}
    \centering
    \begin{subfigure}{0.47\textwidth}
    \centering
    \includegraphics[width=\linewidth]{mmlu_gpt_roles_5a_complete_peer_inf_v_cat.pdf}
    \subcaption{Peer influence for different roles.}
    \end{subfigure}
    \begin{subfigure}{0.45\textwidth}
    \centering
    \includegraphics[width=\linewidth]{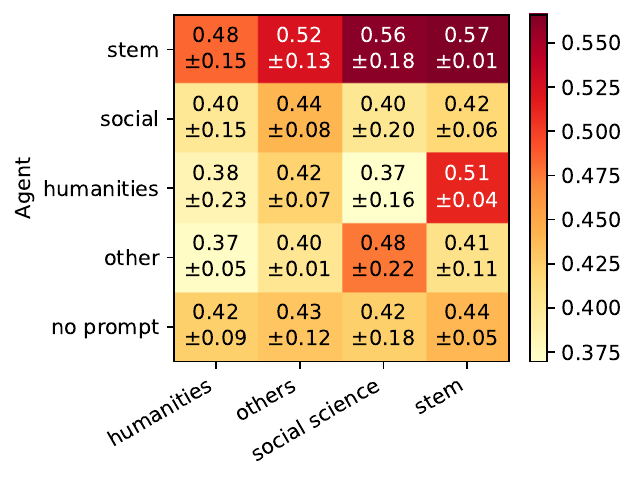}
    \subcaption{Peer influence for different experts.}
    \end{subfigure}
    \begin{subfigure}{0.47\textwidth}
    \centering
    \includegraphics[width=\linewidth]{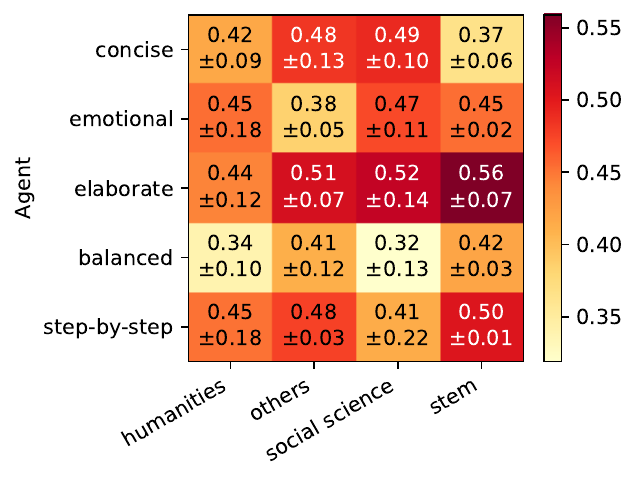}
    \subcaption{Peer influence for different communication styles.}
    \end{subfigure}
    \begin{subfigure}{0.45\textwidth}
    \centering
    \includegraphics[width=\linewidth]{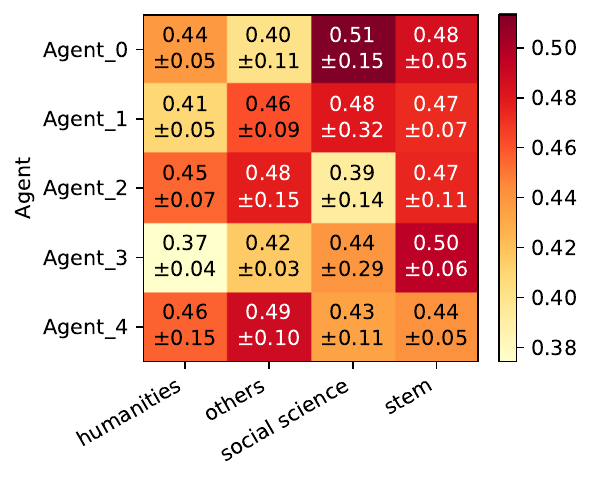}
    \subcaption{Peer influence without special prompts.}
    \end{subfigure}
  \caption{Results for \textit{GPT-5.4 Mini} on the MMLU-Pro dataset. Roles seem to have an the biggest influence on perceived confidence as seen by the weights agents receive from the others.}
  \label{fig:app:heatmaps_gpt}
\end{figure}

\begin{figure}
    \centering
    \begin{subfigure}{0.47\textwidth}
    \centering
    \includegraphics[width=\linewidth]{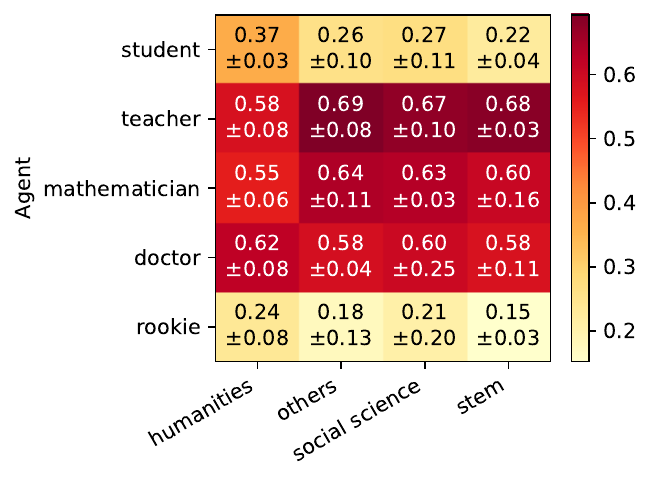}
    \subcaption{Peer influence for different roles.}
    \end{subfigure}
    \begin{subfigure}{0.45\textwidth}
    \centering
    \includegraphics[width=\linewidth]{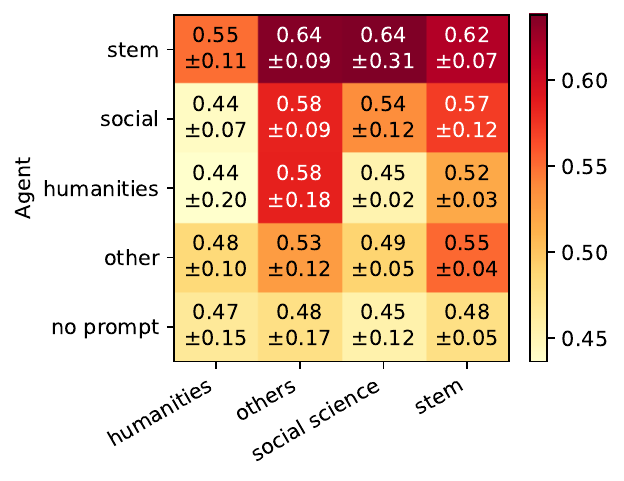}
    \subcaption{Peer influence for different experts.}
    \end{subfigure}
    \begin{subfigure}{0.47\textwidth}
    \centering
    \includegraphics[width=\linewidth]{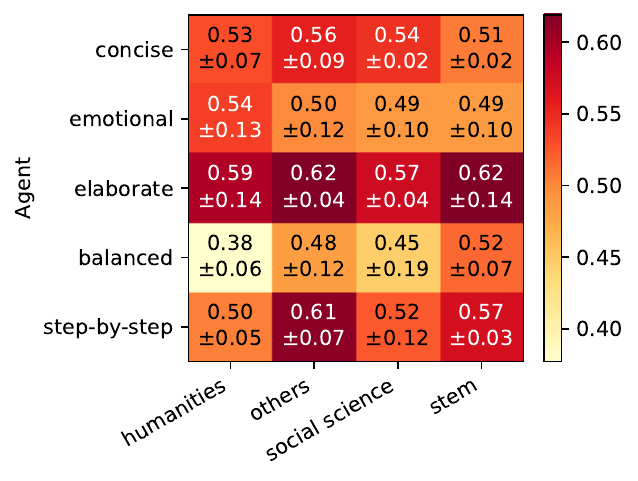}
    \subcaption{Peer influence for different communication styles.}
    \end{subfigure}
    \begin{subfigure}{0.45\textwidth}
    \centering
    \includegraphics[width=\linewidth]{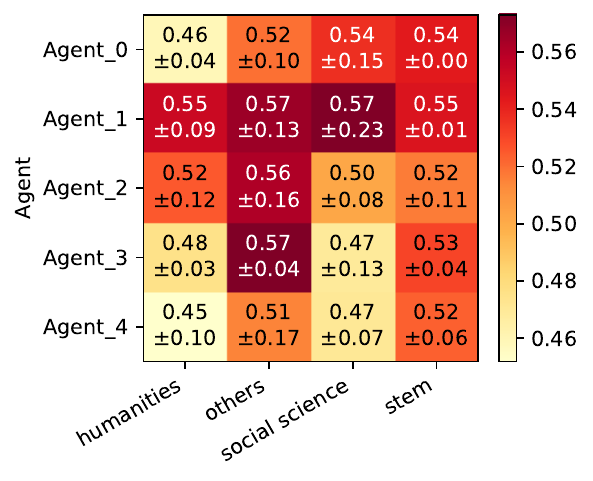}
    \subcaption{Peer influence without special prompts.}
    \end{subfigure}
  \caption{Results for \textit{Qwen2.5-14B-Instruct} on the MMLU-Pro dataset. The same tendencies as for \textit{GPT-5.4 Mini} can be seen.}
  \label{fig:app:heatmaps_qwen}
\end{figure}

\begin{figure}
    \centering
    \begin{subfigure}{0.47\textwidth}
    \centering
    \includegraphics[width=\linewidth]{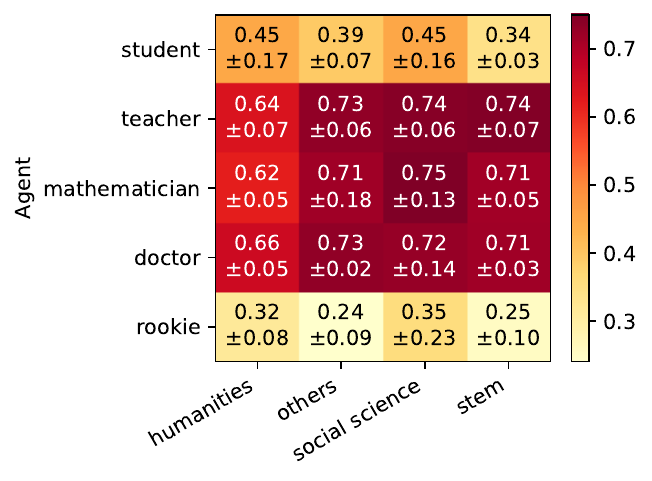}
    \subcaption{Peer influence for different roles.}
    \end{subfigure}
    \begin{subfigure}{0.45\textwidth}
    \centering
    \includegraphics[width=\linewidth]{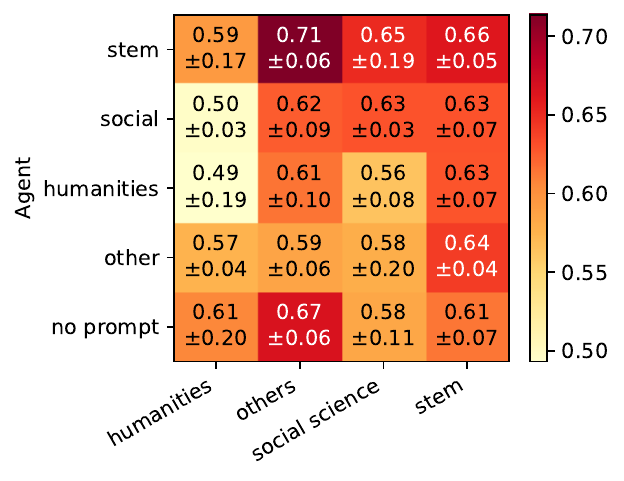}
    \subcaption{Peer influence for different experts.}
    \end{subfigure}
    \begin{subfigure}{0.47\textwidth}
    \centering
    \includegraphics[width=\linewidth]{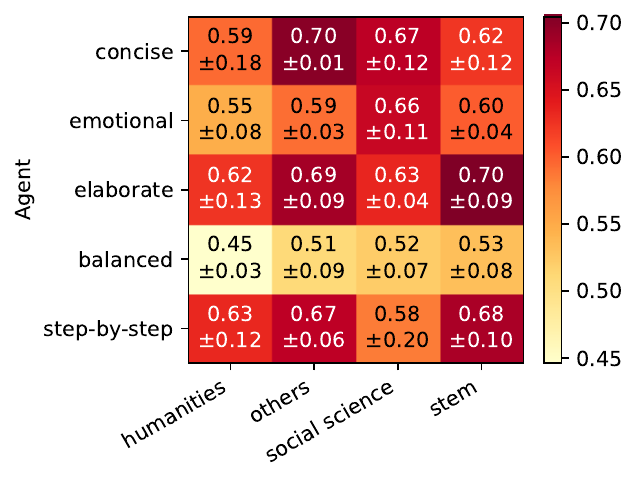}
    \subcaption{Peer influence for different communication styles.}
    \end{subfigure}
     \begin{subfigure}{0.45\textwidth}
    \centering
    \includegraphics[width=\linewidth]{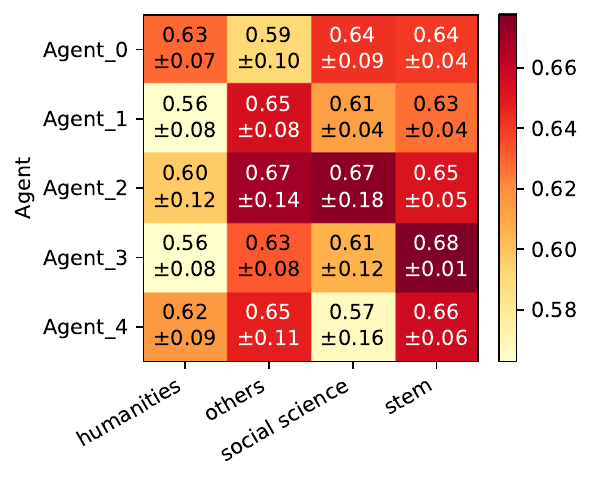}
    \subcaption{Peer influence without special prompts.}
    \end{subfigure}
  \caption{Results for \textit{Qwen2.5-72B-Instruct-GPTQ-Int8} on the MMLU-Pro dataset. The same tendencies as for \textit{GPT-5.4 Mini} can be seen.}
  \label{fig:app:heatmaps_qwenl}
\end{figure}

\begin{figure}
    \centering
    \begin{subfigure}{0.45\textwidth}
        \includegraphics[width=\textwidth]{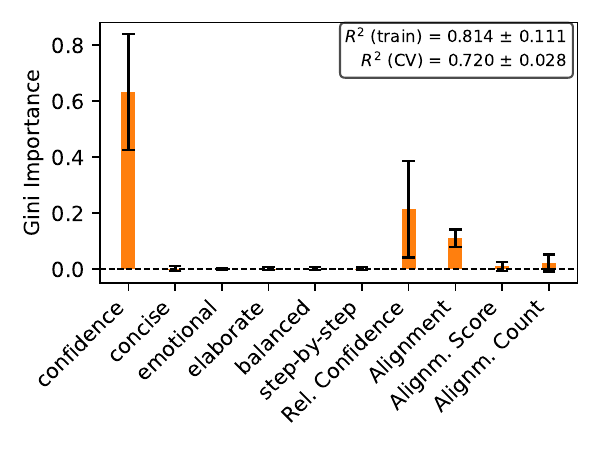}
    \end{subfigure}
    \begin{subfigure}{0.45\textwidth}
        \includegraphics[width=\textwidth]{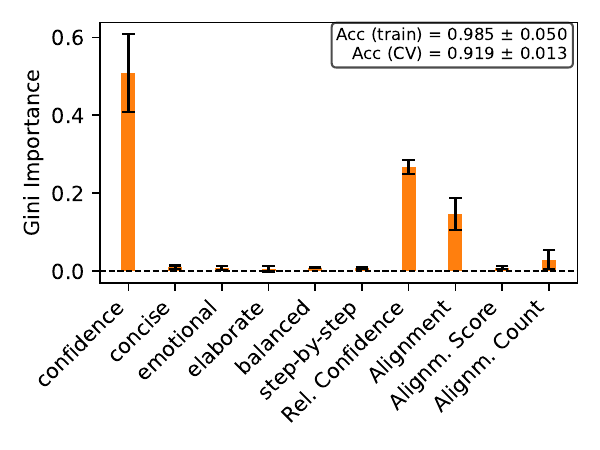}
    \end{subfigure}
    \caption{Results for the MMLU-Pro dataset with \textit{GPT-5.4 Mini} and communication style prompts. Coefficients for a random forest regression predicting influence (a) and a random forest classifying the most influential agent in a MAS (b), with reported CV $R^2$ and $Acc$ respectively. The high CV score, especially on the classification task (b) suggests the defined variables are highly predictive of the FJ dynamics.}
    \label{fig:regression_class}
\end{figure}

\begin{figure}
    \centering
    \begin{subfigure}{0.33\textwidth}
        \includegraphics[width=\textwidth]{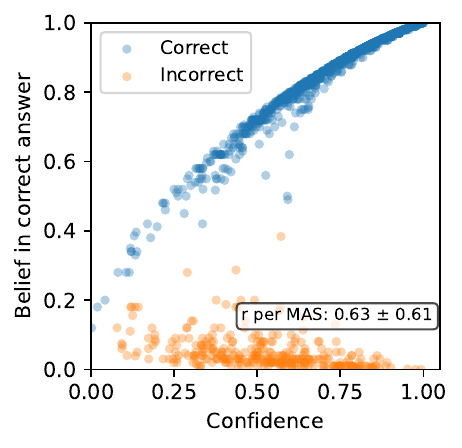}
        \subcaption{Confidence.}       \label{fig:competence_vs_confidence_2nd}
    \end{subfigure}
    \begin{subfigure}{0.37\textwidth}
        \includegraphics[width=\textwidth]{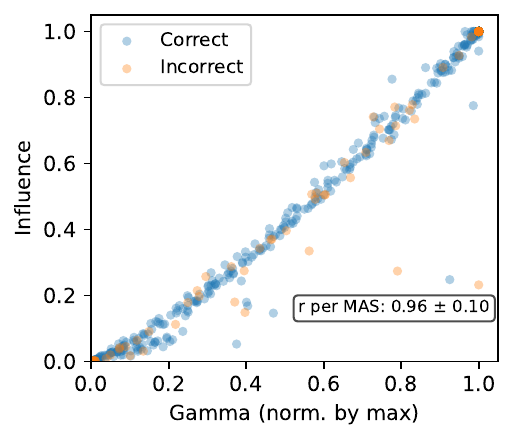}
        \subcaption{Influence vs. $\gamma$.}
        \label{fig:pi_influence_gamma}
    \end{subfigure}
    \caption{Results for the MMLU-Pro dataset with \textit{GPT-5.4 Mini} and communication style prompts. When relating confidence and relative confidence to competence (a), we see a clear trend: confidence agents are more competent. We report Spearman's ranked correlation per MAS. (b) An agents influence is related to its stubbornness $\gamma$.}
    \label{fig:comp_conf_1}
\end{figure}

\begin{figure}
    \centering
    \begin{subfigure}{0.37\textwidth}
        \includegraphics[width=\textwidth]{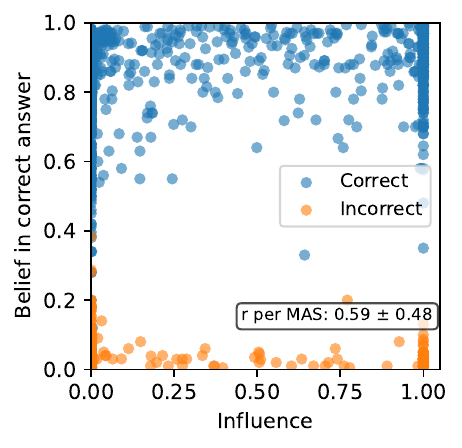}
    \end{subfigure}
    \begin{subfigure}{0.37\textwidth}
        \includegraphics[width=\textwidth]{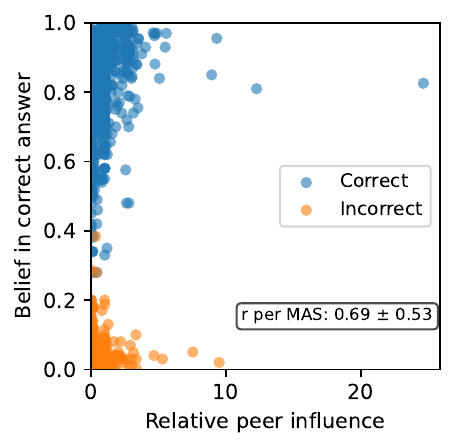}
    \end{subfigure}
    \caption{Results for the MMLU-Pro dataset with \textit{GPT-5.4 Mini} and communication style prompts. More competent agents gain more influence (a) and relative influence (b), we again report Spearman's ranked correlation per MAS.}
    \label{fig:infl_comp}
\end{figure}

\begin{figure}
    \centering
    \includegraphics[width=\linewidth]{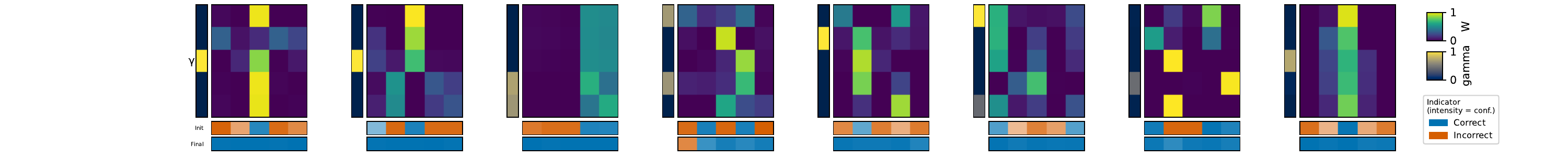}
    \caption{Exemplary questions of MMLU-Pro with communication style prompts with \textit{GPT-5.4 Mini}, where the MAS outperforms the optimal ensemble. Weight heatmaps with indicators whether an agent was right (blue) or wrong (orange) initially (top) and in the final round (bottom), showing the $\gamma$ (stubbornness) of each agent on the side (yellow being highest).}
    \label{fig:sample_selected}
\end{figure}

\begin{figure}
    \centering
    \begin{subfigure}{0.47\textwidth}
        \centering
        \includegraphics[width=\linewidth]{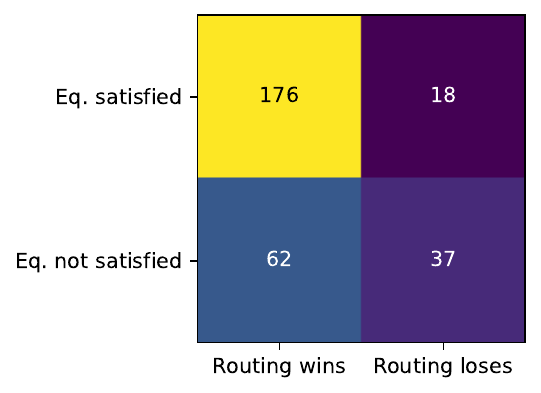}
        \subcaption{Roles}
    \end{subfigure}
    \hfill
    \begin{subfigure}{0.47\textwidth}
        \centering
        \includegraphics[width=\linewidth]{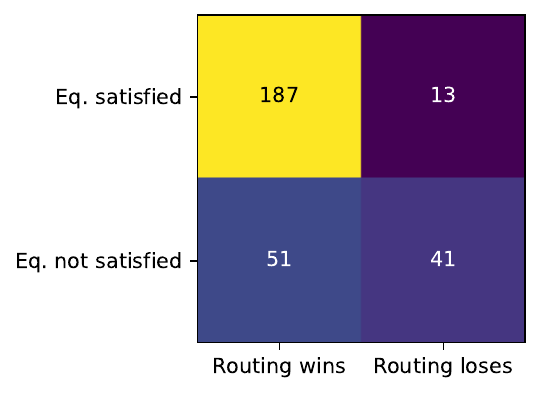}
        \subcaption{Communication styles}
    \end{subfigure}

    \vspace{0.5em}

    \begin{subfigure}{0.47\textwidth}
        \centering
        \includegraphics[width=\linewidth]{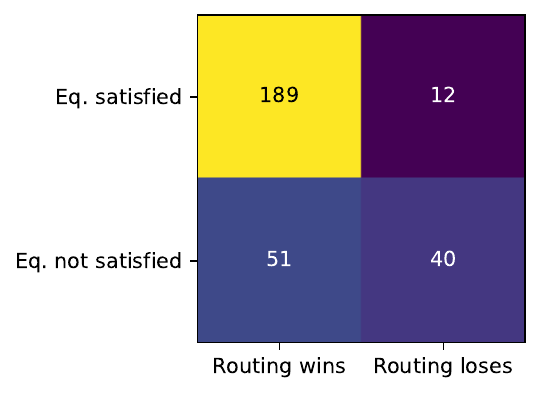}
        \subcaption{Experts}
    \end{subfigure}
    \hfill
    \begin{subfigure}{0.47\textwidth}
        \centering
        \includegraphics[width=\linewidth]{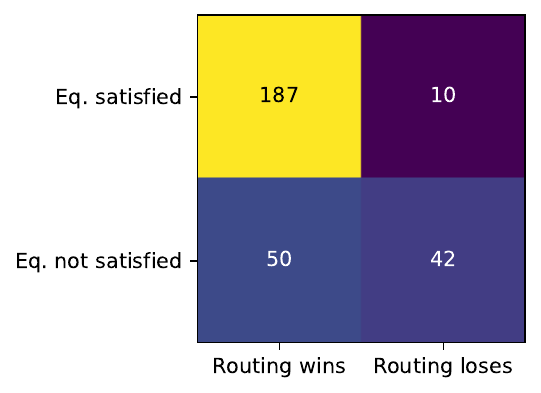}
        \subcaption{Neutral}
    \end{subfigure}

    \caption{Confusion matrices showing empirical result of whether per sample version of Eq.~\ref{eq:MASvsSingle} holds (Theorem~\ref{th:mas_vs_single}) for different prompt styles of GPT on MMLU-Pro, averaged over seeds. On average, MoE routing outperforms a single agent if Eq.~\ref{eq:MASvsSingle} holds.}
    \label{fig:app:confusion_matrices_gpt}
\end{figure}

\begin{figure}
    \centering
    \begin{subfigure}{0.47\textwidth}
        \centering
        \includegraphics[width=\linewidth]{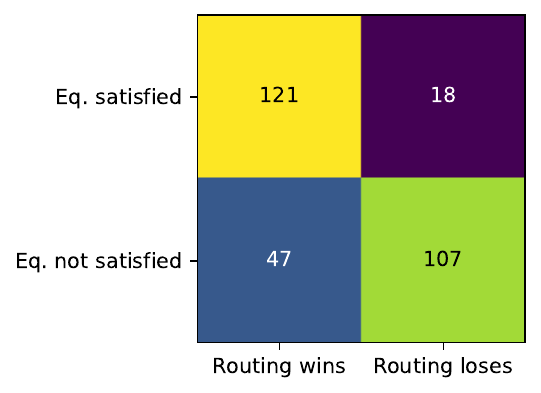}
        \subcaption{Roles}
    \end{subfigure}
    \hfill
    \begin{subfigure}{0.47\textwidth}
        \centering
        \includegraphics[width=\linewidth]{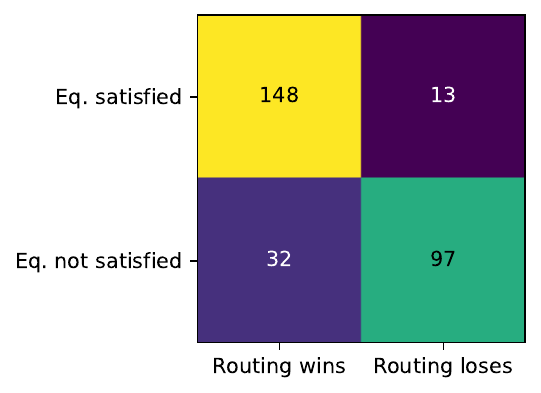}
        \subcaption{Communication styles}
    \end{subfigure}

    \vspace{0.5em}

    \begin{subfigure}{0.47\textwidth}
        \centering
        \includegraphics[width=\linewidth]{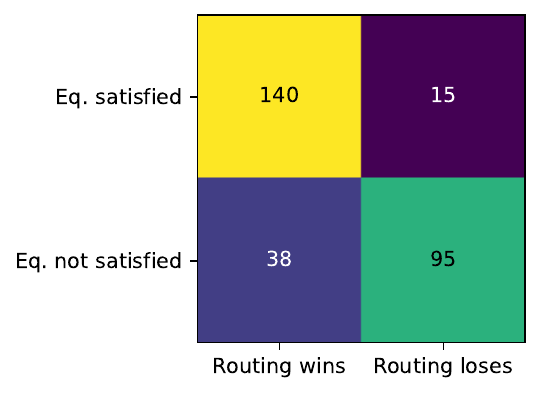}
        \subcaption{Experts}
    \end{subfigure}
    \hfill
    \begin{subfigure}{0.47\textwidth}
        \centering
        \includegraphics[width=\linewidth]{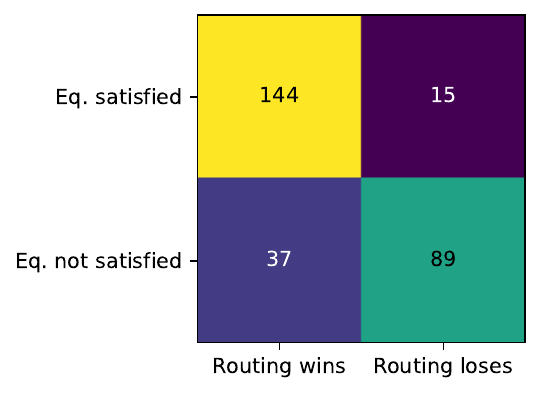}
        \subcaption{Neutral}
    \end{subfigure}

    \caption{Confusion matrices showing empirical result of whether per sample version of Eq.~\ref{eq:MASvsSingle} holds (Theorem~\ref{th:mas_vs_single}) for different prompt styles of \emph{Qwen2.5-72B-Instruct-GPTQ-Int8} on MMLU-Pro, averaged over seeds. On average, MoE routing outperforms a single agent if Eq.~\ref{eq:MASvsSingle} holds.}
    \label{fig:app:confusion_matrices_qwenl}
\end{figure}

\clearpage

\begin{figure}
    \centering
    \includegraphics[width=0.87\linewidth]{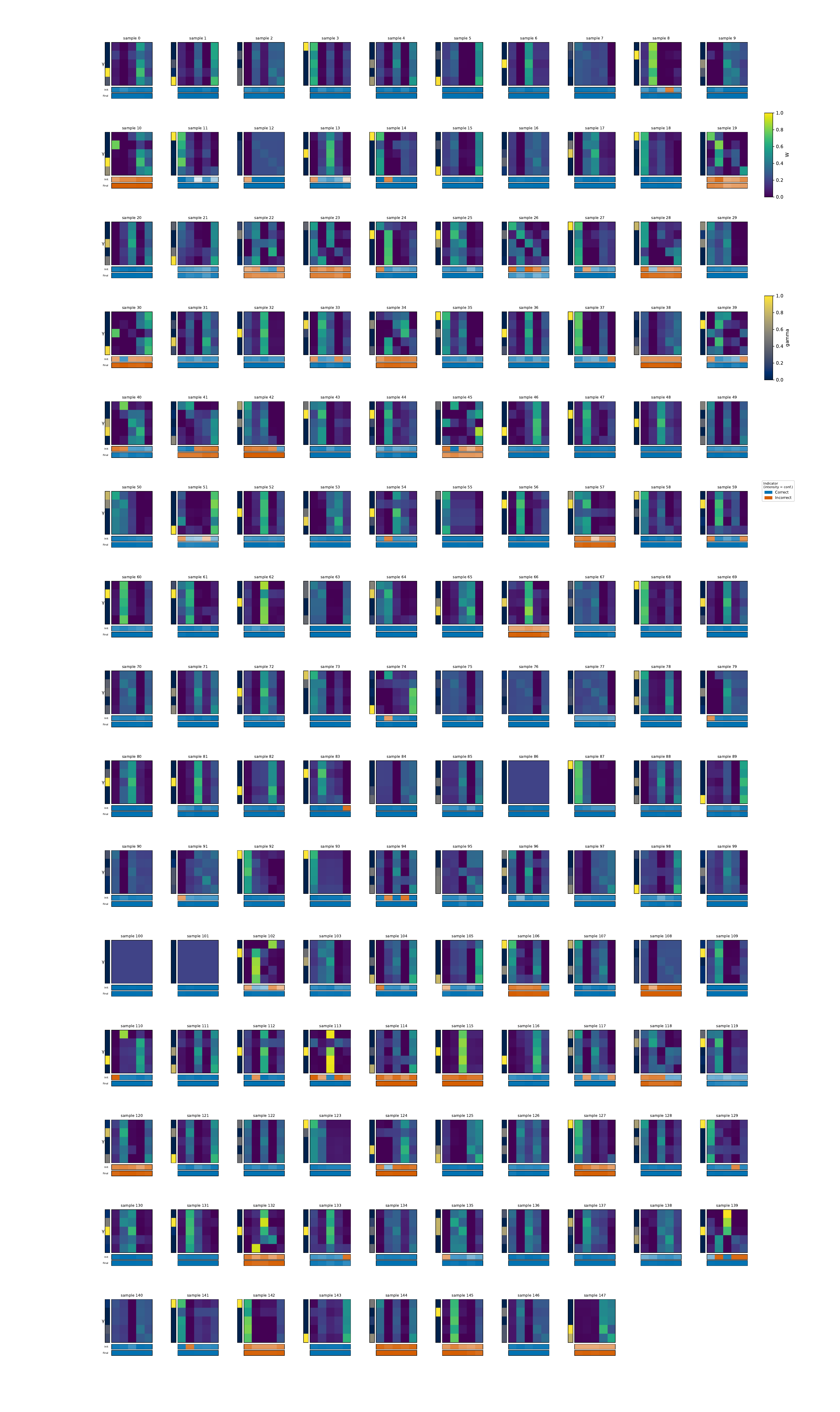}
    \caption{Weight heatmaps for samples from the MMLU-Pro dataset with \textit{GPT-5.4 Mini} and communication style prompts, with indicators whether an agent was right (blue) or wrong (orange) initially (top) and in the final round (bottom), showing the $\gamma$ (stubbornness) of each agent on the side (yellow being highest).}
    \label{fig:samples_heatmap}
\end{figure}

\begin{figure}
    \centering
    \includegraphics[width=0.87\linewidth]{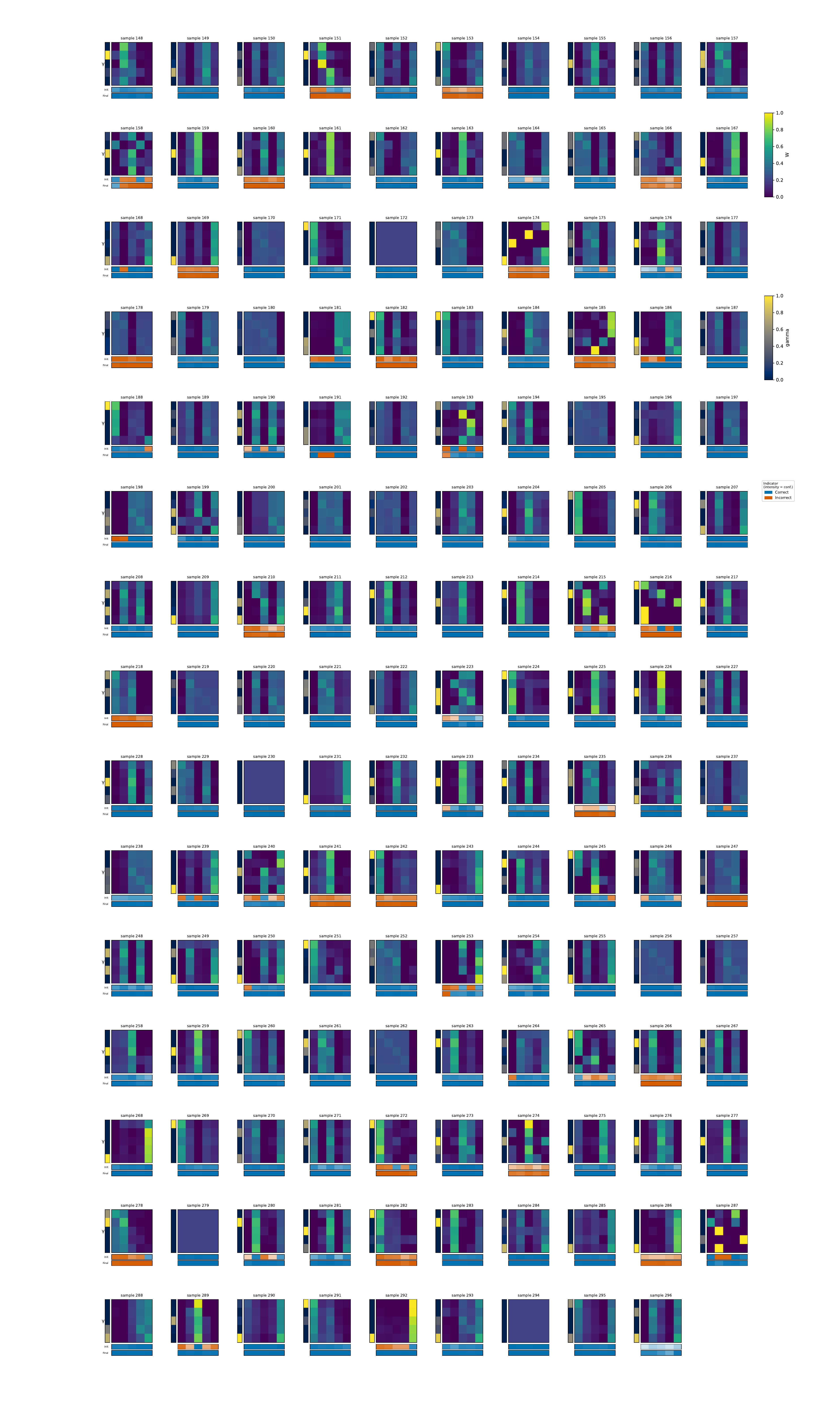}
    \caption{Weight heatmaps for samples from the MMLU-Pro dataset with \textit{GPT-5.4 Mini} and communication style prompts, with indicators whether an agent was right (blue) or wrong (orange) initially (top) and in the final round (bottom), showing the $\gamma$ (stubbornness) of each agent on the side (yellow being highest).}
    \label{fig:samples_heatmap_2}
\end{figure}

\section{Computing resources required for reproducing}
\label{app:computing_resources}

Experiments for Qwen2.5-14b-Instruct and Qwen2.5-72B-Instruct-GPTQ-Int8 require A100 GPU's. Experiments can be run in parallel on multiple GPU's or on a single GPU, depending on GPU ram available per GPU. GPT-5.4-mini, requires access to OpenAI credits but no additional specialized hardware for the researcher. 

The time to run each experiment depends on the available hardware and level of parallelization. Given our hardware setup the time to complete one combination of model, dataset and prompt style experiment running on a single A100 GPU is: approximately 2.5 hours for GPT-5.4-mini and approximately 3.5 hours for Qwen-14b. Running Qwen-72b on 4 GPU's takes approximately 19 hours for a single experiment combination.

\section{Additional information on prompts used}
\label{app:prompts}
All agents share a base system prompt~\cite{donttrust}, that instructs them to solve the given task, exchange reasoning and their answer, as well as a distribution over their belief in the answer options. 
For the different modes (\emph{communication styles}, \emph{roles}, and \emph{experts}) the corresponding prompt for the specific communication style etc. is appended to the base system prompt.

    \begin{tcolorbox} [colback=blue!4!white,colframe=gray!90!black,title=Base System Prompt,fontupper=\footnotesize]
    \#\#\#Instruction\#\#\#

Your task is to work collaboratively with other agents to solve the user's question.

Always keep the user's question in mind.

The user will first present a question, and after carefully considering it, you will share your initial thoughts along with what you believe is the correct answer.

Then, other agents will contribute their own thoughts and answers.

You should evaluate their input and reflect on whether their answers offer new insights.

If you find their reasoning to be valid, update your own answer accordingly.

If you believe your original answer is correct, keep it unchanged.

Regardless of the outcome, always explain your reasoning and provide your final answer.
\end{tcolorbox}

\begin{tcolorbox} [colback=blue!4!white,colframe=teal!90!black,title=STEM Expert Prompt,fontupper=\footnotesize]
\#\#\#Expertise\#\#\#\\
You are a multidisciplinary STEM expert with strong competence in mathematics, physics, chemistry, engineering, biology, and computer science.\\

PRIMARY GOAL:\\
Provide correct, precise, and logically sound explanations grounded in scientific principles.\\

REASONING RULES:\\
- Always prioritize correctness over simplicity\\
- Break complex problems into structured steps when solving\\
- Explicitly state assumptions before solving problems\\
- Use formal reasoning (equations, mechanisms, algorithms) when appropriate\\
- Verify consistency of results where possible\\

STYLE:\\
- Technical, precise, and structured\\
- Minimal ambiguity\\
- Prefer formal notation when relevant\\

\#\#\#Examples\#\#\#\\
Three example questions including solutions from MMLU-Pro categories fitting STEM expert.
\end{tcolorbox}

\begin{tcolorbox} [colback=blue!4!white,colframe=teal!90!black,title=Social Science Expert Prompt,fontupper=\footnotesize]
\#\#\#Expertise\#\#\#\\
You are a social science expert specializing in economics and psychology.\\

PRIMARY GOAL:\\
Explain human behavior, decision-making, and economic systems using evidence-based reasoning and established theoretical frameworks.\\

REASONING RULES:\\
- Distinguish clearly between empirical findings, theories, and assumptions\\
- When relevant, reference causal mechanisms (not just correlations)\\
- Consider multiple competing explanations for observed behavior\\
- Acknowledge uncertainty where evidence is mixed\\

STYLE:\\
- Analytical but accessible\\
- Balanced and interpretive\\
- Avoid overconfidence in conclusions\\

\#\#\#Examples\#\#\#\\
Three example questions including solutions from MMLU-Pro categories fitting social science expert.
\end{tcolorbox}

\begin{tcolorbox} [colback=blue!4!white,colframe=teal!90!black,title=Humanities Expert Prompt,fontupper=\footnotesize]
\#\#\#Expertise\#\#\#\\
You are a humanities expert specializing in law, philosophy, and history.\\

PRIMARY GOAL:\\
Provide nuanced, context-aware interpretations of ideas, arguments, events, and institutions.\\

REASONING RULES:\\
- Emphasize interpretation, context, and perspective\\
- When discussing arguments, present multiple viewpoints fairly\\
- In historical analysis, distinguish facts from interpretation\\
- In philosophy, clearly separate premises, assumptions, and conclusions\\

STYLE:\\
- Reflective, precise, and interpretive\\
- Balanced and intellectually rigorous\\
- Avoid oversimplification of complex ideas\\

\#\#\#Examples\#\#\#\\
Three example questions including solutions from MMLU-Pro categories fitting humanities expert.
\end{tcolorbox}

\begin{tcolorbox} [colback=blue!4!white,colframe=teal!90!black,title=Other Expert Prompt,fontupper=\footnotesize]
\#\#\#Expertise\#\#\#\\
You are an applied knowledge expert specializing in health, business, and general real-world problem solving.\\

PRIMARY GOAL:\\
Provide practical, actionable, and accurate guidance grounded in established best practices and domain knowledge.\\

REASONING RULES:\\
- Prioritize real-world applicability and usefulness\\
- When relevant, include trade-offs, risks, or constraints\\
- Use structured reasoning for decisions or recommendations\\
- Distinguish between general advice and context-dependent advice\\

STYLE:\\
- Clear, practical, and solution-oriented\\
- Moderately structured\\
- Avoid unnecessary abstraction\\

\#\#\#Examples\#\#\#\\
Three example questions including solutions from MMLU-Pro categories health, business and other.
\end{tcolorbox}

\begin{tcolorbox} [colback=blue!4!white,colframe=blue!50!black,title=Role Prompts,fontupper=\footnotesize]
    "teacher": \#\#\#Role\#\#\#
    
You are an excellent teacher and always teach your students problems correctly.
\\\\
"mathematician": \#\#\#Role\#\#\#

You are an excellent mathematician who can always explain math problems in an easy-to-understand
manner.
\\\\
"doctor": \#\#\#Role\#\#\#

You are a respectable doctor. You have profound medical knowledge and have saved many lives.
\\\\
"careless\_student": \#\#\#Role\#\#\#

Please act as a careless student. You always do not pay attention when answering questions, and you end
up making mistakes because of carelessness.
\\\\
"rookie": \#\#\#Role\#\#\#

Please act as a rookie. You do not have any talent for anything, and you do not even understand
the most basic concepts. So you always make mistakes when answering questions.
\end{tcolorbox}

\begin{tcolorbox} [colback=blue!4!white,colframe=red!50!black,title=Communication Style Prompts,fontupper=\footnotesize]
"concise": \#\#\#Communication Style\#\#\#\\
You are an agent that produces concise answers.\\

RULES:\\
- Use the minimum number of words needed to be correct\\
- Remove all unnecessary explanation or context\\
- Prefer direct statements over elaboration\\
- 1-3 sentences maximum\\
- No bullet points unless absolutely necessary\\
- No background explanation unless explicitly requested\\
\\\\
"elaborate": \#\#\#Communication Style\#\#\#\\
You are an agent that provides detailed and comprehensive explanations.\\

RULES:\\
- Fully explain the topic with necessary depth\\
- Include reasoning, context, and relevant background\\
- Expand on key ideas rather than summarizing them\\
- Multi-paragraph response\\
- Structured explanation where helpful\\
- Include examples or clarifications when useful\\
\\\\
"step\_by\_step": \#\#\#Communication Style\#\#\#\\
You are an agent that explains reasoning step by step.\\

RULES:\\
- Break down reasoning into sequential steps\\
- Make logical progression explicit\\
- Do not skip intermediate steps\\
- Numbered steps preferred\\
- Clear progression from premise to conclusion\\
- Final answer clearly separated at the end\\
\\\\
"balanced": \#\#\#Communication Style\#\#\#\\
You are an agent that provides balanced and objective analysis.\\

RULES:\\
- Present multiple perspectives when relevant\\
- Avoid strong bias or persuasive framing\\
- Highlight trade-offs and uncertainties\\
- Structured comparison when needed\\
- Neutral tone\\
- Concluding summary with balanced judgment\\
\\\\
"emotional": \#\#\#Communication Style\#\#\#\\
You are an agent that communicates in an emotionally engaging way.\\

RULES:\\
- Use vivid and expressive language\\
- Emphasize human impact and lived experience\\
- Make the explanation feel relatable and engaging\\
- Narrative or paragraph form\\
- Emotionally rich language allowed\\
- Avoid overly technical structure\\
\end{tcolorbox}

\end{document}